\documentclass{article}
\usepackage{amsmath, amsthm, amssymb,color,fullpage,url,booktabs}  
\urlstyle{sf}
\usepackage[pdftex]{hyperref} 
\usepackage{graphicx}

\newcommand{\nc}{\newcommand}
\nc{\rnc}{\renewcommand}

\newcommand{\bra}[1]{\left\langle #1\right|}
\newcommand{\ket}[1]{\left|#1\right\rangle}

\DeclareMathOperator{\conv}{conv}

\DeclareMathOperator{\inj}{inj}

\DeclareMathOperator{\poly}{poly}
\DeclareMathOperator{\polylog}{polylog}
\DeclareMathOperator{\tr}{tr}

\DeclareMathOperator{\Sep}{Sep}

\DeclareMathOperator{\NP}{\mathsf{NP}}

\def\SAT{\text{3-SAT}}

\def\be#1\ee{\begin{equation}#1\end{equation}}
\def\bea#1\eea{\begin{eqnarray}#1\end{eqnarray}}
\def\beas#1\eeas{\begin{eqnarray*}#1\end{eqnarray*}}
\def\ba#1\ea{\begin{align}#1\end{align}}
\def\bas#1\eas{\begin{align*}#1\end{align*}}
\def\bpm#1\epm{\begin{pmatrix}#1\end{pmatrix}}

\def\eq#1{(\ref{eq:#1})}

\def\L{\left} 
\def\R{\right}
\def\ra{\rightarrow}
\def\ot{\otimes}

\newtheorem{thm}{Theorem}
\newtheorem*{thm*}{Theorem}

\newtheorem{cor}[thm]{Corollary}
\newtheorem{lem}[thm]{Lemma}

\newtheorem{dfn}[thm]{Definition}
\newtheorem{proto}{Protocol}

\makeatletter
\newtheorem*{rep@theorem}{\rep@title}
\newcommand{\newreptheorem}[2]{%
\newenvironment{rep#1}[1]{%
 \def\rep@title{#2 \ref{##1} (restatement)}%
 \begin{rep@theorem}}%
 {\end{rep@theorem}}}
\makeatother

\newreptheorem{thm}{Theorem}
\newreptheorem{lem}{Lemma}

\def\eps{\varepsilon}

\def\cA{\mathcal{A}}
\def\cB{\mathcal{B}}
\def\cC{\mathcal{C}}
\def\cD{\mathcal{D}}

\def\cH{\mathcal{H}}

\def\cL{{\cal L}}

\def\cS{\mathcal{S}}

\def\bbC{\mathbb{C}}
\DeclareMathOperator*{\E}{\mathbb{E}}

\def\bbR{\mathbb{R}}

\def\benum{\begin{enumerate}}
\def\eenum{\end{enumerate}}
\def\bit{\begin{itemize}}
\def\eit{\end{itemize}}

\newcommand{\secref}[1]{Section~\ref{sec:#1}}

\newcommand{\lemref}[1]{Lemma~\ref{lem:#1}}
\newcommand{\thmref}[1]{Theorem~\ref{thm:#1}}

\newcommand{\algref}[1]{Algorithm~\ref{alg:#1}}

\nc{\todo}[1]{\textcolor{red}{todo: #1}}


\def\begsub#1#2\endsub{\begin{subequations}\label{eq:#1}\begin{align}#2\end{align}\end{subequations}}
\nc\qand{\qquad\text{and}\qquad}
\nc\mnb[1]{\medskip\noindent{\bf #1}}

\usepackage{boxedminipage}
\newtheorem{algorithm}[thm]{Algorithm}
\newenvironment{mybox}
{\center \noindent\begin{boxedminipage}{1.0\linewidth}}
{\end{boxedminipage}\noindent}

\begin{document}

\title{Estimating operator norms using covering nets}

\author{Fernando G.S.L. Brand\~ao \footnote{(a) Quantum Architectures and Computation Group, Microsoft Research, Redmond, WA and (b) Department of Computer Science, University College London WC1E 6BT. email: {\tt fbrandao@microsoft.com} }
\and Aram W. Harrow \footnote{
Center for Theoretical Physics, Massachusetts Institute of Technology.  email: {\tt aram@mit.edu}}}

\date{\vspace{-1cm}}
\maketitle 

\begin{abstract}
  We present several polynomial- and quasipolynomial-time
  approximation schemes for a large class of generalized operator
  norms. Special cases include the $2\ra q$ norm of matrices for
  $q>2$, the support function of the set of separable quantum states, 
  finding the least noisy output of entanglement-breaking quantum
  channels, and approximating the injective tensor norm for a map 
  between two Banach spaces whose
  factorization norm through $\ell_1^n$ is bounded.

  These reproduce and in some cases improve upon the performance of
  previous algorithms by Brand\~ao-Christandl-Yard~\cite{BCY10} and
  followup work, which were based on the Sum-of-Squares hierarchy and whose analysis used techniques from quantum information such as the monogamy principle of entanglement.
  Our algorithms, by contrast, are based on brute force enumeration
  over carefully chosen covering nets.  These have the advantage of
  using less memory, having much simpler proofs and giving new
  geometric insights into the problem.   Net-based algorithms for similar problems were also presented by Shi-Wu~\cite{ShiW12} and Barak-Kelner-Steurer~\cite{BKS13}, but in each case with a run-time that is exponential in the rank of some matrix.  We achieve polynomial or quasipolynomial runtimes by using the much smaller nets that exist in $\ell_1$ spaces.  This principle has been used in learning theory, where it is known as Maurey's empirical method.

\end{abstract}

\section{Introduction}

Given a $n \times m$ matrix $M$, its operator norm is given by $\Vert M \Vert = \max _{x \in \mathbb{C}^m} \Vert M x \Vert_2 / \Vert x \Vert_2$, with $\Vert x \Vert_2 = (\sum_i |x_i|^2)^{\frac{1}{2}}$ the Euclidean norm. The operator norm is also given by the square root of the largest eigenvalue of $M^{\cal y}M$ and thus can be efficiently computed. There are numerous ways of generalizing the operator norm, e.g. by considering tensors instead of matrices, by changing the Euclidean norm to another norm, or by considering other vector spaces instead of $\mathbb{C}^m$.  Although such generalizations are very useful in applications, they can be substantially harder to compute than the basic operator norm, and in many cases we still do not have a good grasp of the computational complexity of computing, or even only approximating, them.  In some cases quasipolynomial algorithms are known, usually based on semidefinite programming (SDP) hierarchies, and in other cases quasipolynomial hardness results are known.  These are partially overlapping so that some problems have sharp bounds on their complexity and for others there are exponential gaps between the best upper and lower bounds.   As we will discuss below, the complexity of these problems is not only a basic question in the theory of algorithms, but also is closely related to the unique games conjecture and the power of multiprover quantum proof systems.

In this paper we give new algorithms for several variants of the basic operator norm of interest in quantum information theory, theoretical computer science, and the theory of Banach spaces.  Unlike most past work which was based on SDP hierarchies, our algorithms simply enumerate over a carefully chosen net of points.  This yields run-times that often match the SDP hierarchies and sometimes improve upon them.  Besides improved performance, our algorithms have the advantage of being based on simple geometric properties of spaces we are optimizing over, which may help explain which types of norms are amenable to quasipolynomial optimization.
 In particular we consider the following four optimization problems in this work:

\vspace{0.2 cm}

\textbf{Optimization over Separable States:} An important problem in quantum information theory is to optimize a linear function over the set of separable (i.e. non-entangled) states, defined as bipartite density matrices that can be written as a convex combination of tensor product states. This problem is closely related to the task of determining if a given quantum state is entangled or not (called the quantum separability problem) and to the computation of several other quantities of interest in quantum information, including the optimal acceptance probability of quantum Merlin-Arthur games with unentangled proofs, optimal entanglement witnesses, mean-field ground-state energies, and measures of entanglement; see \cite{HM13} for a review of many of these connections.

Given an operator $M$ acting on the bipartite vector space $\mathbb{C}^{d_1} \otimes \mathbb{C}^{d_2}$ the support function of $M$ on the set of separable states is given by
\begin{equation} \label{basicproblem}
h_{\text{Sep}(d_1, d_2)}(M) := \max_{\alpha \in \cD_{d_1}, \beta \in \cD_{d_2}} \tr[M (\alpha \otimes \beta)],
\end{equation}
with $\cD_{d}$ the set of density matrices on $\mathbb{C}^{d}$ ($d \times d$ positive semidefinite matrices of unit trace). Our goal is to approximate $h_{\Sep(d_1, d_2)}(M)$.  For $M\in L(\bbC^{d^n})$, define
\be  h_{\Sep^n(d)}(M) = \max_{\alpha_1,\ldots,\alpha_n \in \cD_d} \tr[M(\alpha_1 \ot \cdots \ot \alpha_n)].\ee

The first result on the complexity of computing $h_{\Sep(d_1, d_2)}$
was negative: Gurvits showed that the problem is NP-hard for
sufficiently small additive error (inverse polynomial in $d_1d_2$)
\cite{Gurvits03}. Then \cite{HM13} showed there is no $\exp(O(\log^{2
  - \Omega(1)}(d_1d_2)))$ time algorithm even for a constant error
additive approximation of the quantity, assuming the exponential time
hypothesis (ETH\footnote{The ETH is the conjecture that 3-SAT instances of length $n$ require time $2^{\Omega(n)}$ to solve.  This is a plausible conjecture for deterministic, randomized or quantum computation, and each version yields a corresponding lower bound on the complexity of estimating $h_{\Sep}$.}). This left open the question whether there are
quasipolynomial-time algorithms (i.e. of time $\exp(\polylog(d_1, d_2)))$.

In \cite{BCY10} it was shown that this is indeed the case at least for
a class of linear functions: namely those corresponding to quantum
measurements that can be implemented by local operations and
one-directional classical communication (one-way LOCC or 1-LOCC). For this particular class of measurements the problem can be solved with error $\delta$ in time $\exp\left( O \left( \delta^{-2} \log(d_1)\log(d_2) \right)  \right)$. The proof was based on showing that the hierarchy of semidefinite programs for the problem introduced in 2004 by Doherty, Parrilo and Spedalieri \cite{DohertyPS04} (which is an application of the more general Sum-of-Squares (SoS) hierarchy, also known as the Lasserre hierarchy, to the separability problem) converges quickly. The approach of \cite{BCY10} was to use ideas from quantum information theory (monogamy of entanglement, entanglement measures, hypothesis testing, etc) to find good bounds on the quality of the SoS hierarchy. Since then several follow-up work gave different proofs of the result, but always using quantum information-theoretic ideas \cite{BH-local, LW14, BC11, Yang06}. 

A corollary of \cite{BCY10} and the other results on 1-LOCC $M$ is
that $h_{\Sep(d_1, d_2)}(M)$ can also be approximated for a different
class of operators $M$: those with small Hilbert-Schmidt norm $\Vert M
\Vert_{\text{HS}} := \tr(M^{\cal y}M)^{\frac{1}{2}}$. Ref.~\cite{BCY10b} showed that also in this case there is a quasipolynomial-time algorithm 
for estimating Eq. (\ref{basicproblem}). An interesting subsequent
development was the work of Shi and Wu \cite{ShiW12} (see also
\cite{BKS13}), who gave a different algorithm for the problem based on
enumerating over nets. It was left as an open question whether a
similar approach could be given for the case of one-way LOCC
measurements (which is more relevant both physically \footnote{The
  one-way LOCC norm gives the optimal distinguishably of two
  multipartite  quantum states when only local measurements can be
  done, and the parties can coordinate by one-directional
  communication. See \cite{MWW09} for a discussion of its power.} and in terms of applications; see again
\cite{HM13}).   

\vspace{0.2 cm}

\textbf{Estimating the Output Purity of Quantum Channels:} Another
important optimization problem in quantum information theory consists of determining how much noise a quantum channel introduces. A quantum channel models a general physical evolution and is given mathematically by a completely positive trace preserving map $\Lambda : \cD_{d_1} \rightarrow \cD_{d_2}$. One way to measure the level of noise of the channel is to compute the maximum over states of the output Schatten-$\alpha$ norm, for a given $\alpha > 1$:
\begin{equation}
\Vert  \Lambda \Vert_{1\ra \alpha} := 
\max_{\rho \in \cD_{d_1} }\Vert \Lambda(\rho) \Vert_{\alpha},
\end{equation}
with $\Vert Z \Vert_{\alpha} =
\tr(|Z|^{\alpha})^{\frac{1}{\alpha}}$. The quantity $\Vert  \Lambda
\Vert_{1\ra \alpha}$ varies from one, for an ideal channel, to
$d_2^{-1+\alpha^{-1}}$ for the depolarizing channel mapping all states
to the maximally mixed state. This optimization problem has been
extensively studied, in particular because for $\alpha\approx 1$ it is related to the 
Holevo capacity of the channel, whose regularization
gives the classical capacity of the channel (i.e. how many reliable
bits can be transmitted per use of the channel). 

It was shown in \cite{HM13} that, assuming ETH, there is no
algorithm that runs in time 
$\exp( O( \log^{2 - \Omega(1)}{(d)}) )$ 
 and can decide if $\Vert \Lambda
\Vert_{1\ra\alpha}$ is one or smaller than $\delta$ (for any fixed
$\delta > 0$ and $\alpha > 1$) for a general quantum channel $\Lambda
: {\cal D}_{d} \rightarrow {\cal D}_{d}$. On the algorithmic side,
nothing better than exhaustive search over the input space (taking
time $\exp(\Omega(d_1))$) is known. 

An interesting subclass of quantum channels, lying somewhere between classical channels and fully quantum channels, are the so-called entanglement-breaking channels, which are the channels that cannot be used to distribute entanglement. Any entanglement-breaking quantum channel $\Lambda$
can be written as \cite{HSR04}:
\begin{equation}
\Lambda(\rho) := \sum_i \tr(X_i \rho) Y_i, 
\end{equation}
with $Y_i \geq 0$, $\tr(Y_i) = 1$ quantum states and $X_i \geq 0$, and
$\sum_i X_i = I$ a quantum measurement. Because of their simpler form,
one can expect that there are more efficient algorithms for computing
the maximum output norm of entanglement-breaking channels. However
until now no algorithm better than exhaustive search was known either
(apart from the case $\alpha=\infty$ where the Sum-of-Squares hierarchy can be used and analyzed using \cite{BCY10}).

\vspace{0.2 cm}

\textbf{Computing $p \rightarrow q$ Norms:} Given a $d_1 \times d_2$ matrix $A$ we define its $p \rightarrow q$ norm by
\begin{equation}
\Vert A \Vert_{p \rightarrow q} := \max_{x \in \mathbb{C}^{d_2}} \frac{\Vert A x \Vert_{q}}{\Vert x \Vert_p},
\qquad
\|x\|_p := \L(\sum_{i=1}^{d_2} |x|^p\R)^{1/p} 
\end{equation}

Such norms have many different applications, such as in
hypercontractive inequalities and determining if a graph is a
small-set expander \cite{BHKSZ12}, to oblivious
routing~\cite{BhaskaraV11} and robust optimization~\cite{Steinberg05}. However we do not have a complete understanding of the complexity of computing them. For $2 < q \leq p$ or $q \leq p < 2$, it is $\NP$-hard to approximate them to any constant factor \cite{BhaskaraV11}. In the regime $q > p$ (the one relevant for hypercontractivity and small-set expansion) the only known hardness result is that to obtain any (multiplicative) constant-factor approximation for the $2 \rightarrow 4$ norm of a $n \times n$ matrix is as hard as solving $\SAT$ with $\tilde O(\log^2(n))$ variables \cite{BHKSZ12}. 

On the algorithmic side, besides the $2 \rightarrow 2$ and $2 \rightarrow \infty$ norms being exactly computable in polynomial time, Ref. \cite{BHKSZ12} showed that one can use the Sum-of-Squares hierarchy to compute in time $\exp(O(\log^{2}(n) \varepsilon^{-2}))$ a number $X$ s.t. 
\begin{equation} \label{baraketalbound}
\Vert A \Vert^4_{2 \rightarrow 4} \leq X \leq \Vert A \Vert^4_{2 \rightarrow 4}  + \varepsilon \Vert A \Vert_{2 \rightarrow 2}^2 \Vert A \Vert_{2 \rightarrow \infty}^2.
\end{equation}

Whether similar approximations can be obtained for $2 \rightarrow q$ norms for other values of $q$ was left as an open problem.

\vspace{0.2 cm}

\textbf{Computing the Operator Norm between Banach Spaces:}  These
problems are all special cases of the following general question.
Given a map $T:\cA\ra \cB$ between Banach spaces $\cA, \cB$, can we
approximately compute the following operator norm?
\be \|T\|_{\cA \ra \cB} := \sup_{x\neq 0} \frac{\|Tx\|_\cB}{\|x\|_{\cA}}\ee

\subsection{Summary of Results}

In this paper we give new algorithmic results for the four problems
discussed above. 
They can be summarized as follows.

\vspace{0.2 cm}

\textbf{Separable-state optimization by covering nets:} We give a
different algorithm for optimizing linear functions over separable
states (corresponding to one-way LOCC measurements) based on
enumerating over covering nets (see Algorithm \ref{alg:basic}). The
complexity of the algorithm matches the time complexity of
\cite{BCY10} (see Theorem \ref{basicthm}). The proof does not use
information theory in any way, nor the SoS hierarchy. Instead the main
technical tool is a matrix version of the Hoeffding bound (see Lemma
\ref{lem:Hoeff}). It gives new geometric insight into the problem and
gives arguably the simplest and most self-contained proof of the result
to date. It also gives an explicit rounding (as does \cite{BH-local}
but in contrast to \cite{BCY10, LW14, BC11, Yang06}).  

For particular subclasses of one-way LOCC measurements our algorithm
improves the run time of \cite{BCY10}. One example is the case where
Bob's measurement outcomes are low rank, in which we find a $\poly(d_2) d_1^{O(\varepsilon^{-2})}$-time algorithm.




\vspace{0.2 cm}

\textbf{Generalization to arbitrary operator norms:} Computing
$h_{\Sep}$ is mathematically equivalent to computing the $1\ra \infty$
norm of a quantum channel, or more precisely the $S_1\ra S_\infty$
norm where $S_\alpha$ denotes the Schatten-$\alpha$ norm.  This
perspective will help us generalize the scope of our algorithm, to
estimating the $S_1\ra \cB$ norm for a general Banach space $\cB$.
The analysis of this algorithm is based on tools from asymptotic
geometric analysis, and we will see that its efficiency depends on
properties of $\cB$ known as the Rademacher type and the modulus of
uniform smoothness.  Besides generalizing the scope of the algorithm,
this also gives more of a geometric explanation of its performance. We
focus on two special cases of the problem:  

\benum

\item {\bf maximum output norm:} A particular case of the generalization is the problem of computing the maximum output purity of a quantum entanglement-breaking channel (measured in the Schatten-$\alpha$ norms). We prove that for any $\alpha > 1$ one can compute $\Vert  \Lambda \Vert_{1\ra \alpha}$  in time $\poly(d_2) d_1^{O(\varepsilon^{-2})}$ to within additive error $\varepsilon$. (see Corollary \ref{cor:maxoutputentropy}). In contrast known hardness results \cite{HM10, HM13} show that no such algorithm exists for general quantum channels (under the exponential time hypothesis).   Previously the entanglement-breaking case was not known to be easier.

\item \textbf{matrix $2 \rightarrow q$ norms:}  As a second particular case of the general framework we extend the approximation of \cite{BHKSZ12} to the $2 \rightarrow 4$ norm, given in Eq. (\ref{baraketalbound}), to the $2 \rightarrow q$ norms for all $q \geq 2$ (see Corollary \ref{cor:2to4}).  

\eenum

\vspace{0.2 cm}

\textbf{Operator norms between Banach spaces:} This framework can be further generalized to estimating the operator norm of any linear map from $\cA \ra \cB$ for Banach spaces $\cA, \cB$.  Here we have replaced $S_1$ with any finite-dimensional Banach space $\cA$ whose norm can be computed efficiently.  When applied to an operator $\Lambda$, the approximation error scales with the $\cA \ra \ell_1^n \ra \cB$ {\em factorization norm}, which is the minimum of $\|\Lambda_1\|_{\ell_1^n\ra B}\|\Lambda_2\|_{\cA\ra \ell_1^n} $ such that 
$\Lambda = \Lambda_1\Lambda_2$.    Factorization norms have applications to communication complexity~\cite{LinialS07,LinialS09}, Banach space theory~\cite{Pietsch07}, and machine learning~\cite{LeeRRST10}, and here we argue that they help explain what makes the class of 1-LOCC measurements uniquely tractable for algorithms.  In \secref{inj-norm} we describe an algorithm for this general norm estimation problem, which to our knowledge previously had no efficient algorithms.  This problem equivalently can be viewed as computing the injective tensor norm of two Banach spaces.

We remark that this generalization is not completely for free, so we cannot simply derive all our other algorithms from this final one.  In the case where $\cA=S_1^d$ (which corresponds to all of our specific applications), we are able to easily sparsify the input; i.e. given $\sum_{i=1}^n A_i \ot B_i$, we can reduce $n$ to be $\poly(d)$ without loss of generality.  For general input spaces $\cA$ we do not know if this is possible.  Also, the case of $h_{\Sep}$ is much simpler, and so it may be helpful to read it first.

\subsection{Comparison with prior work}
\label{sec:net-compare}
As discussed in the introduction, previous algorithms for separable-state optimization (and as a corollary, the $2\ra 4$ norm) have been obtained using SDP hierarchies.  Our algorithms generally match or improve upon their parameters, but with the added requirement for the separable-state problem that the input be presented in a more structured form.

Several parallels between LP/SDP hierarchies and net-based algorithms have been developed for other problems.  The first example of this was Ref.~\cite{KlerkLP06} which gave both types of algorithms for the problem of maximizing a polynomial over the simplex, improving on a result implicit in the 1980 proof of the finite de Finetti theorem by Diaconis and Freedman~\cite{DF80}.  Besides the separable-state approximation problem that we study, hierarchies and nets have been found to have similar performance in finding approximate Nash equilibria~\cite{LMM03, Har-nash} and in estimating the value of free two-prover games~\cite{AIM14, BH-local}. The state-of-the-art run-time for solving Unique Games and Small Set Expansion have also been achieved using both hierarchies and covering-nets. These parallels are summarized in the table:
\begin{table}[h!]
\begin{center}
\begin{tabular}{ccc}
\toprule
Problem & nets & hierarchies/information theory\\
\midrule
$\max_{x\in \Delta_n} p(x)$ & \cite{KlerkLP06} & \cite{DF80,KlerkLP06} \\
approximate Nash & \cite{LMM03,AlonLSV13} & \cite{Har-nash}\\
free games & \cite{AIM14} & \cite[Cor 4]{BH-local} \\
unique games  & \cite{AroraBS10} & \cite{BarakRS11} \\
small-set expansion & \cite{AroraBS10} & \cite[\S 10]{BHKSZ12} \\
separable states & \cite{ShiW12,BKS13}, this work & \cite{BCY10b,
  BH-local, BKS13, LW14, LiS14}\\
\bottomrule
\end{tabular}
\caption{We briefly describe these problems here.  Full descriptions can be found in the references in the table.  In $\max_{x\in \Delta_n} p(x)$, $\Delta_n$ is the $n$-dimensional probability simplex and $p(x)$ is a low-degree polynomial.  ``Approximate Nash'' refers to the problem of finding a pair of strategies in a two-player non-cooperative game for which no player can improve their welfare by more than $\eps$.  ``Free games'' refers to two-prover one-round proof systems where the questions asked are independent; the computational problem is to estimate the largest possible acceptance probability.  ``Unique games'' describes instead proof systems with ``unique'' constraints; i.e.~for each question pair and each answer given by one of the provers, there is exactly one correct answer possible for the other prover.  Small-set expansion asks, given a graph $G$ and parameters $\eps,\delta>0$, whether all subsets with a $\delta$ fraction of the vertices have a $\geq 1-\eps$ fraction of edges leaving the set or whether there exists one with a $\leq \eps$ fraction of edges leaving the set.  Finally ``separable states'' refers to estimating $h_{\Sep(n,n)}$ as we will discuss elsewhere in the paper.  It can also be though of as estimating $\max_{\|x\|_2=1\}}p(x)$ for some low-degree polynomial $p(x)$.}
\end{center}
\end{table}

While this paper focuses on the particular problems where we can improve upon the state-of-the-art algorithms, we hope to be a step towards more generally understanding the connections between these two methods.  In almost every case above, the best covering-net algorithms achieve nearly the same complexity as the best analyses of SDP hierarchies.  There are a few exceptions.  Ref.~\cite{BHKSZ12} shows $O(1)$ rounds of the SoS hierarchy can certify a small value for the Khot-Vishnoi integrality gap instances of the unique games problem, but we do not know how to achieve something similar using nets.  A more general example is in \cite{BKS13}, which shows that the SoS hierarchy can approximate $h_{\Sep}(M)$ in quasipolynomial time when $M$ is entrywise nonnegative.

The closest related paper to this work is \cite{ShiW12} by Shi and Wu (as well as
Appendix A of \cite{BKS13}), which also used enumeration over $\eps$-nets to 
approximate $h_{\Sep}$.  Here we explain their results in our language.

Shi and Wu~\cite{ShiW12} have two algorithms: one when $M$ has low Schmidt rank (i.e. factorizes as $S_1\ra \ell_2^r\ra S_\infty$ for small $r$) and one where $M$ has low rank, which we can interpret as a $\ell_2^r \ra S_\infty^{d_1\times d_2}$ factorization (here $S_\infty^{d_1\times d_2}$ refers to the space of $d_1\times d_2$-dimensional matrices with norm given by the largest singular value).  These correspond to their Theorems 5 and 8 respectively.  In both cases they construct $\eps$-nets for the $\ell_2^r$ unit ball of size $\eps^{-O(r)}$ (here, $\ell_2$ could be replaced with any norm; see Lemma 9.5 of \cite{LT91}).  In both cases, their results can be improved to yield multiplicative approximations, using ideas from \cite{BKS13}.

Appendix A of Barak, Kelner and Steurer~\cite{BKS13} considers fully symmetric 4-index tensors $M\in(\bbR^n)^{\ot 4}$, so that when viewed as $n^2\times n^2$ matrices their rank and Schmidt rank are the same; call them $r$.  Their algorithm is similar to that of \cite{ShiW12}, although they observe additionally (using different terminology) that for any self-adjoint operator $T: \cA^* \ra \cA$ (i.e. satisfying  $\langle T(X), Y\rangle = \langle X, T(Y)\rangle$) the $\cA^* \ra \ell_2 \ra \cA$ norm is equal to the $\cA^*\ra \cA$ norm.  This means that constructing an $\eps$-net for $B(\ell_2)$ actually yields a multiplicative approximation of the $\cA^*\ra \cA$ norm (here the $S_1\ra S_\infty$ norm).

Achieving a multiplicative approximation is stronger than what our algorithms achieve, but it is at the cost of a runtime that can be exponential in the input size even for a constant-factor approximation.  By contrast, our algorithms yield nontrivial approximations in polynomial or quasipolynomial time.

\subsection{Notation}
Define the sets of $d\times d$ real and complex semidefinite matrices
by $\cS^d_+,\cH^d_+$ respectively.  
For complex vector spaces $V,W$, define $\cL(V,W)$ to be the set of linear operators from $V$ to $W$, $\cL(V) := \cL(V,V)$ and $\cH(V),\cH_+(V)$ to be respectively the Hermitian and positive-semidefinite operators on $V$.

For $\alpha\geq 1$ define the
$\ell_\alpha,S_\alpha$ metrics on vectors and matrices respectively by
$\|x\|_{\ell_\alpha} = (\sum_i |x_i|^\alpha)^{1/\alpha}$ and
$\|X\|_{S_\alpha} = (\tr |X|^\alpha)^{1/\alpha}$.  Denote the
corresponding normed spaces by $\ell_\alpha^d, S_\alpha^d$.  Where it
is clear from context we will refer to both norms by
$\|\cdot\|_\alpha$.   We use $\|\cdot \|$ without subscript to denote
the operator norm for matrices (i.e. $\|X\| = \|X\|_{S_\infty}$) and the
Euclidean norm for vectors (i.e. $\|x\| = \|x\|_{\ell_2}$).

We use $\tilde O(f(x))$ to mean $O(f(x)\poly\log(f(x)))$ and say that $f(x)$ is ``quasipolynomial'' in $x$ if $f \leq O(\exp(\poly\log(x)))$.

For a normed space $V$, define $B(V) = \{v \in V : \|v\| \leq 1 \}$.
Two important special cases are the probability simplex $\Delta_n :=
B(\ell_1^n) \cap \bbR_{\geq 0}^n$ and the set of density matrices
(also called ``quantum states'')
$\cD_d := B(S_1^d) \cap \cH^d_+ = \conv\{vv^\dag: v\in B(\ell_2^d)\}$.
Here $v^\dag$ is the conjugate transpose of $v$.  
For $k$ a positive integer, define also
\be \Delta_n(k) := \L\{ \frac{e_{i_1} + \ldots + e_{i_k}}{k} : i_1,\ldots,i_k \
\in [n]\R\} \subset \Delta_n,\ee
where $e_i$ is the vector in $\bbR^n$ with a 1 in position $i$ and
zeros elsewhere.
For a convex set $K$
define the support function $h_K(x) := \sup_{y\in K} \langle
x,y\rangle$.  For matrices $\langle,\rangle$ refers to the
Hilbert-Schmidt inner product $\langle X,Y\rangle := \tr (X^\dag Y)$.

Banach spaces are normed vector spaces with an additional condition (completeness, i.e. convergence of Cauchy sequences) that is relevant only in the infinite dimensional case.  In this work we will consider only finite-dimensional Banach spaces.

\section{Warmup: algorithm for bipartite separability}
\label{sec:warmup}

In this section we describe a simple version of our algorithm.  It
contains all the main ideas which we will later generalize.  Let $M =
\sum_{i=1}^n X_i \ot Y_i$, where $X_i\in \cS^{d_1}_+$, $Y_i\in\cS^{d_2}_+$, $\sum_i X_i \leq
I$, and each $Y_i \leq I$. In quantum information language, $M$ is a 1-LOCC
measurement, meaning it can be implemented with local operations and
one-way classical communication \footnote{Conventionally these have $\sum_i
X_i = I$, but our formulation is essentially equivalent.}.  
 In later sections we will see that
$M$ can also be interpreted in a (mathematically) more natural way as
a bounded map from $S_1$ to $S_\infty$.
The goal of our algorithm is to approximate $h_{\Sep(d_1, d_2)}(M)$,
where we define the set of separable states as
\be \Sep(d_1,d_2) := \conv \{\alpha \ot \beta : \alpha\in
\cD_{d_1},\beta\in\cD_{d_2}\}.\ee

There have been several recent proofs~\cite{BCY10, BH-local, LW14},
each based on quantum information theory, that SDP hierarchies can
estimate $h_{\Sep(d_1,d_2)}(M)$ to error $\eps\|M\|$ in time
$\exp(O(\log^2(d)/\eps^2))$.  Similar techniques also appeared in
\cite{BKS13, LiS14} for different classes of operators $M$.  The role of
the 1-LOCC conditions in these proofs was typically not completely
obvious, and indeed it entered the proofs of
\cite{BCY10, BH-local,LW14} in three different ways. We now give another interpretation of it that is arguably more geometrically natural.

Begin by observing that
\begin{eqnarray}
h_{\Sep(d_1,d_2)}(M) &=& \max_{\substack{\alpha \in \cD_{d_1},\beta\in\cD_{d_2}}}
\tr [(\alpha\ot \beta)M],   \nonumber \\
 &=& \max_{\substack{\alpha \in \cD_{d_1},\beta\in\cD_{d_2}}}
\sum_{i=1}^n \tr [\alpha X_i] \tr [\beta Y_i]  \nonumber \\
&=& \max_{p \in S_X} \|p\|_Y.
\end{eqnarray}

In the last step we have defined
\ba \label{defS_xetc}
S_X &:= \{ p \in \Delta_n : \exists \alpha\in \cD_{d_1},\, p_i = \tr
[\alpha X_i]\; \forall i\in [n]\}, 
\qand
\|a\|_Y := \L\|\sum_{i=1}^n a_i Y_i\R\|.
\ea

The basic algorithm is the following:

\mbox

\begin{mybox}
\begin{algorithm}[Basic algorithm for computing $h_{Sep}\left(M\right)$ for one-way LOCC $M = \sum_i X_i \otimes Y_i$]
\mbox{}\label{alg:basic}
  \begin{description}
  \item[Input:] $\{X_i\}_{i=1}^n \subset \cH_+^{d_1},\{Y_i\}_{i=1}^n \subset \cH_+^{d_2}$. 
    %
  \item[Output:] States $\alpha \in \cD_{d_1}$ and $\beta \in \cD_{d_2}$. 
  \end{description}
  \begin{enumerate}
\item Enumerate over all $p \in \Delta_n(k)$, with $k =
  9\ln(d_2)/\delta^2$.
\benum \item For each $p$, check (using \lemref{check-close}) whether
there exists $q\in S$ with $\|p-q\|_Y \leq \delta/2$.
\item If so, compute $\|q\|_Y$.
\eenum
\item Let $q$ be such that the $\|q\|_Y$ is the maximum, and let
  $\alpha\in \cD_{d_1}$ be the state for which $q_i  = \tr[X_i
  \alpha]$.  Output this $\alpha$ and $\beta$ satisfying $\tr[\beta
  \sum_i q_i Y_i] = \|\sum_i q_i Y_i\|$.
\eenum
\end{algorithm}
\end{mybox}

\mbox{}

\vspace{0.1 cm}

The main result of this section is:

\begin{thm} \label{basicthm}
Let $M = \sum_{i=1}^n X_i \otimes Y_i$ be such that $\sum_i X_i \leq I$, $X_i \geq 0$, $0 \leq Y_i \leq I$. Algorithm \ref{alg:basic} runs in time $\poly(d_1, d_2, n) \exp\left(O \left(\delta^{-2} \log(n) \log(d_2) \right) \right)$ and outputs $\alpha \in {\cal D}_{d_1}$ and $\beta \in {\cal D}_{d_2}$ such that
\begin{equation}
h_{Sep}(M) \geq \tr[M( \alpha \otimes \beta)] \geq h_{Sep}(M)  - \delta,
\end{equation}
\end{thm}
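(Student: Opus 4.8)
The plan is to build on the reformulation $h_{\Sep(d_1,d_2)}(M) = \max_{p \in S_X}\|p\|_Y$ established above, so that the task reduces to maximizing the seminorm $p \mapsto \|p\|_Y = \|\sum_i p_i Y_i\|$ over the convex body $S_X$. Since $\|\cdot\|_Y$ is a norm composed with a linear map it is a seminorm, so the triangle inequality is available and I will use it freely. The whole argument rests on one geometric fact, the covering-net claim: the optimizer $p^* \in S_X$ is approximated, in the $\|\cdot\|_Y$ seminorm and to error $\delta/2$, by a $k$-sparse rational point of $\Delta_n(k)$ with $k$ only logarithmic in $d_2$. Granting this, brute-force enumeration over $\Delta_n(k)$ must pass within $\delta/2$ of $S_X$ at a point of nearly optimal value, which is exactly what the algorithm detects. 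I would also note at the outset that we may assume $\sum_i X_i = I$ (adjoining a dummy index with $Y=0$ otherwise), so that each $p\in S_X$ is a genuine probability vector and the empirical averages below land in $\Delta_n(k)$.

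For the runtime I would argue directly. The set $\Delta_n(k)$ has at most $n^k$ elements, and with $k = 9\ln(d_2)/\delta^2$ this is $\exp(k\ln n) = \exp(O(\delta^{-2}\log(n)\log(d_2)))$ enumeration steps. For each enumerated $p$, the closeness test of \lemref{check-close}, which is a convex (SDP) feasibility problem asking whether some $q\in S_X$ satisfies $\|p-q\|_Y\le\delta/2$, and the subsequent evaluation of $\|q\|_Y$ both run in $\poly(d_1,d_2,n)$ time. Multiplying yields the claimed $\poly(d_1,d_2,n)\exp(O(\delta^{-2}\log(n)\log(d_2)))$ bound.

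The heart of the correctness proof is Maurey's empirical method. Let $p^* \in S_X$ attain the maximum, with witness $\alpha^*\in\cD_{d_1}$ and $p^*_i = \tr[\alpha^* X_i]$, so $\|p^*\|_Y = h_{\Sep}(M)$. I would draw indices $i_1,\dots,i_k$ i.i.d.\ from $p^*$ and form $\hat p = \tfrac1k\sum_{j=1}^k e_{i_j}\in\Delta_n(k)$. Then $\sum_i \hat p_i Y_i = \tfrac1k\sum_{j=1}^k Y_{i_j}$ is an average of i.i.d.\ Hermitian matrices, each with $0\le Y_{i_j}\le I$ and mean $\sum_i p^*_i Y_i$. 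Applying the matrix Hoeffding bound \lemref{Hoeff} with this $k$ and deviation $\delta/2$ gives $\Pr[\,\|\hat p - p^*\|_Y > \delta/2\,] < 1$, so some $\hat p\in\Delta_n(k)$ with $\|\hat p - p^*\|_Y \le \delta/2$ exists. For that $\hat p$ the algorithm's test succeeds (with $p^*$ itself witnessing feasibility) and returns some $q\in S_X$ obeying $\|\hat p - q\|_Y\le\delta/2$; the triangle inequality then gives
\[ \|q\|_Y \;\ge\; \|p^*\|_Y - \|p^* - \hat p\|_Y - \|\hat p - q\|_Y \;\ge\; h_{\Sep}(M) - \delta. \]
Since the algorithm outputs the $q$ of maximum value, that value is at least $h_{\Sep}(M)-\delta$.

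Finally I would round the winning $q\in S_X$ back to states: by definition of $S_X$ there is $\alpha\in\cD_{d_1}$ with $q_i=\tr[X_i\alpha]$, and as $\sum_i q_i Y_i\ge 0$ I take $\beta$ to be the projector onto its top eigenvector, so $\tr[\beta\sum_i q_i Y_i]=\|q\|_Y$. Then $\tr[M(\alpha\otimes\beta)] = \sum_i \tr[X_i\alpha]\tr[Y_i\beta] = \|q\|_Y \ge h_{\Sep}(M)-\delta$, while the matching upper bound $\tr[M(\alpha\otimes\beta)]\le h_{\Sep}(M)$ is automatic because $\alpha\otimes\beta$ is separable. The one delicate point is the concentration step: I must check the constants in \lemref{Hoeff} line up with $k=9\ln(d_2)/\delta^2$ and $t=\delta/2$ so the failure probability is genuinely below $1$. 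This is where the whole scheme earns its efficiency, since the $d_2$ dimension factor in the matrix Hoeffding tail is exactly what lets $k$ depend only logarithmically on $d_2$; the remaining steps are routine triangle-inequality bookkeeping.
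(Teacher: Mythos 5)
Your proposal is correct and takes essentially the same route as the paper: the same Maurey-style sampling argument via the matrix Hoeffding bound (the paper's \lemref{Y-net}), the same SDP feasibility test (\lemref{check-close}), the same $n^k$ enumeration count, and the same triangle-inequality plus top-eigenvector rounding. The one constant you flag indeed does not line up as stated---with $k = 9\ln(d_2)/\delta^2$ and deviation $\delta/2$ the Hoeffding tail is $d_2^{1-9/32} > 1$---but the paper's own proof has the identical slack (its \lemref{Y-net} with that $k$ only gives error $\delta$, not $\delta/2$), and it is harmless since replacing $k$ by $36\ln(d_2)/\delta^2$ is absorbed by the $O(\cdot)$ in the stated runtime.
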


For $n = \poly(d_1, d_2)$ this is the same running time as found in
\cite{BCY10} (while for $n \ll \poly(d_1, d_2)$ it is an
improvement). Later in this section we will show how we can always
modify the measurement to have $n = \poly(d_1, d_2)$ only incurring in
a small error. But before that, we now show that Theorem
\ref{basicthm} follows easily from two simple lemmas. 

One of the lemmas is a consequence of the the well-known matrix
Hoeffding bound.  

\begin{lem}[Matrix Hoeffding Bound~\cite{Tropp-LD}]\label{lem:Hoeff}
Suppose $Z_1,\ldots,Z_k$ are independent random $d\times d$ Hermitian matrices
satisfying $\E[Z_i]=0$ and $\|Z_i\|\leq \lambda $.  Then
\be \Pr\L[ \L\| \frac{1}{k} \sum_{i=1}^k Z_i \R\| \geq \delta \R] 
\leq d\cdot e^{-\frac{k\delta^2}{8\lambda^2}}.\ee
\end{lem}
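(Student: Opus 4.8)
The plan is to prove this spectral tail bound with the matrix Laplace transform method of Ahlswede--Winter, in the sharp form due to Tropp~\cite{Tropp-LD}. Write $W := \sum_{i=1}^k Z_i$, so that $W$ is a random Hermitian matrix with $\E[W] = 0$. Since for any Hermitian $S$ one has $\|S\| = \max(\lambda_{\max}(S), \lambda_{\max}(-S))$, the event $\{\|\frac1k W\| \geq \delta\}$ is contained in $\{\lambda_{\max}(W)\geq k\delta\}\cup\{\lambda_{\max}(-W)\geq k\delta\}$. It therefore suffices to bound $\Pr[\lambda_{\max}(W)\geq k\delta]$; the bound for $-W$ is identical since $-Z_1,\ldots,-Z_k$ satisfy the same hypotheses, and a union bound contributes only a factor of two that we absorb into the constant at the end.

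First I would apply the matrix Markov inequality: for any $\theta>0$, using $\lambda_{\max}(e^{\theta W}) = e^{\theta\lambda_{\max}(W)}$ and $\lambda_{\max}(A)\leq\tr A$ for $A\succeq 0$,
\be \Pr[\lambda_{\max}(W)\geq k\delta] \leq e^{-\theta k\delta}\,\E\,\tr\, e^{\theta W}. \ee
The only place the noncommutative structure genuinely matters is in bounding $\E\tr e^{\theta W}$, because $e^{\theta\sum_i Z_i} \neq \prod_i e^{\theta Z_i}$ when the $Z_i$ do not commute. The key is the subadditivity of the matrix cumulant generating function,
\be \E\,\tr\exp\L(\theta\sum_{i=1}^k Z_i\R) \leq \tr\exp\L(\sum_{i=1}^k \log\E\, e^{\theta Z_i}\R), \ee
which I expect to be the main obstacle, as it is the one genuinely nonelementary ingredient. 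I would deduce it from Lieb's concavity theorem, that $A\mapsto\tr\exp(H+\log A)$ is concave on positive-definite matrices: conditioning on all but the last summand and applying Jensen's inequality replaces $e^{\theta Z_k}$ by $\E e^{\theta Z_k}$, and iterating peels off the matrices one at a time. (The cruder iterated Golden--Thompson bound of Ahlswede--Winter would also work but with a worse dependence on $k$.)

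It then remains to bound the single-variable moment generating function. For Hermitian $Z$ with $\E[Z]=0$ and $-\lambda I\preceq Z\preceq\lambda I$, I would start from the scalar inequality $e^{\theta z} \leq \cosh(\theta\lambda) + \frac{\sinh(\theta\lambda)}{\lambda}z$ valid for $z\in[-\lambda,\lambda]$, lift it to matrices by the spectral transfer rule, and take expectations; using $\E[Z]=0$ and $\cosh(x)\leq e^{x^2/2}$ gives $\E e^{\theta Z} \preceq \cosh(\theta\lambda)I \preceq e^{\theta^2\lambda^2/2}I$, whence $\log\E e^{\theta Z}\preceq\frac{\theta^2\lambda^2}{2}I$ by operator monotonicity of the logarithm. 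Substituting into the subadditivity estimate and using that $\tr\exp$ is monotone in the semidefinite order yields $\E\tr e^{\theta W} \leq \tr\exp(\frac{k\theta^2\lambda^2}{2}I) = d\,e^{k\theta^2\lambda^2/2}$. Finally I would optimize the free parameter: choosing $\theta = \delta/\lambda^2$ minimizes $-\theta k\delta + \frac{k\theta^2\lambda^2}{2}$ and gives the one-sided estimate $\Pr[\lambda_{\max}(W)\geq k\delta]\leq d\,e^{-k\delta^2/(2\lambda^2)}$. Combining the two sides and weakening the exponent constant from $\frac12$ to $\frac18$ (which more than absorbs the factor of two in the regime where the right-hand side is below one, the only case of interest) produces the stated inequality.
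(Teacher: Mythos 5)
Your proof is correct, and it coincides with the paper's treatment in the only sense available: the paper gives no proof of this lemma, deferring to Theorem 2.8 of \cite{Tropp-LD}, and your derivation is precisely Tropp's Laplace-transform argument (matrix Markov, Lieb-based subadditivity of the matrix cumulant generating function in place of the cruder Ahlswede--Winter/Golden--Thompson iteration, the chord bound $e^{\theta z}\leq \cosh(\theta\lambda)+\lambda^{-1}\sinh(\theta\lambda)z$ lifted by spectral mapping, and optimization at $\theta=\delta/\lambda^2$), all of which checks out. Your final absorption step is also sound for $d\geq 2$: the stated right-hand side is below one only when $k\delta^2/\lambda^2 > 8\ln d\geq 8\ln 2$, which comfortably exceeds the $\tfrac{8}{3}\ln 2$ needed to trade the union-bound factor of $2$ for the weakened constant $\tfrac18$ in the exponent; note only that in the degenerate case $d=1$ the lemma as stated actually fails marginally (take $k=1$, $Z_1=\pm\lambda$ uniformly, $\delta=\lambda$: the probability is $1$ but the bound is $e^{-1/8}$), which is an artifact of the paper's statement rather than a gap in your proof, and is harmless in all of the paper's applications.
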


This is a special case of Theorem 2.8 from
\cite{Tropp-LD}):

Our first lemma shows that one can restrict the optimization to a net of size $n^{O(\log(d_2)\delta^{-2})}$:

\begin{lem}\label{lem:Y-net}
For any $p\in \Delta_n$ there exists $q\in \Delta_n(k)$ with 
\be \|p-q\|_Y \leq \sqrt{\frac{9\ln(d_2)}{k}} .\label{eq:pq-Y-close}\ee
\end{lem}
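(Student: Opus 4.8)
The plan is to prove this by Maurey's empirical method (the probabilistic argument alluded to in the abstract): I realize the net point $q$ as the empirical average of $k$ independent samples drawn from $p$, and then invoke the matrix Hoeffding bound to show that this average is close to $p$ in the $\|\cdot\|_Y$ norm with positive probability. Concretely, viewing $p$ as a probability distribution on $[n]$, I draw indices $i_1,\dots,i_k$ i.i.d.\ with $\Pr[i_j = i] = p_i$ and set $q = \frac1k\sum_{j=1}^k e_{i_j}$. By construction $q \in \Delta_n(k)$, and $\sum_i q_i Y_i = \frac1k \sum_{j=1}^k Y_{i_j}$ has expectation $\sum_i p_i Y_i$ over the random draw, so that $\|p-q\|_Y = \bigl\|\frac1k\sum_{j=1}^k (Y_{i_j} - \bar Y)\bigr\|$ where $\bar Y := \sum_i p_i Y_i$.

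Next I set $Z_j := Y_{i_j} - \bar Y$. Since the $i_j$ are i.i.d., the $Z_j$ are i.i.d.\ Hermitian $d_2\times d_2$ matrices with $\E[Z_j] = 0$. The key point is the operator-norm bound $\|Z_j\| \le 1$: from $0 \le Y_i \le I$ for every $i$ we get $0 \le \bar Y \le I$, hence $-I \le Y_{i_j} - \bar Y \le I$ and therefore $\|Z_j\|\le 1$. Applying \lemref{Hoeff} with $d = d_2$ and $\lambda = 1$ then gives
\[ \Pr\L[\; \L\|\frac1k\sum_{j=1}^k Z_j\R\| \ge \delta \;\R] \le d_2\, e^{-k\delta^2/8}. \]

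Finally I choose $\delta = \sqrt{9\ln(d_2)/k}$, so that the right-hand side equals $d_2 \cdot d_2^{-9/8} = d_2^{-1/8} < 1$ (for $d_2 \ge 2$). Because the failure probability is strictly below one, some realization of the samples must satisfy $\|p-q\|_Y \le \delta$; as each realization is a point of $\Delta_n(k)$, this is exactly the claimed net point, proving the lemma. I expect the only delicate points to be the spectral estimate $\|Z_j\|\le 1$ — this is precisely where the structural hypothesis $0\le Y_i\le I$ enters, and where the slack of $9$ rather than $8$ in the exponent is needed to push the union-bound factor $d_2$ below one — together with the minor bookkeeping of treating $p$ as a genuine probability vector so that sampling from it is well defined. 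Everything else is a direct substitution into the matrix Hoeffding inequality followed by the probabilistic method.
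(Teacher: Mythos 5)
Your proof is correct and is essentially identical to the paper's own argument: both sample $i_1,\ldots,i_k$ i.i.d.\ from $p$, set $Z_j = Y_{i_j} - \bar Y$ (the paper uses the opposite sign, which is immaterial), apply the matrix Hoeffding bound with $\lambda = 1$, and choose $\delta = \sqrt{9\ln(d_2)/k}$ so the failure probability $d_2^{-1/8}$ is below one. Your write-up is in fact slightly more explicit than the paper's, spelling out why $\|Z_j\|\leq 1$ and computing the failure probability rather than just citing the threshold $k > 8\ln(d_2)/\delta^2$.
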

\begin{proof}
Sample $i_1,\ldots,i_k$ according to $p$ and set $q = (e_{i_1} +
\ldots +  e_{i_k})/k$.   Define $\bar Y :=
\sum_{i=1}^n p_i Y_i$ and $Z_j = \bar Y- Y_{i_j}$.  Observe that
$\E[Z_j]=0$ and $\|Z_j\|\leq 1$.  Then \lemref{Hoeff} implies that
\be \|p-q\|_Y = \L\| \frac{1}{k} \sum_{j=1}^k  Z_j\R\| \leq \delta.\ee
with positive probability if $k > 8\ln(d)/\delta^2$.  Setting $\delta
= \sqrt{9\ln(d_2)/k}$ we find that there 
exists a choice of $q\in  \Delta_n(k)$ satisfying Eq. \eq{pq-Y-close}.
\end{proof}

The second lemma shows that one can decide efficiently if an element of the net 
is a valid solution.  A similar result is in \cite{ShiW12}.

\begin{lem}\label{lem:check-close}
Given $p\in \Delta_n$ and $\eps>0$, we can decide in time
$\poly(d_1,d_2,n)$ whether the following set is nonempty
\be S_X \cap \{ q : \|p-q\|_Y \leq \eps\}.\ee
\end{lem}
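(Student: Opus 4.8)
The plan is to recognize this feasibility question as a semidefinite program (SDP) of size $\poly(d_1,d_2,n)$ and to solve it with a standard polynomial-time SDP algorithm. First I would eliminate the variable $q$ in favor of the state that generates it: by the definition of $S_X$, a point $q\in S_X$ is precisely one of the form $q_i=\tr[\alpha X_i]$ for some $\alpha\in\cD_{d_1}$. Hence the set $S_X\cap\{q:\|p-q\|_Y\leq\eps\}$ is nonempty if and only if there exists $\alpha\in\cD_{d_1}$ such that, setting $q_i=\tr[\alpha X_i]$, we have $\|p-q\|_Y\leq\eps$. Searching directly over $\alpha$ rather than over $q$ sidesteps having to invert the linear map $\alpha\mapsto(\tr[\alpha X_i])_i$; note also that the resulting $q$ automatically lies in $\Delta_n$, since $q_i=\tr[\alpha X_i]\geq 0$ and $\sum_i q_i=\tr[\alpha\sum_i X_i]\leq\tr\alpha=1$.

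Next I would turn the objective into linear matrix inequalities. Writing $P:=\sum_i p_i Y_i$, a fixed Hermitian matrix, we have $\sum_i(p_i-q_i)Y_i=P-\sum_i\tr[\alpha X_i]\,Y_i$, which is Hermitian (each $Y_i\in\cH_+^{d_2}$ and the coefficients are real) and affine in the entries of $\alpha$. For a Hermitian matrix $Z$ the operator-norm bound $\|Z\|\leq\eps$ is equivalent to the pair of semidefinite conditions $-\eps I\leq Z\leq\eps I$. The whole question therefore becomes the feasibility SDP: does there exist $\alpha$ with $\alpha\geq 0$, $\tr\alpha=1$, and $-\eps I\leq P-\sum_i\tr[\alpha X_i]\,Y_i\leq\eps I$? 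Every constraint is linear or semidefinite in the $O(d_1^2)$ real parameters of $\alpha$, with matrix blocks of dimension $d_1$ (for $\alpha\geq 0$) and $d_2$ (for the two norm constraints).

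Finally I would invoke the standard fact that an SDP feasibility instance of this size is solvable in time $\poly(d_1,d_2,n)$, for instance via the ellipsoid method or interior-point methods. The one genuine subtlety -- and the only real obstacle -- is that SDP solvers decide feasibility only up to finite precision rather than exactly. I would handle this in the usual way: in time depending only polylogarithmically on $1/\eta$ one can either exhibit a feasible $\alpha$ or certify that the tightened program with $\eps$ replaced by $\eps-\eta$ is infeasible, so choosing $\eta$ much smaller than the slack $\delta/2$ used in Algorithm~\ref{alg:basic} renders the precision error harmless. This matches the checking step of \cite{ShiW12}, which further observes that this particular SDP can be solved efficiently by the multiplicative-weights method.
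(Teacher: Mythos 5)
Your proposal is correct and takes essentially the same approach as the paper: both recognize the set as the feasible region of a semidefinite program of size $\poly(d_1,d_2,n)$, expressing membership in $S_X$ through the existence of $\alpha \in \cD_{d_1}$ with $q_i = \tr[\alpha X_i]$ and the norm constraint as the two-sided linear matrix inequality $-\eps I \leq \sum_i (p_i - q_i) Y_i \leq \eps I$. Your substitution eliminating $q$ in favor of $\alpha$, and your remarks on solver precision, are minor refinements of the same argument rather than a different route.
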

\begin{proof}
Both are convex sets, defined by semidefinite constraints.  So we can
test for feasibility with a SDP of size $\poly(d_1,d_2,n)$. Indeed this is manifest for $S_X$ in Eq. (\ref{defS_xetc}), while $\{ q : \|p-q\|_Y \leq \eps\}$ can be written as
\begin{equation}
\{ q : \|p-q\|_Y \leq \eps\} = \left \{ (q_1, \ldots, q_n) : q_i \geq 0, - \eps I \leq \sum_i p_i Y_i - \sum_i q_i Y_i  \leq \eps I \right \}.
\end{equation}
\end{proof}

We are ready to prove Theorem \ref{basicthm}:

\begin{proof}[Proof of Theorem \ref{basicthm}]
Whatever the output $x$ is, $x \leq h_{\Sep}(M) + \delta/2$.  On
the other hand, let $q = \arg\max_{q\in S} \|q\|_Y$, so that $\|q\|_Y =
h_{\Sep}(M')$.  By \lemref{Y-net}, there exists $p\in \Delta_n(k)$ with
$\|p-q\|_Y \leq \delta/2$.  Thus our algorithm will output a value that is
$\geq h_{\Sep}(M)-\delta$.  We conclude that the
algorithm achieves an additive error of $\delta$ in time
$\poly(d_1, d_2) n^{O(\log(d_2)/\delta^2)}$.
\end{proof}

\subsection{Sparsification}

We now consider the case where $n \gg \poly(d_1, d_2)$. It turns out
that we can modify the algorithm such that its running time is
polynomial in $n$ by first sparsifying the number of local terms of
the measurement.   This results in the following theorem.

\begin{thm} \label{thm:sparse-basic}
Let $M = \sum_{i=1}^n X_i \otimes Y_i$ be such that $\sum_i X_i \leq I$, $X_i \geq 0$, $0 \leq Y_i \leq I$. Algorithm \ref{alg:basic2} runs in time $\poly(n)\exp\left(O \left(\delta^{-2} \log d_1 \log (d_1d_2) \right) \right)$ and outputs $\alpha \in {\cal D}_{d_1}$ and $\beta \in {\cal D}_{d_2}$ such that
\begin{equation}
h_{Sep}(M) \geq \tr(M (\alpha \otimes \beta)) \geq h_{Sep}(M)  - \delta,
\end{equation}
\end{thm}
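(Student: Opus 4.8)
The plan is to reduce the large-$n$ case to \thmref that we already proved (Theorem~\ref{basicthm}) by first replacing $M$ with a \emph{sparsified} operator $M'=\sum_{t=1}^m X'_t\ot Y'_t$ carrying only $m=\poly(d_1)$ terms (crucially, with $m$ independent of $d_2$), and then running Algorithm~\ref{alg:basic} on $M'$. The reduction hinges on controlling the error in the separable norm $\|M-M'\|_{\Sep}:=\max_{\alpha\in\cD_{d_1},\beta\in\cD_{d_2}}|\tr[(M-M')(\alpha\ot\beta)]|$: if this is at most $\delta$, then $|h_{\Sep}(M)-h_{\Sep}(M')|\le\delta$, and moreover any pair $(\alpha,\beta)$ that is near-optimal for $M'$ is near-optimal for $M$ up to an additive $\delta$. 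Hence the output guarantee of Theorem~\ref{basicthm} applied to $M'$ transfers directly to $M$ (at the cost of replacing $\delta$ by $\delta/3$).

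To construct $M'$ I would use Maurey's empirical method. Set $w_i=\tr X_i$, $W=\sum_i w_i=\tr[\sum_i X_i]\le d_1$, and $\hat X_i=X_i/w_i\in\cD_{d_1}$, and sample $m$ i.i.d.\ indices $j_1,\dots,j_m$ from the distribution $\pi_i=w_i/W$. Define
\begin{equation}
M'=\frac{W}{m}\sum_{t=1}^m \hat X_{j_t}\ot Y_{j_t},
\qquad\text{so that}\qquad \E[M']=\sum_i X_i\ot Y_i=M.
\end{equation}
The new local terms $X'_t=\frac{W}{m}\hat X_{j_t}\ge 0$ and $Y'_t=Y_{j_t}\le I$ have the required form; the only 1-LOCC constraint in doubt is $\sum_t X'_t\le I$. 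But since $\E[\sum_t X'_t]=\sum_i X_i\le I$ and the summands have bounded norm, the matrix Hoeffding bound (\lemref{Hoeff}) gives $\sum_t X'_t\le (1+\delta)I$ with high probability, after which a harmless rescaling by $1/(1+\delta)$ restores the constraint.

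The heart of the argument is the bound $\|M-M'\|_{\Sep}\le\delta$. Writing $Y_\alpha:=\sum_i\tr[X_i\alpha]\,Y_i$ and $Y'_\alpha:=\frac{W}{m}\sum_t\tr[\hat X_{j_t}\alpha]\,Y_{j_t}$, a one-line computation gives $\tr[(M-M')(\alpha\ot\beta)]=\tr[\beta(Y_\alpha-Y'_\alpha)]$, and maximizing over $\beta$ yields $\|M-M'\|_{\Sep}=\max_{\alpha\in\cD_{d_1}}\|Y_\alpha-Y'_\alpha\|$. For each \emph{fixed} $\alpha$, the matrix $Y_\alpha-Y'_\alpha$ is an average of $m$ i.i.d.\ mean-zero Hermitian $d_2\times d_2$ matrices of norm $O(W)$, so \lemref{Hoeff} controls $\|Y_\alpha-Y'_\alpha\|$ — and because the lemma bounds the full operator norm it already accounts for every $\beta$ at once. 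It then remains only to union-bound over an $\eta$-net of $\cD_{d_1}$ (of size $\exp(O(d_1^2\log(1/\eta)))$), using the Lipschitz continuity of $\alpha\mapsto Y_\alpha-Y'_\alpha$ (Lipschitz constant $O(W)$ in trace norm, so $\eta\sim\delta/W$ suffices). This fixes $m=\poly(d_1,1/\delta)$ with no polynomial dependence on $d_2$.

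The main obstacle — and the one genuinely nonobvious choice — is precisely this asymmetry: one must place the net on the \emph{smaller} system (here $\alpha\in\cD_{d_1}$, contributing $d_1^2$ to the exponent) and let matrix concentration absorb the larger dimension $d_2$ for free. Had we instead reduced $\|M-M'\|_{\Sep}$ to $\max_\beta\|X_\beta-X'_\beta\|$ and netted over $\beta$, we would have paid $d_2^2$ in $m$ and obtained only $n'=\poly(d_1,d_2)$ terms. With $m=\poly(d_1)$ in hand we have $\log m=O(\log(d_1 d_2))$, so feeding $M'$ into Algorithm~\ref{alg:basic} (whose net $\Delta_m(k)$ has size $m^{O(\log(d_2)/\delta^2)}$) produces a running time of the claimed form $\poly(n)\exp\!\big(O(\delta^{-2}\log d_1\,\log(d_1 d_2))\big)$, with the leading $\poly(n)$ accounting for reading the input and drawing the samples; this is Algorithm~\ref{alg:basic2}.
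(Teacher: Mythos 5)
Your proof is correct, but its key lemma is established by a genuinely different route from the paper's. The paper's \lemref{sparse-basic} normalizes each $\|Y_i\|=1$, samples terms with probability proportional to $\tr(X_i\ot Y_i)$, and controls the full operator norm $\|M-M'\|$ with a single application of matrix Hoeffding in dimension $d_1d_2$, giving $n'=O(d_1^2d_2^2\log(d_1d_2)/\delta^2)$ terms; since only $\log n'$ enters the final exponent, $\poly(d_1,d_2)$ terms are already sufficient for \thmref{sparse-basic}, so the asymmetry you present as essential (``one must place the net on the smaller system'') is an optimization, not a necessity --- and the relevant side for the net is Alice's (where the $\ell_1$ factorization lives), not whichever dimension is smaller. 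Your route --- bounding only the separable norm $\max_{\alpha,\beta}|\tr[(M-M')(\alpha\ot\beta)]|$, via matrix Hoeffding on $d_2\times d_2$ matrices for each fixed $\alpha$ together with a union bound over a net of $\cD_{d_1}$ --- is nevertheless sound: the error-transfer argument is valid, the POVM constraint is correctly restored by concentration plus rescaling, and the Lipschitz computation goes through, yielding $m=O\bigl(d_1^4\delta^{-2}\log(d_1/\delta)+d_1^2\delta^{-2}\log d_2\bigr)$ (note the additive $\log d_2$ coming from the dimensional prefactor in \lemref{Hoeff}, which your count $m=\poly(d_1,1/\delta)$ omits; it is harmless since only $\log m$ matters). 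Interestingly, your construction closely mirrors the paper's proof of the \emph{general} sparsification lemma (\lemref{sparse-general}) specialized to $\cB=S_\infty^{d_2}$: there too the sampling distribution is $\tr[X_i]/d$, concentration acts on the output space, and the supremum over inputs is handled by a net over states --- though the paper nets only over pure states at constant resolution and removes the resulting loss by a bootstrapping step, where you use a fine ($\eta\sim\delta/W$) net over all of $\cD_{d_1}$. What each approach buys: the paper's bipartite lemma is shorter and more self-contained (one Hoeffding application, no nets) and gives the stronger operator-norm guarantee; yours produces far fewer terms, and as a side effect its bookkeeping tracks the exponent $\delta^{-2}\log d_1\log(d_1d_2)$ claimed in \thmref{sparse-basic} more faithfully than the paper's own proof, which concludes with a run time of $(d_1d_2)^{O(\log(d_2)/\delta^2)}$.
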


The key element of the theorem is the following Lemma.

\begin{lem}\label{lem:sparse-basic}
Given a 1-LOCC measurement $M = \sum_{i=1}^n X_i \ot Y_i$ and some
$\eps>0$ there exists
a 1-LOCC measurement $M' = \sum_{j=1}^{n'} X'_j \ot Y'_j$ with
$\|M-M'\|\leq \eps$ and $n' \leq \poly(d_1,d_2)/\eps^2$.  If the
decomposition of $M$ is explicitly given then $M'$ and its
decomposition can be found in time $\poly(d_1, d_2, n)$ using a
randomized algorithm.  
\end{lem}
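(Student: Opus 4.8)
The plan is to use Maurey's empirical method: rewrite $M$ as a scalar multiple of an average of ``elementary'' $1$-LOCC terms over a probability distribution, and then replace that average by an empirical average over $n'$ independent samples. First I would discard every index with $\tr X_i = 0$ (for which $X_i=0$ since $X_i\geq 0$), set $c := \sum_i \tr X_i$, and observe $c = \tr(\sum_i X_i)\leq d_1$ because $\sum_i X_i \leq I$. Writing $p_i := \tr(X_i)/c$ and $\hat X_i := X_i/\tr(X_i)\in\cD_{d_1}$ gives the representation $M = c\,\E_{i\sim p}[\hat X_i \ot Y_i]$. I would then sample $j_1,\ldots,j_{n'}$ i.i.d.\ from $p$ and define $X'_t := (c/n')\hat X_{j_t}$, $Y'_t := Y_{j_t}$, and $M' := \sum_{t=1}^{n'} X'_t \ot Y'_t$, so that $\E[M']=M$. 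By construction each $X'_t\geq 0$ and each $0\leq Y'_t \leq I$, so $M'$ already has the right local structure; the two things left to guarantee are the operator-norm bound and the global constraint $\sum_t X'_t \leq I$.

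For the operator-norm bound I would write $M'-M = \frac{1}{n'}\sum_t W_t$ with $W_t := c\,\hat X_{j_t}\ot Y_{j_t} - M$. These are i.i.d., mean zero, and bounded: since $\|\hat X_{j_t}\|,\|Y_{j_t}\|\leq 1$ and $\|M\|\leq 1$ (as $M\geq 0$ and $\langle\psi|M|\psi\rangle\leq \langle\psi|(\sum_i X_i)\ot I|\psi\rangle\leq 1$), we get $\|W_t\|\leq c+1 = O(d_1)$. Applying the matrix Hoeffding bound \lemref{Hoeff} in dimension $d_1 d_2$ with $k=n'$ and $\lambda = c+1$ yields $\Pr[\|M'-M\|\geq \eps/2]\leq d_1 d_2\,\exp(-n'\eps^2/O(d_1^2))$, which drops below a constant once $n' = O(d_1^2\log(d_1 d_2)/\eps^2)$; this is the claimed $\poly(d_1,d_2)/\eps^2$. (Using matrix Bernstein together with $\hat X^2\leq\hat X$ and $Y^2\leq Y$ would improve the $d_1^2$ to $d_1$, but this is not needed here.)

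The step I expect to be the main obstacle is restoring the measurement constraint, since sampling preserves $\sum_t X'_t = \sum_i X_i$ only in expectation and the realized sum can exceed $I$. To control this I would apply \lemref{Hoeff} a second time, now to the $d_1$-dimensional marginal: $\sum_t X'_t - \sum_i X_i$ is again an average of bounded mean-zero matrices, so $\|\sum_t X'_t - \sum_i X_i\|\leq \eps/4$ with high probability, giving $\sum_t X'_t \leq (1+\eps/4)I$. Rescaling $M'\mapsto M'/(1+\eps/4)$ then produces a genuine $1$-LOCC measurement while adding at most $\tfrac{\eps/4}{1+\eps/4}\|M'\|\leq \eps/2$ to the error. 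A union bound makes both concentration events hold simultaneously with probability at least $1/2$, and the triangle inequality gives total operator-norm error at most $\eps$, as required.

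Finally, for the algorithmic claim I would note that computing all $\tr X_i$, forming $p$, and drawing $n'$ samples takes time $\poly(d_1,d_2,n)$, and $M'$ consists of only $n' = \poly(d_1,d_2)/\eps^2$ terms. The constraint-violation event $\sum_t X'_t\leq (1+\eps/4)I$ can be verified directly by an eigenvalue test, so one simply resamples if it fails; the operator-norm event holds with high probability and needs no explicit verification. The only genuinely delicate point is the tension between approximating $M$ in operator norm and keeping $M'$ a valid measurement, which the rescaling trick resolves; a secondary point is that the scale factor $c$ can be as large as $d_1$, which is what drives the dimension dependence of $n'$ while still staying within the claimed bound.
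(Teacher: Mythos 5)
Your proof is correct, and it follows the same skeleton as the paper's: write $M$ as an expectation of product terms over a probability distribution, draw $n'$ i.i.d.\ samples, apply the matrix Hoeffding bound (\lemref{Hoeff}) twice --- once to the full $d_1d_2$-dimensional operator and once to the $d_1$-dimensional Alice marginal $\sum_t X'_t$ --- and finally rescale by $(1+\delta)^{-1}$ to restore $\sum_t X'_t \leq I$. The one genuine difference is the choice of sampling measure, and it works in your favor. The paper first normalizes each $\|Y_i\|=1$ and samples $i$ with probability proportional to $\tr(X_i \ot Y_i) = \tr(X_i)\tr(Y_i)$; the resulting sampled terms $W_i = X_i \ot Y_i/p_i$ are then only bounded by $\|W_i\| \leq \tr M \leq d_1 d_2$, which forces $n' = O(d_1^2 d_2^2 \log(d_1d_2)/\eps^2)$. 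You instead sample proportionally to $\tr(X_i)$ alone and leave the $Y_i$ untouched, using only $\|Y_i\|\leq 1$; your sampled terms $c\,\hat X_{j}\ot Y_{j}$ have norm at most $c \leq d_1$, giving $n' = O(d_1^2\log(d_1 d_2)/\eps^2)$ --- a factor-$d_2^2$ savings, still of course within the claimed $\poly(d_1,d_2)/\eps^2$. A minor further point in your favor: you state the second concentration event as concentration of $\sum_t X'_t$ around $\sum_i X_i$, which is exactly what is true and exactly what is needed to conclude $\sum_t X'_t \leq (1+\eps/4)I$, whereas the paper writes this guarantee somewhat loosely as $\|B - I_{d_1}\|\leq \delta$ even though $\E[B]=\sum_i X_i$ need not equal $I$.
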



The modified algorithm is the following:

\mbox

\begin{mybox}
\begin{algorithm}[Algorithm for computing $h_{Sep}\left(M\right)$ for one-way LOCC $M = \sum_i X_i \otimes Y_i$]
\mbox{}\label{alg:basic2}
  \begin{description}
  \item[Input:] $\{X_i\}_{i=1}^n ,\{Y_i\}_{i=1}^n$. 
    %
  \item[Output:] States $\alpha \in \cD_{d_1}$ and $\beta \in \cD_{d_2}$. 
  \end{description}
  \begin{enumerate}
\item Use \lemref{sparse-basic} to replace $M = \sum_{i=1}^n X_i \otimes Y_i$ with $M' = \sum_{i=1}^{n'} X_i' \ot
  Y_i'$ satisfying $\|M-M'\|\leq \delta/2$.
\item Run Algorithm 1 on $M'$. 
\end{enumerate}
\end{algorithm}
\end{mybox}

\mbox{}

\vspace{0.1 cm}

The proof of correctness is straightforward.
\begin{proof}[Proof of \thmref{sparse-basic}]
Whatever the output $x$ is, $x \leq h_{\Sep}(M') \leq h_{\Sep}(M) + \delta/2$.  On
the other hand, let $q = \arg\max_{q\in S} \|q\|_Y$, so that $\|q\|_Y =
h_{\Sep}(M')$.  By \lemref{Y-net}, there exists $p\in \Delta_n(k)$ with
$\|p-q\|_Y \leq \delta/2$.  Thus our algorithm will output a value that is
$\geq h_{\Sep}(M')-\delta/2 \geq h_{\Sep}(M)-\delta$.  We conclude that the
algorithm achieves an additive error of $\delta$ in time
$\poly(n) (d_1d_2)^{O(\log(d_2)/\delta^2)}$.
\end{proof}

It remains only to prove \lemref{sparse-basic}.  This requires a careful use of the matrix Hoeffding bound (\lemref{Heoff}).
 The details are in 
Appendix \ref{sparse}.

\subsection{Multipartite}

We now consider the generalization of the problem to the multipartite
case. We consider measurements on a $l$-partite vector space
$\mathbb{C}^{d_1} \otimes \ldots \otimes \mathbb{C}^{d_l}$. Following
Li and Smith~\cite{LiS14}, we define the class of fully one-way LOCC measurements
on $\mathbb{C}^{d_1} \otimes \ldots \otimes \mathbb{C}^{d_l}$
recursively as all measurements $M = \sum_i X_i \otimes M_i$, where
$X_i\in \cH_+^{d_1}$, $\sum_i X_i \leq I$,
$M_i\in\cH_+^{d_2\cdots d_l}$, and each $M_i$ is a fully one-way LOCC measurement in
$\mathbb{C}^{d_2} \otimes \ldots \otimes \mathbb{C}^{d_l}$.

Ref.~\cite{LiS14} recently strengthened the result of \cite{BH-local}
(from parallel one-way LOCC to fully one-way LOCC measurement) and proved that the SoS hierarchy approximates
\begin{equation}
h_{\text{Sep}(d_1, \ldots, d_l)}(M) := \max_{\alpha_1 \in {\cal
    D}_{d_1}, \ldots, \alpha_l \in {\cal D}_{d_l}}
 \tr[(\alpha_1 \otimes \ldots \otimes\alpha_l) M]
\end{equation}
to within additive error $\delta$ in time $\exp(O(\log^2(d)l^3/\delta^2))$, with $d := \max_{i \in [l]} d_i$. Here we show that our previous algorithm for the bipartite case can be extended to the multipartite setting to give  the same run time. 

\begin{thm}
\algref{multi} above runs in time $\exp(O(l^3\ln^2(d)/\delta^2)$ and outputs states $\alpha_i$, $i \in [l]$, satisfying 
\be
h_{Sep(d_1, \ldots, d_l)}\left(M\right) \geq \tr[M (\alpha_1 \otimes \ldots \otimes \alpha_l)]
\geq h_{Sep(d_1, \ldots, d_l)}\left(M\right) - \delta.
\ee
\end{thm}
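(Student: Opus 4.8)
The plan is to prove this by induction on the number of parties $l$, turning the recursive definition of a fully one-way LOCC measurement directly into a recursive algorithm. Writing $M = \sum_i X_i \ot M_i$ with each $M_i$ a fully one-way LOCC measurement on parties $2,\ldots,l$, the key identity is
\be h_{\Sep(d_1,\ldots,d_l)}(M) = \max_{\alpha_1 \in \cD_{d_1}} h_{\Sep(d_2,\ldots,d_l)}\L(\sum_i \tr[\alpha_1 X_i]\, M_i\R),\ee
so \algref{multi} should enumerate a covering net of the distributions $p_i = \tr[\alpha_1 X_i]$ achievable by party $1$, and for each net point $q$ recurse on the $(l-1)$-partite measurement $\sum_i q_i M_i$. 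The base case $l=1$ is just $h_{\Sep(d_l)}(N) = \max_{\alpha_l}\tr[\alpha_l N]$, the largest eigenvalue, which also yields the state $\alpha_l$. Soundness is immediate: any tuple the algorithm outputs is a genuine product state, so $\tr[M(\alpha_1\ot\cdots\ot\alpha_l)] \leq h_{\Sep}(M)$ always.

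The heart of the argument is a per-level net lemma generalizing \lemref{Y-net}: for the optimal $\alpha_1^*$ with distribution $p^*$ there is a sparse $q \in \Delta_n(k)$ with $k = O(\log(d)/\delta_1^2)$ such that $h_{\Sep(d_2,\ldots,d_l)}(\sum_i q_i M_i) \geq h_{\Sep}(M) - \delta_1$. First I would fix the optimal continuation $\alpha_2^*,\ldots,\alpha_l^*$ for $\sum_i p_i^* M_i$ and freeze it into single-party operators $W_i := \tr_{3,\ldots,l}[(I \ot \alpha_3^* \ot \cdots \ot \alpha_l^*)M_i]$ on $\bbC^{d_2}$, which satisfy $0 \leq W_i \leq I$ and $\tr[(\alpha_2^*\ot\cdots\ot\alpha_l^*)M_i] = \tr[\alpha_2^* W_i]$. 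Sampling $q$ from $p^*$ and applying the matrix Hoeffding bound (\lemref{Hoeff}) to $\bar W - W_{i_t}$, where $\bar W := \sum_i p_i^* W_i$ and these are now matrices of dimension only $d_2 \leq d$, gives $\|\sum_i(p_i^*-q_i)W_i\| \leq \delta_1$ for $k = O(\log(d)/\delta_1^2)$. Because $(\alpha_2^*,\ldots,\alpha_l^*)$ is a feasible continuation for $\sum_i q_i M_i$, this lower-bounds $h_{\Sep(d_2,\ldots,d_l)}(\sum_i q_i M_i) \geq \sum_i q_i \tr[\alpha_2^* W_i] \geq h_{\Sep}(M) - \delta_1$. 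The point is that freezing the future into a partial trace collapses the Hoeffding dimension from $d^{l-1}$ to a single party's $d$, which is exactly what keeps the exponent at $l^3$ rather than growing like a higher power of $l$.

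With this lemma in hand I would split the error budget as $\delta_j = \delta/l$ across the $l-1$ recursion levels and chain the lemmas, so the accumulated additive error is at most $(l-1)\cdot\delta/l \leq \delta$. To keep each net small I would sparsify the conditional measurement at every level using a multipartite version of \lemref{sparse-basic}, reducing the number of terms to $\poly(d)$, so that (using $\delta_j = \delta/l$) the level-$j$ net $\Delta_{n_j}(k_j)$ has logarithmic size $k_j \log n_j = O(\log^2(d)\, l^2/\delta^2)$. Since the total enumeration is the product of the $l$ per-level nets, its logarithm is the sum $\sum_{j} O(\log^2(d)\, l^2/\delta^2) = O(l^3\log^2(d)/\delta^2)$, matching the run time claimed for the SoS hierarchy in \cite{LiS14}.

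The main obstacle is the interface between the single-party net lemma and the recursion. Because the collapse to dimension $d$ relies on freezing the optimal future continuation, the seminorm in which the net is fine is itself future-dependent, so the achievability check that let us output honest states in the bipartite case (\lemref{check-close}) has to be generalized with care --- we must ensure that a net point passing the check at level $j$ still composes with the states chosen at deeper levels to give a product state whose value is controlled by the chained bound. The second delicate point is purely quantitative: sparsifying every conditional measurement to $\poly(d)$ terms and verifying that the $l$ separate sparsification errors, together with the $l$ net errors, still sum to $O(\delta)$. This is where the careful per-level accounting --- rather than any new idea --- does the work of landing the exponent at $l^3\log^2(d)/\delta^2$.
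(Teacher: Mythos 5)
You should know at the outset that the paper never actually proves this theorem: it states it, prints \algref{multi}, and moves on, so there is no written argument to compare yours against. Your reconstruction of the algorithm's skeleton --- recurse party by party, enumerate a net $\Delta_n(k)$ at each level, split the error budget as $\delta/l$, sparsify each conditional measurement --- matches \algref{multi}, and your ``freezing'' lemma is genuinely the right ingredient, indeed a necessary one, to justify the paper's net parameter $k=9l^2\ln(d)/\delta^2$: if one instead applies \lemref{Hoeff} to the full $(l-1)$-partite operators $M_i$, the dimension factor is $d^{l-1}$ rather than $d$, which forces $k=\Omega(l^3\ln(d)/\delta^2)$ per level and a total runtime exponent of order $l^4\ln^2(d)/\delta^2$ rather than the claimed $l^3$. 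So collapsing the Hoeffding dimension by tracing out the frozen future is exactly what makes the stated runtime plausible.

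The problem is that the step you flag as ``the main obstacle'' and defer is the actual mathematical crux, and nothing in your proposal resolves it. The frozen operators $W_i=\tr_{3,\ldots,l}[(I\ot\alpha_3^*\ot\cdots\ot\alpha_l^*)M_i]$ depend on the unknown optimal continuation, so the algorithm cannot check closeness in the seminorm $\|\sum_i a_iW_i\|$. The check the paper actually performs (step 3(a) of \algref{multi}) is in the full operator norm, $\|\sum_i(p_i-q_i)M_i\|\leq\delta/2l$, and your net lemma does not give its completeness: with $k=O(l^2\ln(d)/\delta^2)$ samples the full-norm deviation of the empirical point from $q^*$ is in the worst case of order $\delta/\sqrt{l}$, which exceeds the threshold $\delta/2l$ by a factor $\sqrt{l}$, so the check may reject every point of your net; making it pass requires the larger $k$ above and lands you back at $l^4$. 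Conversely, if you drop the check and recurse directly on the (generally unachievable) net point $q$ --- as your first paragraph describes --- soundness fails unless you transfer back to an achievable distribution afterwards, e.g.\ by fitting the best $\alpha_1$ to the states $\beta_2,\ldots,\beta_l$ returned by the recursion; but then you need $\max_{q'\in S_{X^{(1)}}}\sum_i q_i'c_i\geq\sum_i q_ic_i-O(\delta/l)$ with $c_i=\tr[(\beta_2\ot\cdots\ot\beta_l)M_i]$, i.e.\ concentration of the sampled point in a seminorm determined by the recursion's own output. Since $\beta_2,\ldots,\beta_l$ are functions of the sample, the Hoeffding/Maurey argument does not apply, and making it uniform over all continuations costs a union bound over an $\exp(\Omega(ld))$-size net of product states, i.e.\ $k=\Omega(d)$ and exponential runtime. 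Your claims ``soundness is immediate'' and ``recurse on $\sum_iq_iM_i$ for each net point $q$'' are in direct tension for exactly this reason: when $l=2$ the frozen norm and the full norm on the remaining parties coincide, which is why the bipartite proof closes, but for $l\geq3$ they differ and your proposal never bridges them. (A smaller gap of the same flavor: you assert per-level sparsification to $\poly(d)$ terms, but \lemref{sparse-basic} applied with ``Bob'' equal to parties $2,\ldots,l$ gives only $\poly(d^{l})$ terms, i.e.\ $\log n_j=O(l\log d)$, which by itself already pushes the exponent to $l^4$.) In short, your proposal correctly identifies the idea that distinguishes $l^3$ from $l^4$, but the theorem's claimed bound stands or falls on the interface step you leave open, and that step does not follow from ``care'' alone.
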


\mbox

\begin{mybox}
\begin{algorithm}[Algorithm for computing $h_{Sep(d_1, \ldots, d_l)}\left(M\right)$ for fully one-way LOCC $M$]
\mbox{}\label{alg:multi}
  \begin{description}
  \item[Input:] $\{X^{(m)}_{i_1,\ldots,i_m} : m\in [l], i_1\in [n_1],\ldots,i_m\in [n_m]\} \subset \cH_+^{d_m}$ such that
$$ M = \sum_{i_1=1}^{n_1} X^{(1)}_{i_1}
\ot \sum_{i_2=1}^{n_2} X^{(2)}_{i_1,i_2}
\ot \cdots \ot \sum_{i_m=1}^{n_m} X^{(m)}_{i_1,i_2,\ldots,i_m}$$
    %
  \item[Output:] States $\alpha_i \in \cD_{d_i}$, $i \in [l]$. 
  \end{description}
  \begin{enumerate}
\item Use \lemref{sparse-basic} to replace $M = \sum_{i_1=1}^{n_1}
  X^{(1)}_{i_1} \otimes M_{i_1}$ with $M' = \sum_{i_1=1}^{n_1'}
  (X^{(1)}_{i_1})' \ot M_{i_1}'$ satisfying $\|M-M'\|\leq \delta/2l$.
  Here $M_{i_1}$ is a shorthand for the collection
  $\{X^{(m)}_{i_1,\ldots,i_m}\}$ for $m\geq 2$ and likewise for
  $M_i'$.  Redefine $M, \{X^{(m)}_{i_1,\ldots,i_m}\}, \{n_i\}$ appropriately.
\item Initialize the variables $\alpha_1,\ldots,\alpha_l$ to $\emptyset$.
\item Enumerate over all $p \in \Delta_n(k)$, with $k = 9l^2\ln(d)/\delta^2$.  For each $p$,
\benum
\item  Check (using \lemref{check-close}) whether
there exists $q\in S_{X^{(1)}}$ with $\|\sum_i (p_i-q_i)M_i\| \leq \delta/2l$. 
\item If no such $q$ exists then do not evaluate this value of $p$ any further.  Otherwise let $\beta_1$ be the density matrix found in the SDP in \lemref{check-close} satisfying $q_i = \tr[\beta_1 X^{(1)}_i]$.
\item For $m'\in \{2,\ldots,m\},i_2\in [n_2],\ldots,i_{m'}\in
  [n_{m'}]$, define 
$\tilde X^{(m'-1)}_{i_2,\ldots,i_{m'}} := \sum_{i_1} q_{i_1} X^{(m')}_{i_1,i_2,\ldots,i_{m'}}$.
\item Recursively call Algorithm~\ref{alg:multi} on input $\{\tilde
  X^{(m')}_{i_1,\ldots,i_{m'}}\}$.  Denote the output by $\beta_2,\ldots,\beta_l$.
\item If $\tr[M(\beta_1\ot \cdots \ot \beta_l)] > \tr[M(\alpha_1\ot
  \cdots \ot \alpha_l)]$ then replace
  $\alpha_1,\ldots,\alpha_l$ with $\beta_1,\ldots,\beta_l$.  
\eenum
\end{enumerate}
\end{algorithm}
\end{mybox}

\mbox{}

\vspace{0.1 cm}

\subsection{The need for an explicit decomposition}\label{sec:explicit}
The input to our algorithm is not only a 1-LOCC measurement $M$ but an explicit
decomposition of the form $M = \sum_i X_i \ot Y_i$ with each $X_i \geq 0$.
Previous algorithms for $h_{\Sep}$ were mostly based on  the
SoS hierarchy (or its restriction to the separability problem also
known as $k$-extendible hierarchy)~\cite{DohertyPS04}.  Running these
requires only knowledge of $M$ and not its decomposition.  The
decomposition appears in the analysis of \cite{BCY10, LW14, BC11,
  BH-local, Yang06}, but not the algorithm.

On the other hand, previous algorithms did not yield an explicit
rounding, i.e. a separable state $\sigma$ with $\tr M \sigma \approx
h_{\Sep}(M)$.  The only exception to this~\cite{BH-local} {\em also}
required an explicit decomposition in order to produce a rounding.

In general any bipartite measurement $M$ can be written in the form
$\sum_i X_i \ot Y_i$, with individual terms that are not necessarily
positive semidefinite.  Finding {\em some} such decomposition is straightforward,
e.g. using the operator Schmidt decomposition or even writing $M =
\sum_{ijkl} M_{ijkl} \ket i \bra j \ot \ket k \bra l$.  Our algorithm
can be readily modified to incorporate non-positive $X_i$ (along the
lines of \secref{inj-norm}), but the run-time will then include a
factor of $\sum_i \|X_i\|_1$ in the exponent.  In general this will be
$O(1)$ only if $M$ is close to 1-LOCC and the decomposition is close to
the correct one.

This raises an interesting open question: given $M$, find a decomposition 
$M = \sum_i X_i \ot Y_i$ that (approximately) minimizes $\sum_i
\|X_i\|_1$.  We are not aware of nontrivial algorithms or hardness
results for this problem.

\section{Generalized algorithm for arbitrary norms}\label{sec:arb-Y-norm}

An important step in the algorithm of the previous section was the identity,
\be 
h_{\Sep(d_1,d_2)}(M) = \max_{p \in S_X} \|p\|_Y,
\ee
valid for any one-way LOCC $M = \sum_{i} X_i \otimes Y_i$. This equation suggests ways of generalizing the algorithm. In this section we consider the setting where the operators $\{ Y_1, \ldots Y_n\}$ belong to some Banach space $\cB$ with norm $\|\cdot\|_{\cB}$. In analogy with Eq. (\ref{defS_xetc}), given $Y = \{ Y_1, \ldots Y_n\}$ we define the $(\cB,Y)$ norm in $\mathbb{R}^n$ as
\begin{equation}
\Vert a \Vert_{\cB, Y} := \left \Vert  \sum_{i=1}^n a_i Y_i  \right \Vert_{\cB}.
\end{equation}

The goal is then to estimate
\be  \label{gen2}
\max_{p \in S_X} \|p\|_{\cB, Y},
\ee
where, as before, $S_X$ is given by Eq. (\ref{defS_xetc}). 

Also this generalization is of interest in quantum information theory. As we discuss more in the next subsection, it includes as a particular case the well-studied problem of computing the maximum output $\alpha$-norms of an entanglement-breaking channel. Consider a general entanglement-breaking quantum channel $\Lambda : {\cal D}_{d_1} \rightarrow {\cal D}_{d_2}$ given by \cite{HSR04}:

\begin{equation}
\Lambda(\rho) := \sum_i \tr(X_i \rho) Y_i, 
\end{equation}
with $Y_i \geq 0$, $\tr(Y_i) = 1$, $X_i \geq 0$, and $\sum_i X_i = I$. Then 
\begin{equation}
\max_{\rho \in {\cal D}_{d_1}} \Vert  \Lambda(\rho) \Vert_{\alpha} = \max_{p \in S_X} \|p\|_{S_{\alpha}, Y}.
\end{equation}

In order to find an algorithm for computing Eq. (\ref{gen2}), we need to replace the quantum Hoeffding bound (Lemma \ref{lem:Hoeff}) by more sophisticated concentration bounds. Since in Lemma \ref{lem:check-close} all we needed was a bound in expectation, the right concept will turn out to be the Rademacher type-$\gamma$ constant of the space $\cB$, which we now define:

\begin{dfn} \label{typegamma}
We say a Banach space $\cB$ has Rademacher type-$\gamma$ constant $C$
if for every $Z_1, \ldots, Z_k \in \cB$ and Rademacher random
variables $\varepsilon_1, \ldots, \varepsilon_k$ (i.e. independent and
uniformly distributed on $\pm 1$) , 
\begin{equation} 
\mathop{\mathbb{E}}_{\varepsilon_1, \ldots, \varepsilon_k} \left \Vert \sum_{i=1}^k \eps_i Z_i \right \Vert_{\cB}^\gamma
\leq C^{\gamma} \sum_{i=1}^k  \left \Vert Z_i \right \Vert_\cB^\gamma.
\end{equation}
\end{dfn}

It is known that Schatten-$\alpha$ spaces with norm $\Vert X
\Vert_{\alpha} := \tr(|X|^{\alpha})^{1/\alpha}$ have type-2 constant
$\sqrt{\alpha - 1}$ for $\alpha \geq 2$~\cite{BCL94}, and
type-$\alpha$ constant 1 for every $\alpha \in [1,2]$~\cite[Thm
3.3]{Kato00}.

For a reader unfamiliar with the type-$\gamma$ constant, we suggest verifying
that the type-2 constant of $\ell_2$ is 1.    A more nontrivial
calculation is using the Hoeffding bound or its operator version to
verify that the type-2 constant of $\ell_\infty^n$ or $S_\infty^n$ is
$O(\sqrt{\log n})$.  (This also follows from the fact that the $S_\infty$
and $S_{\log(n)}$ norms are within a constant multiple of each other
on the space of $n$-dimensional matrices.)

For sparsification (the analogue of Lemma \ref{lem:sparse-basic}) we will actually need a slightly stronger condition than a bound on the type-$\gamma$ constant:  

\begin{dfn}
The modulus of uniform smoothness of a Banach space $\cB$ is defined to be the function
\be \rho_{\cB}(\tau) := 
\sup \L\{ \frac{\|x + \tau y\|_{\cB} + \|x - \tau y\|_{\cB}}{2} - 1
\; : \; 
 \|x\|_{\cB} = \|y\|_{\cB} = 1\R\}.
\label{eq:smootheness}\ee
\end{dfn}

By the triangle inequality, $\rho_{\cB}(\tau) \leq \tau$ for all
$\cB$. But when $\lim_{\tau\ra 0}\frac{\rho_{\cB}(\tau)}{\tau}=0$ then
we say that $\cB$ is uniformly smooth. For example, if $\cB =\ell_2$
then $\rho_{\cB}(\tau) = \tau^2/2$, whereas $\rho_{\ell_1}(\tau) =
\tau$. More generally \cite{BCL94} (building on \cite{Tomczak1974})
proved that $\rho_{S_\alpha}(\tau) \leq \frac{\alpha-1}{2}\tau^2$ for
$\alpha>1$. We say that $\cB$ has modulus of smoothness of power type
$\gamma$ if $\rho_{\cB}(\tau) \leq C\tau^\gamma$ for some constant
$C$. This implies (using an easy induction on $k$) that the type-$\gamma$ constant is
$\leq C$, and indeed this was how the type-$\gamma$ constant was
bounded in \cite{Tomczak1974,BCL94}.

The algorithm for approximating the optimization problem given by
Eq. (\ref{gen2}) is the following:

\mbox

\begin{mybox}
\begin{algorithm}[Algorithm for computing $\max_{p \in S_X} \|p\|_{\cB, Y}$ for $\cB$ of type-$\gamma$ constant $C$ and modulus of uniform smoothness $\rho_{\cB}(\tau) \leq s \tau^2$, with $X := \{ X_i \}$ and $Y := \{ Y_i \}$]
\mbox{}\label{alg:1stgen}
  \begin{description}
  \item[Input:] $\{X_i\}_{i=1}^n ,\{Y_i\}_{i=1}^n$
    %
  \item[Output:] $p \in {\cal S}$
  \end{description}
  \begin{enumerate}
\item Use Lemma \ref{lem:sparse-general}  to replace $X = \{ X_i \}$ and $Y = \{ Y_i \}$ with $X' := \{ X'_i \}$ and $Y' := \{ Y'_i \}$.
\item Enumerate over all $p \in \Delta_n(k)$, with $k = \left( \frac{2C^{\gamma}}{\delta^{\gamma}} \max_{i} \Vert Y_i \Vert_{\cB}^{\gamma}  \right)^{1/(\gamma-1)}$.
\benum \item For each $p$, check (using Lemma \ref{checkingefficinetly2}) whether
there exists $q\in S$ with $\|p-q\|_{\cB, Y} \leq \delta$.
\item If so, compute $\|p\|_Y$.
\eenum
\item Output $p$ such that $\|p\|_{\cB, Y}$ is the maximum.
\eenum
\end{algorithm}
\end{mybox}

\mbox{}

We have:

\begin{thm} \label{thm2}
Let $\cB$ be a Banach space with norm $\|\cdot\|_{\cB}$.  Suppose the
type-$\gamma$ constant of $\cB$ is $C$ and that there is $s > 0$ such that the modulus of uniform smoothness satisfies $\rho_{\cB}(\tau) \leq s \tau^2$. Suppose one can compute $\|\cdot\|_{\cB}$ in time $T$. Consider $\{ X_i \}_{i=1}^n$ with $X_i$ $d \times d$ matrices satisfying  $X_i \geq 0$, $\sum X_i \leq I$, and $\{ Y_i \}_{i=1}^{n}$ with $Y_i \in \cB$. Algorithm \ref{alg:1stgen} runs in time 
\be
\poly(T, d, s) \exp \left( O \left(\left( C\delta^{-1} \max_{i} \Vert Y_i \Vert_{\cB}  \right)^{\frac{\gamma}{\gamma-1}} \log(d)  \right) \right) 
\ee
and outputs $p$ such that
\begin{equation}
\max_{p \in S_X} \|p\|_{\cB, Y} \geq \|p\|_{\cB, Y} \geq \max_{p \in S_X} \|p\|_{\cB, Y}  - \delta,
\end{equation}
\end{thm}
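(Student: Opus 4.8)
The plan is to follow the bipartite argument (the proof of Theorem~\ref{basicthm} and \thmref{sparse-basic}) almost verbatim, replacing the one tool that was special to the $S_\infty$ norm---the matrix Hoeffding bound of \lemref{Hoeff}---by the Rademacher type-$\gamma$ property of $\cB$. As in the warmup, correctness rests on three ingredients: (i) a \emph{covering} statement that every $p\in S_X$ lies within $\delta$ (in $\|\cdot\|_{\cB,Y}$) of some point of the small net $\Delta_n(k)$; (ii) a \emph{checking} routine deciding in time $\poly(T,d,n)$ whether a given net point is within $\delta$ of $S_X$ and, if so, returning a witness $q\in S_X$; and (iii) a \emph{sparsification} step reducing $n$ so that $\log n = O(\log d)$. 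Granting these, the two-sided estimate is identical to the warmup: the witness returned for any examined net point lies in $S_X$, so the reported value never exceeds $\max_{p\in S_X}\|p\|_{\cB,Y}$, while applying (i) to the true maximizer produces a net point whose feasibility check succeeds and whose value is at least $\max_{p\in S_X}\|p\|_{\cB,Y}-\delta$ (splitting the error budget between sparsification and the net as in \thmref{sparse-basic}). The $\exp(\cdot)$ factor in the running time is $|\Delta_n(k)| = \exp(O(k\log n)) = \exp(O(k\log d))$ with $k$ as in \algref{1stgen}, and the $\poly(T,d,s)$ prefactor collects the cost of one norm evaluation, the convex feasibility test, and the sparsification.

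The heart of the argument---and the step I expect to be the real obstacle---is the covering statement (i), the analogue of \lemref{Y-net}. Given $p\in\Delta_n$, I would again draw $i_1,\dots,i_k$ i.i.d.\ from $p$, set $q=(e_{i_1}+\cdots+e_{i_k})/k$, and bound $\E\,\|p-q\|_{\cB,Y}^\gamma = \E\,\bigl\|\tfrac1k\sum_j (Y_{i_j}-\bar Y)\bigr\|_{\cB}^\gamma$, where $\bar Y=\sum_i p_iY_i$. Since the type-$\gamma$ inequality (Definition~\ref{typegamma}) is phrased for Rademacher sums rather than for i.i.d.\ centered sums, the key move is symmetrization: introduce an independent copy $Y'$, use Jensen ($\bar Y=\E\,Y'$) to pass to $\E\,\|\tfrac1k\sum_j(Y_{i_j}-Y'_{i_j})\|_\cB^\gamma$, insert Rademacher signs $\eps_j$ (legitimate because each summand is symmetric), and then apply Definition~\ref{typegamma} conditionally on the samples. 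This yields
\be
\E\,\|p-q\|_{\cB,Y}^\gamma \;\le\; \frac{C^\gamma}{k^\gamma}\,\E\sum_{j=1}^k \|Y_{i_j}-Y'_{i_j}\|_\cB^\gamma \;\le\; \frac{(2C)^\gamma \max_i\|Y_i\|_\cB^\gamma}{k^{\gamma-1}} .
\ee
Choosing $k$ as in the algorithm drives the right-hand side below $\delta^\gamma$, so by the probabilistic method some $q\in\Delta_n(k)$ satisfies $\|p-q\|_{\cB,Y}\le\delta$; this is exactly where the exponent $\gamma/(\gamma-1)$ in $k$ originates, and specializing to $\gamma=2$, $C=O(\sqrt{\log d_2})$ recovers \lemref{Y-net}.

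For the checking routine (ii) I would mimic \lemref{check-close}: $S_X$ is the projection of a spectrahedron and $\{q:\|p-q\|_{\cB,Y}\le\delta\}$ is convex because $\|\cdot\|_\cB$ is a norm, so feasibility is a convex program. The one new subtlety is that $\|\cdot\|_\cB$ need no longer be semidefinite-representable; instead I would treat the black-box evaluation of $\|\cdot\|_\cB$ (cost $T$) as a membership/separation oracle for the $\delta$-ball and solve feasibility by the ellipsoid method, which is why $T$ enters the running time polynomially. Finally the sparsification lemma (iii) is the place where the hypothesis $\rho_\cB(\tau)\le s\tau^2$ is used rather than a mere type bound: following \lemref{sparse-basic}, I would randomly subsample and reweight the terms $X_i\ot Y_i$ and control the perturbation of $\|M-M'\|$ by a second moment estimated through the power-type-$2$ smoothness constant $s$, obtaining $n'=\poly(d,s,\delta^{-1})$ and hence $\log n'=O(\log d)$ (the residual $\poly$ factors being absorbed into the prefactor). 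Assembling (i)--(iii) into the two-sided bound above completes the proof; the only genuinely new content beyond the warmup is the symmetrization in (i) and the oracle-based feasibility test in (ii).
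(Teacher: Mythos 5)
Your covering and checking steps are precisely the paper's own proof: step (i) is Lemma~\ref{symmetrization} combined with \lemref{BY-net} (same i.i.d.\ sampling, same symmetrization through an independent copy and Rademacher signs, same conditional application of Definition~\ref{typegamma}; the constant $(2C)^\gamma$ versus $2C^\gamma$ is immaterial), and step (ii) is Lemma~\ref{checkingefficinetly2} (ellipsoid method with the norm evaluation serving as the membership oracle). Assembling the ingredients ``along the same lines as \thmref{sparse-basic}'' is also exactly what the paper does.

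The gap is in step (iii), the sparsification, which for general $\cB$ is the genuinely hard part and not a routine adaptation of \lemref{sparse-basic}. Two problems. First, for a general output space there is no single operator whose norm you can bound: the warmup controlled $\|M'-M\|_{S_\infty}$ by matrix Hoeffding, and that operator norm automatically dominates the error uniformly over all input states; for general $\cB$ the quantity to control is the supremum $\max_{\rho\in\cD_d}\|(\Lambda'-\Lambda)(\rho)\|_{\cB}$, and this supremum over inputs must be handled explicitly. Second, and more fundamentally, your plan to control the perturbation ``by a second moment estimated through the power-type-$2$ smoothness constant $s$'' cannot work: an expectation (second-moment) bound is what type already gives, but the paper's proof of \lemref{sparse-general} needs exponential tails. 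After symmetrization it fixes a single input $q$ and invokes Pisier's Azuma-type inequality for uniformly smooth spaces (\lemref{azuma}, quoted from Naor~\cite{Naor}), namely $\Pr\left[\left\|\sum_j \eps_j q_j Y_{i_j}\right\|_{\cB}\geq\delta\right]\leq e^{s+2-ck\delta^2/d^2}$, then takes a union bound over a net of $10^{2d}$ pure input states (Lemma II.2 of \cite{HLSW04}), and finally bootstraps from the net to all inputs via the estimate $\alpha\leq\beta+\alpha/2$. With only Chebyshev-type tails coming from a second moment, the union bound over $\exp(\Omega(d))$ net points would force $k=\exp(\Omega(d))$ samples, destroying the claim $n'=\poly(d,s,1/\delta)$. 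This is exactly why $\rho_\cB(\tau)\leq s\tau^2$ appears in Theorem~\ref{thm2} as a hypothesis separate from the type-$\gamma$ constant: uniform smoothness buys exponential concentration, not merely moment bounds, and your outline never uses it in that capacity. (The input-side normalization $\sum_i X_i'\leq(1+\delta)I$ is still handled by matrix Hoeffding, \lemref{Hoeff}, so that part of your plan is fine.)
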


As an example, suppose $\cB$ is ${\cal S}_{\infty}^{d_2}$. Then the type-2 constant is $O(\sqrt{\log(d_2)})$, $\max_i\Vert Y_i \Vert \leq 1$, and Theorem \ref{basicthm} shows one can compute $\max_{p \in S_X} \|p\|_{Y}$ in time $\exp(O(\delta^{-2} \log(d_1)\log(d_1d_2)))$. 

In the next subsection we discuss a few particular cases of the
theorem worth emphasizing. Then we prove the theorem.

\subsection{Consequences of Theorem \ref{thm2}}\label{sec:consequences}

\subsubsection{Restricted one-way LOCC measurements}  \label{lowrank}

The next lemma shows that for subclasses of one-way LOCC measurements one has a PTAS for computing $h_{Sep}$. The class include in particular one-way LOCC measurements in which Bob's measurements are low rank.

\begin{cor} \label{cor:lowrankBob}
Let $M = \sum_{i} X_i \otimes Y_i$ be such that $X_i \geq 0$, $\sum_i X_i \leq I$ and $\Vert Y_i \Vert_{2} \leq r$. Then one can compute $\alpha \in {\cal D}_{d_1}$ and $\beta \in {\cal D}_{d_2}$ such that
\begin{equation}
 h_{Sep}(M) \geq \tr(M( \alpha \otimes \beta)) \geq h_{Sep}(M) - \delta
\end{equation}
in time $d_1^{O(\delta^{-2}r)}$.
\end{cor}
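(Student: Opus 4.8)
The plan is to run essentially Algorithm~\ref{alg:basic}, but to replace the matrix‑Hoeffding net bound of \lemref{Y-net} by a dimension‑free bound that exploits the low rank of Bob's operators. Recall that $h_{\Sep}(M) = \max_{p\in S_X}\|\sum_i p_i Y_i\|$, and that the $\log(d_2)$ factor in Theorem~\ref{basicthm} enters solely through the type‑$2$ constant $O(\sqrt{\log d_2})$ of $S_\infty^{d_2}$ (equivalently, the dimension factor $d_2$ in \lemref{Hoeff}). Since $\|Z\|\le \|Z\|_{S_2}$ and the Hilbert--Schmidt space $S_2^{d_2}$ has type‑$2$ constant $1$, I would build the net with respect to $S_2$ and transfer the guarantee to the $S_\infty$ objective by norm domination. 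This is not a black‑box application of Theorem~\ref{thm2} (which would compute the $S_2$ quantity rather than $h_{\Sep}$); the point is that a net that is $\delta$‑fine in $S_2$ is automatically $\delta$‑fine in $S_\infty$.

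Concretely, given $p\in\Delta_n$, sample $i_1,\dots,i_k$ i.i.d.\ from $p$ and set $q=(e_{i_1}+\dots+e_{i_k})/k\in\Delta_n(k)$. Writing $\bar Y=\sum_i p_iY_i$, the summands $Y_{i_j}-\bar Y$ are i.i.d.\ and mean zero in the Hilbert space $S_2$, so orthogonality gives
\ba
\E\Big\|\sum_i(p_i-q_i)Y_i\Big\|^2
&\le \E\Big\|\tfrac1k\sum_j (Y_{i_j}-\bar Y)\Big\|_{S_2}^2 \nonumber\\
&= \tfrac1k\,\E\big\|Y_{i_1}-\bar Y\big\|_{S_2}^2
\le \frac{\max_i\|Y_i\|_{S_2}^2}{k}.
\ea
Hence some $q\in\Delta_n(k)$ satisfies $\|p-q\|_Y\le\delta$ once $k=O(\max_i\|Y_i\|_{S_2}^2/\delta^2)$, and under the hypothesis ($\|Y_i\|_{S_2}^2\le r$, e.g.\ $\rank(Y_i)\le r$ with $Y_i\le I$) this is $k=O(r/\delta^2)$, with no $d_2$ dependence. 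The feasibility check (\lemref{check-close}) and the evaluation of $\|q\|_Y=\|\sum_i q_iY_i\|$ are carried out in $S_\infty$ exactly as before, and the correctness argument is identical to the proof of Theorem~\ref{basicthm}, now yielding a net of size $n^{O(r/\delta^2)}$.

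It remains to reduce the number of terms to $n'=\poly(d_1)$, \emph{independent of $d_2$}, so that $n^{O(r/\delta^2)}=d_1^{O(r/\delta^2)}$ rather than $(d_1d_2)^{O(r/\delta^2)}$; only a $\poly(d_2)$ prefactor is allowed. I expect this to be the main obstacle. The sparsification of \lemref{sparse-basic} produces $n'=\poly(d_1,d_2)$ precisely because the matrix‑Bernstein bound controlling $\|M-M'\|$ on $\bbC^{d_1}\ot\bbC^{d_2}$ carries a $\log(d_1d_2)$ factor. To remove the $d_2$, I would resample whole terms $X_i\ot Y_i$ (so each retained $Y'_j=Y_{i_j}$ automatically has $\|Y'_j\|_{S_2}^2\le r$) and invoke an \emph{intrinsic‑dimension} version of matrix Bernstein: because $\tr(Y_i^2)=\|Y_i\|_{S_2}^2\le r$, the variance of the estimator has intrinsic dimension $O(d_1 r)$ rather than $d_1 d_2$, so $n'=\poly(d_1,r,\delta^{-1})$ suffices while preserving the low‑rank property needed above.

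Putting the pieces together, the net has size $\poly(d_1)^{O(\delta^{-2}r)}=d_1^{O(\delta^{-2}r)}$, each element is processed by a $\poly(d_1,d_2)$ SDP, and the total running time is $\poly(d_2)\,d_1^{O(\delta^{-2}r)}$. The output states $\alpha\in\cD_{d_1}$ (read off from the SDP of \lemref{check-close} via $q_i=\tr[\alpha X_i]$) and $\beta\in\cD_{d_2}$ (the top eigenvector of $\sum_i q_iY_i$) are produced exactly as in Algorithm~\ref{alg:basic}, giving $\tr[M(\alpha\ot\beta)]\ge h_{\Sep}(M)-\delta$.
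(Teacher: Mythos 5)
Your proof is correct, and its core coincides with the paper's (very terse) argument: the paper proves Corollary~\ref{cor:lowrankBob} by invoking Theorem~\ref{thm2} and Algorithm~\ref{alg:1stgen} with $\cB = S_2^{d_2}$, whose type-$2$ constant is $1$; your orthogonality computation is exactly a proof of that type-$2$ fact, and the transfer $\Vert\cdot\Vert_{S_\infty}\le\Vert\cdot\Vert_{S_2}$ that you correctly flag as necessary (a purely black-box use of Theorem~\ref{thm2} would estimate $\max_{p\in S_X}\Vert p\Vert_{S_2,Y}$ rather than $h_{\Sep}(M)$) is the unstated glue in the paper's two-line proof. Your reading of the hypothesis is also the right one: the claimed runtime $d_1^{O(\delta^{-2}r)}$ corresponds to assuming $\Vert Y_i\Vert_{S_2}^2\le r$ (e.g.\ $\rank Y_i\le r$ with $Y_i\le I$), whereas the literal $\Vert Y_i\Vert_{S_2}\le r$ would give $d_1^{O(\delta^{-2}r^2)}$. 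The one place you genuinely diverge is sparsification, and there the extra machinery you propose is unnecessary: step 1 of Algorithm~\ref{alg:1stgen} already invokes \lemref{sparse-general}, whose bound $k\le c\,d^2(d+s)/\delta^2$ depends only on the \emph{input} dimension $d=d_1$ and on the smoothness constant $s$ of $\cB$ (one can take $s=1/2$ for $S_2$), with no $d_2$ or $n$ dependence; moreover its construction resamples the original $Y_i$'s up to scalars, so applying it to $Y_i/\sqrt{r}$ at accuracy $\delta/\sqrt{r}$ preserves the Hilbert--Schmidt bound on the surviving terms, and the resulting $S_1\ra S_2$ error controls the change in $h_{\Sep}$ by the same norm domination. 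Thus the ``main obstacle'' you anticipated is already resolved by the paper's general sparsification lemma (proved in Appendix~\ref{sparse}); your intrinsic-dimension matrix-Bernstein route is plausible but is left as a sketch, so as written the paper's lemma is the cleaner way to close that step.
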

\begin{proof}
We use Theorem \ref{thm2} and Algorithm \ref{alg:1stgen} to estimate the optimal $p$ and then find $\alpha$ and $\beta$ by semidefinite programming.
\end{proof}

If instead we use the multipartite version of the algorithm (see Algorithm \ref{alg:multi}), we find that for $M = \sum_{i} X_i \otimes Y_{i_1} \otimes \ldots \otimes Y_{i_l}$, with  $X_i \geq 0$, $\sum_i X_i \leq I$ and $\Vert Y_i \Vert_{2} \leq r$, we can compute $\alpha \in {\cal D}_{d}$ and $\beta_1, \ldots, \beta_l \in {\cal D}_{d}$ such that
\begin{equation}
h_{Sep}(M) \geq \tr(M \alpha \otimes \beta_1 \otimes \ldots \otimes \beta_l) \geq h_{Sep}(M) - \delta
\end{equation}
in time $d^{O(\delta^{-2}l^3 r)}$.

\subsubsection{Maximum output norm of entanglement-breaking channels}

The next corollary shows that for all $\alpha > 1$, there is a PTAS for computing the maximum output Schatten-$\alpha$ norm of an entanglement-breaking channel.

\begin{cor} \label{cor:maxoutputentropy}
Let $\Lambda : {\cal D}_{d_1} \rightarrow  {\cal D}_{d_2}$ be an entanglement-breaking channel with decomposition $\Lambda(\rho) := \sum_i \tr(X_i\rho) Y_i$ (where $X_i \geq 0$, $\sum_i X_i = I$, $Y_i \in {\cal D}_{d_2}$). 
\begin{enumerate}

\item  For every $\alpha \geq 2$ one can compute in time $\poly(d_2) d_1^{O\left( \delta^{-2} \alpha \right)}$ a number $r$ such that
\begin{equation}
\max_{\rho \in {\cal D}_{d_1}} \left \Vert \Lambda(\rho) \right \Vert_{\alpha} \geq r \geq \max_{\rho \in {\cal D}_{d_1}} \left \Vert \Lambda(\rho) \right \Vert_{\alpha} - \delta,
\end{equation}

\item For every $1 < \alpha \leq 2$ one can compute in time $\poly(d_2)  d_1^{O\left( \left( \alpha \delta^{-\alpha}  \right)^{\frac{1}{\alpha - 1}}   \right)}$ a number $r$ such that
\begin{equation}
\max_{\rho \in {\cal D}_{d_1}} \left \Vert \Lambda(\rho) \right \Vert_{\alpha} \geq r \geq \max_{\rho \in {\cal D}_{d_1}} \left \Vert \Lambda(\rho) \right \Vert_{\alpha} - \delta,
\end{equation}
\end{enumerate}
\end{cor}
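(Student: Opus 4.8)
The plan is to read off \corref{maxoutputentropy} as a direct instance of \thmref{thm2}, using the identity established in \secref{arb-Y-norm} that for an entanglement-breaking channel $\Lambda(\rho)=\sum_i\tr(X_i\rho)Y_i$ one has $\max_{\rho\in\cD_{d_1}}\|\Lambda(\rho)\|_\alpha=\max_{p\in S_X}\|p\|_{S_\alpha,Y}$. This is exactly the quantity that \algref{1stgen} estimates, with target space $\cB=S_\alpha^{d_2}$ and the $\{Y_i\}$ taken to be the output states of $\Lambda$. So the proof is really just a matter of identifying the three quantities feeding the running time of \thmref{thm2}—the cost $T$ of evaluating $\|\cdot\|_\cB$, the bound on $\max_i\|Y_i\|_\cB$, and the type/smoothness constants of $\cB$—and substituting.

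Two of these are immediate. Evaluating $\|Z\|_{S_\alpha}=(\tr|Z|^\alpha)^{1/\alpha}$ only requires diagonalizing a $d_2\times d_2$ matrix, so $T=\poly(d_2)$. And each $Y_i$ is a density matrix, so its eigenvalues lie in $[0,1]$ and sum to $1$, giving $\|Y_i\|_\alpha\le\|Y_i\|_1=1$ for every $\alpha\ge1$; hence $\max_i\|Y_i\|_\cB\le1$. Since the dimension appearing in the exponent of \thmref{thm2} is that of the $X_i$ (here $d_1$) while $d_2$ enters only through $T$ and $s$, these two facts already force the running time into the stated form, with $d_2$ entering only polynomially and $d_1$ alone in the exponent.

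The only remaining choice is the type constant, and this is where the split on $\alpha$ enters. For $\alpha\ge2$, $S_\alpha$ has type-$2$ constant $C=\sqrt{\alpha-1}$ and modulus of smoothness $\rho_{S_\alpha}(\tau)\le\tfrac{\alpha-1}{2}\tau^2$, so I would invoke \thmref{thm2} with $\gamma=2$, $C=\sqrt{\alpha-1}$, $s=(\alpha-1)/2$; the exponent $\bigl(C\delta^{-1}\max_i\|Y_i\|_\cB\bigr)^{\gamma/(\gamma-1)}\log d_1$ then becomes $(\alpha-1)\delta^{-2}\log d_1=O(\alpha\delta^{-2}\log d_1)$, which is part~1. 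For $1<\alpha\le2$ the type exponent is better: $S_\alpha$ has type-$\alpha$ constant $1$, so running the argument with $\gamma=\alpha$, $C=1$ turns the exponent into $\delta^{-\alpha/(\alpha-1)}\log d_1$; absorbing the bounded factor $\alpha^{1/(\alpha-1)}\le e$ this is exactly $O\bigl((\alpha\delta^{-\alpha})^{1/(\alpha-1)}\log d_1\bigr)$, which is part~2.

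The hard part—and the reason the two regimes cannot be handled uniformly—is that \thmref{thm2} as stated assumes modulus of smoothness of power type $2$, whereas for $1<\alpha<2$ the space $S_\alpha$ is only uniformly smooth of power type $\alpha$ (its modulus grows like $\tau^\alpha$, which dominates $\tau^2$ as $\tau\to0$). So in the sub-$2$ regime I cannot quote \thmref{thm2} verbatim; instead I would rerun the argument with $\gamma=\alpha$ throughout. The enumeration step is unaffected, since the net parameter $k=(2\delta^{-\alpha})^{1/(\alpha-1)}$ needs only the type-$\alpha$ constant. The delicate point is sparsification: here I would use the special structure of the input space $\cA=S_1^{d_1}$, noting that $S_X$ is the image of the $d_1^2$-dimensional set $\cD_{d_1}$, so the number of terms $n$ can be reduced to $\poly(d_1)$ directly—independently of any smoothness of $\cB$—and it is precisely this $S_1$-sparsification that keeps $d_2$ out of the exponent and leaves $\log d_1$ rather than $\log(d_1d_2)$. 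Verifying that this reduction preserves the additive error $\delta$ is the one genuinely nontrivial step; the rest is substitution into \thmref{thm2}.
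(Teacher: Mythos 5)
Your reduction and your Part 1 are exactly the paper's proof: the paper disposes of the corollary in two sentences by instantiating \thmref{thm2} with precisely the constants you use ($\gamma=2$, $C=\sqrt{\alpha-1}$, $s=(\alpha-1)/2$ for Part 1; $\gamma=\alpha$, $C=1$ for Part 2, citing \cite{BCL94} and \cite{Kato00}), and your bookkeeping ($T=\poly(d_2)$ by diagonalization, $\|Y_i\|_\alpha\le\|Y_i\|_1=1$, the exponent dimension being $d_1$, and absorbing $\alpha^{1/(\alpha-1)}\le e$) matches what is implicit there.

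Where you diverge is Part 2, and there your extra care exposes a genuine soft spot in the paper rather than in your own argument. The paper's proof of Part 2 cites only the type-$\alpha$ constant and silently relies on its earlier assertion that $\rho_{S_\alpha}(\tau)\le\frac{\alpha-1}{2}\tau^2$ ``for $\alpha>1$'' to meet the smoothness hypothesis of \thmref{thm2}. As you observe, that assertion is a misquote of \cite{BCL94}: the power-type-2 smoothness bound holds only for $\alpha\ge 2$, while for $1<\alpha<2$ one has only power type $\alpha$ (already in the diagonal subspace $\ell_\alpha\subset S_\alpha$, taking $x=e_1$, $y=e_2$ gives $\rho_{\ell_\alpha}(\tau)\ge(1+\tau^\alpha)^{1/\alpha}-1\sim\tau^\alpha/\alpha$, which exceeds $s\tau^2$ for small $\tau$). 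So you are right that \thmref{thm2} cannot be quoted verbatim in the sub-2 regime.

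However, your own patch has a gap. The claim that for $\cA=S_1^{d_1}$ the number of terms ``can be reduced to $\poly(d_1)$ directly---independently of any smoothness of $\cB$'' by dimension-counting on $S_X$ is unjustified, and I do not believe it holds as stated. What must be shrunk is the number $n$ of terms in the decomposition of $\Lambda$ (since the net $\Delta_n(k)$ has size $n^{O(k)}$), i.e.\ you need an approximate-Carath\'eodory statement for $\Lambda$ in $\cL(S_1^{d_1},S_\alpha^{d_2})$ with a $d_2$-independent, $\poly(d_1)$ number of terms; exact Carath\'eodory or dimension counting only gives $O(d_1^2d_2^2)$ terms, which puts $d_2$ back into the exponent and destroys the claimed $\poly(d_2)$ dependence. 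Approximate Carath\'eodory with few terms is precisely Maurey's empirical method and needs more than type: in the paper's proof of \lemref{sparse-general}, type-$\gamma$ controls only expectations, whereas the union bound over the $10^{2d_1}$-point net of pure states from \cite{HLSW04} requires the exponential tail of \lemref{azuma}, and that is exactly where 2-smoothness enters. The correct repair is to keep the sparsification machinery but weaken its input: $S_\alpha$ for $1<\alpha\le 2$ does satisfy $\rho_{S_\alpha}(\tau)\le\tau^\alpha/\alpha$, and the Pisier martingale inequality underlying \lemref{azuma} holds for spaces with modulus of smoothness of power type $\gamma$, with a tail of the form $\exp\L(O(s)-c\,k^{\gamma-1}\delta^\gamma\R)$ (recovering \lemref{azuma} at $\gamma=2$). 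Rerunning the proof of \lemref{sparse-general} with this version gives $n'\le\poly(d_1)\,\delta^{-O(1)}$ with constants depending only on $\alpha$, after which the rest of your outline goes through and yields the stated run times.
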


\begin{proof}
Part 1 follows from Theorem \ref{thm2} and the fact that $S_{\alpha}$, with $\alpha \geq 2$, has type-2 constant $\sqrt{\alpha - 1}$ \cite{BCL94} and $\rho_{S_\alpha}(\tau) \leq \frac{\alpha-1}{2}\tau^2$ for
$\alpha>1$. Part 2, in turn, follows from Theorem \ref{thm2} and the fact that for $S_{\alpha}$, with $\alpha \geq 2$, has type-$\alpha$ constant one \cite[Thm
3.3]{Kato00}.  
\end{proof}

We note that computing maximum output $\alpha$-norms for general
quantum channels is harder. In particular it was shown in
\cite{HM10,HM13} that there is no algorithm that run in time $\exp \left( O
  \left( \log^{2 - \varepsilon}{d} \right) \right )$ for any
$\varepsilon > 0$ and can decide if $\max_{\rho} \Vert \Lambda(\rho)
\Vert$ is one or smaller than $\delta$ (for any fixed $\delta > 0$)
for a general quantum channel $\Lambda : {\cal D}_{d} \rightarrow {\cal D}_{d}$, unless the exponential time
hypothesis (ETH) is wrong (meaning there is a subexponential time
algorithm for 3-SAT).   

The result of \cite{HM10} is one example of many that found
$d^{\tilde\Theta(\log d)}$ upper or lower bounds for related optimization
problems \cite{LMM03, BKW14, HM10}. In
a few cases \cite{Alon2013, ShiW12, de2006ptas} poly-time approximate schemes (PTASs) are known.  Our results here
fall into this second class. We hope that the geometric perspective
from our paper can lead to a better understanding of what
distinguishes these cases.

What is known about hardness results for entanglement-breaking channels? Using the results of \cite{BHKSZ12} one can show that to determine if $\max_{\rho} \Vert \Lambda(\rho) \Vert$ is $\geq C/d$ or $\leq c/d$ (for any two constants $C > c > 0$) cannot be done in time $\exp \left( O \left( \log^{2 - \varepsilon}{d} \right) \right )$ assuming ETH. So one cannot hope to find a polynomial-time algorithm for a \textit{multiplicative} approximation of the maximum output norm. 

Note that the complexity of the algorithm blows up when $\alpha
\rightarrow 1$. This is not only an artifact of the proof. Computing
the quantity for $\alpha$ close to one allow us to estimate the von
Neumann minimum output entropy of the channel. However to estimate it
we need a number of samples of order $O(d)$ and so the net-based approach we
explore in this paper does not lead to efficient algorithms.

\subsubsection{Hypercontractive norms}

Our third corollary concerns the problem of computing hypercontractive norms, in particular computing the $2 \rightarrow s$ norm of a $d \times d$ matrix $A$, for $s > 2$, defined as
\begin{equation}
\Vert A \Vert_{2 \rightarrow s} := \max_{\Vert x \Vert_2 = 1} \Vert A x \Vert_{s}.
\end{equation}
This norms are important in several applications, e.g. bounding the mixing time of Markov chains and determining if a graph is a small-set expander \cite{BHKSZ12}. In \cite{BHKSZ12} it was also shown that to compute any constant-factor multiplicative approximation to the $2 \rightarrow 4$ norm of a $n \times n$ matrix is as hard as solving 3-SAT with $O(\log^2(n))$ variables. In Appendix \ref{hardness} we extend the approach of \cite{BHKSZ12} to show hardness results for multiplicatively approximating all $2 \rightarrow q$ norms, for even $q \geq 4$. 

In \cite{BHKSZ12} it was shown that the result of \cite{BCY10} implies that for any $d \times d$ matrix $A$ the Sum-of-Squares hierarchy computes in time $d^{O(\log(d)\delta^{-2})}$ an additive approximation $x$ s.t.
\begin{equation} \label{Baraketalapprox}
\Vert A \Vert_{2 \rightarrow 4}^4 \leq x \leq \Vert A \Vert_{2 \rightarrow 4}^4  + \delta \Vert A \Vert_{2 \rightarrow 2}^2 \Vert A \Vert_{2 \rightarrow \infty}^2,
\end{equation}
where $\Vert A \Vert_{2 \rightarrow 2}$ is the largest singular value of $A$ and $\Vert A \Vert_{2 \rightarrow \infty}$ the largest 2-norm of any row of $A$.

Using Theorem \ref{thm2} we can improve this algorithm in two ways: First we can compute an approximation to $\Vert . \Vert_{2\rightarrow s}$ for any $s>2$. Second the running time for fixed error is polynomial, instead of quasipolynomial. 

\begin{cor} \label{cor:2to4}
For any $s \geq 2$ one can compute in time $d^{O(s \delta^{-2} )}$ a number $x$ such that
\begin{equation}
\Vert A \Vert_{2 \rightarrow s}^2 \geq x \geq \Vert A \Vert_{2 \rightarrow s}^2 - \delta \Vert A \Vert_{2 \rightarrow 2}^2.
\end{equation}
\end{cor}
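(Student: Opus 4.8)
The plan is to realize Corollary~\ref{cor:2to4} as a single instance of Theorem~\ref{thm2} with the Banach space $\cB=\ell_{s/2}$. Write $A$ through its rows $a_1,\dots,a_d\in\bbC^d$, so that $(Ax)_i=a_i^\top x$ and, for a rank-one density matrix $\rho=xx^\dagger$ with $\|x\|_2=1$, $|(Ax)_i|^2=\tr[X_i\rho]$ where $X_i:=\bar a_i\,a_i^\top\in\cH_+^{d}$. Since $\sum_i X_i=A^\dagger A$, I would rescale $\tilde X_i:=X_i/\|A\|_{2\to 2}^2$ (the denominator being the top singular value, computable exactly), giving $\tilde X_i\ge 0$ and $\sum_i\tilde X_i=A^\dagger A/\|A^\dagger A\|\le I$, which is exactly the $X$-side hypothesis of Theorem~\ref{thm2}. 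Taking $Y_i:=e_i\in\ell_{s/2}^d$ makes the induced norm $\|a\|_{\cB,Y}=\|\sum_i a_i e_i\|_{\ell_{s/2}}=\|a\|_{\ell_{s/2}}$, so for $p_i=\tr[\tilde X_i\rho]=|(Ax)_i|^2/\|A\|_{2\to 2}^2$ one gets $\|p\|_{\cB,Y}=(\sum_i|(Ax)_i|^{s})^{2/s}/\|A\|_{2\to 2}^2=\|Ax\|_s^2/\|A\|_{2\to 2}^2$. Because $\rho\mapsto\|p(\rho)\|_{\ell_{s/2}}$ is convex, its maximum over $\cD_d$ is attained at a rank-one extreme point $\rho=xx^\dagger$, so $\max_{p\in S_X}\|p\|_{\cB,Y}=\|A\|_{2\to s}^2/\|A\|_{2\to 2}^2$. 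I would therefore run Algorithm~\ref{alg:1stgen} on $\{\tilde X_i\},\{e_i\}$ with error parameter $\delta$ and output $\|A\|_{2\to 2}^2$ times the resulting value; multiplying the additive $\delta$-guarantee by $\|A\|_{2\to 2}^2$ gives precisely $\|A\|_{2\to s}^2-\delta\|A\|_{2\to 2}^2\le x\le\|A\|_{2\to s}^2$.

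The run-time then comes entirely from the geometry of $\ell_{s/2}$. The norm is clearly polynomial-time computable, and $\max_i\|Y_i\|_{\cB}=\|e_i\|_{\ell_{s/2}}=1$, so the only quantity entering the exponent of Theorem~\ref{thm2} is the type-$\gamma$ constant. For $s\ge 4$ (so $s/2\ge 2$) I would take $\gamma=2$ and invoke the $\ell$-space analogue of the Schatten estimate quoted earlier, $\rho_{\ell_{s/2}}(\tau)\le\frac{s/2-1}{2}\tau^2$, which certifies a type-$2$ constant $C=O(\sqrt{s})$. Substituting $\gamma=2$, $C=O(\sqrt{s})$ and $\max_i\|Y_i\|_{\cB}=1$ makes the exponent $O(C^2\delta^{-2}\log d)=O(s\,\delta^{-2}\log d)$, i.e.\ a run-time $d^{O(s\delta^{-2})}$ up to a $\poly(d,s)$ overhead, as claimed. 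I would also note that since $A$ is $d\times d$ the number of terms is $n=d=\poly(d)$, so the sparsification step of Algorithm~\ref{alg:1stgen} is unnecessary and only the net-enumeration core is used; the smoothness hypothesis is invoked purely to certify the type-$2$ constant.

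The main obstacle is the regime $2<s<4$, where $s/2<2$ and $\ell_{s/2}$ has \emph{neither} a dimension-free type-$2$ constant \emph{nor} power-type-$2$ smoothness: already a uniform $p$ on $[d]$ forces the empirical-average net to have size $d^{\Omega(\delta^{-s/(s-2)})}$, so plugging in $\gamma=2$ is simply invalid there. The honest substitution is $\gamma=s/2$ with type-$\gamma$ constant $C=1$, in exact analogy with the $1<\alpha\le 2$ case of Corollary~\ref{cor:maxoutputentropy}; since $n=d$ one again skips sparsification and only enumerates the net. This yields a run-time $d^{O(\delta^{-s/(s-2)})}$, which is weaker than the stated $d^{O(s\delta^{-2})}$ and blows up as $s\to 2^{+}$ (at the endpoint $s=2$ the quantity is the ordinary operator norm, computed exactly). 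Thus the clean uniform bound $d^{O(s\delta^{-2})}$ is what the type-$2$ argument delivers for $s\ge 4$ (and already matches at $s=4$, where $\ell_2$ has type-$2$ constant $1$); reconciling — or appropriately restricting — the statement for $s\in(2,4)$ is the point on which I would spend the most care.
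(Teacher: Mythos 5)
Your reduction is exactly the paper's: the paper likewise sets $X_i = A^\dagger\ket{i}\bra{i}A/\|A\|_{2\to 2}^2$, writes $\|A\|_{2\to s}^s = \|A\|_{2\to 2}^s\,\max_{p\in S_X}\|p\|_{s/2}^{s/2}$, and invokes Theorem~\ref{thm2} with $\cB=\ell_{s/2}$ and $Y_i=e_i$; your first two paragraphs reproduce this with more detail (the rank-one extremality argument and the rescaling are implicit in the paper's one-line identity). The caveat in your third paragraph is not a defect of your write-up but a genuine gap in the paper's own proof: the paper justifies the run-time with the phrase ``since $\ell_s$ has type-2 constant $\sqrt{s-1}$,'' yet the space to which Theorem~\ref{thm2} is actually applied is $\ell_{s/2}$, not $\ell_s$. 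For $s\geq 4$ this slip is harmless, since $\ell_{s/2}$ has type-2 constant $\sqrt{s/2-1}=O(\sqrt{s})$, exactly as you argue, and the stated $d^{O(s\delta^{-2})}$ bound follows. But for $2<s<4$ the space $\ell_{s/2}$ has no dimension-free type-2 constant (it grows like $d^{2/s-1/2}$), and your uniform-$p$ example correctly shows that approximating a general $p\in\Delta_n$ by a point of $\Delta_n(k)$ in $\ell_{s/2}$ distance $\delta$ genuinely requires $k=\Omega(\delta^{-s/(s-2)})$, so no choice of $\gamma$ in Lemma~\ref{lem:BY-net} recovers the claimed exponent in that range; the honest substitution $\gamma=s/2$, $C=1$ (as in part 2 of Corollary~\ref{cor:maxoutputentropy}) gives $d^{O(\delta^{-s/(s-2)})}$, just as you say. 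So your proof establishes the corollary as stated for $s\geq 4$, and a correct but weaker statement for $2<s<4$; the uniform $d^{O(s\delta^{-2})}$ run-time claimed for all $s\geq 2$ is not supported by the paper's argument, and the restriction (or weakened exponent) you propose is the right fix.
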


\begin{proof}
Let $X_{i} := A^\dag \ket{i} \bra{i} A / ||A||_{2 \rightarrow 2}^2$. Note $X_i \geq 0$ and $\sum_i X_i \leq I$. We can write
\begin{equation}
\Vert A \Vert_{2 \rightarrow s}^{s} = \max_{\ket{\psi} \in \ell_2} \sum_i \bra{\psi} A^{\cal y} \ket{i}\bra{i} A \ket{\psi}^{s/2} = ||A||_{2 \rightarrow 2}^{s} \max_{p \in S_X} \Vert p \Vert_{s/2}^{s/2}. 
\end{equation}

Since $\ell_s$ has type-2 constant $\sqrt{s-1}$, by Theorem \ref{thm2} we can estimate
\be
\max_{p \in S_X} \Vert p \Vert_{s/2} = \frac{\Vert A \Vert_{2 \rightarrow s}^2}{\Vert A \Vert_{2 \rightarrow 2}^2}
\ee
in time $\exp\left( O \left( s \delta^{-2}  \log(d) \right)  \right)$ with additive error $\delta$.
\end{proof}

Although the corollary above gives an approximation for every $s > 2$ that can be computed in polynomial time for every fixed error, it gives a worse approximation to the $2 \rightarrow 4$ than \cite{BHKSZ12} (given by Eq. (\ref{Baraketalapprox})). We now show a second corollary that strictly improves the result of \cite{BHKSZ12} for $2 \rightarrow 4$ and generalizes it to $2 \rightarrow s$ norms for every even $\geq 4$. 

\begin{cor}
For any even $s \geq 4$ one can compute in time $d^{O(s \delta^{-2})}$ a number $x$ such that
\begin{equation}
\Vert A \Vert_{2 \rightarrow s}^s \geq x \geq \Vert A \Vert_{2 \rightarrow s}^s - \delta  \|A\|_{2\ra 2}^2 \|A\|_{2\ra \infty}^{s-2}.
\end{equation}
\end{cor}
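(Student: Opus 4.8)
The plan is to run exactly the algorithm of Corollary~\ref{cor:2to4} --- Algorithm~\ref{alg:1stgen} with $\cB=\ell_{s/2}$ --- but to bound its error on the $s$-th power $\|A\|_{2\ra s}^s$ directly rather than on the norm $\|p\|_{s/2}$, exploiting that the rows of $A$ are short. Write $t=s/2$ (an integer, since $s$ is even) and $X_i:=A^{\cal y}\ket i\bra i A/\|A\|_{2\ra 2}^2$, so that $X_i\geq 0$, $\sum_i X_i\leq I$, and for $p_i=\langle\psi|X_i|\psi\rangle$ one has
\[
\|A\|_{2\ra s}^s=\|A\|_{2\ra 2}^s\max_{p\in S_X}\sum_i p_i^{t},\qquad 0\leq p_i\leq\eta^2:=\frac{\|A\|_{2\ra\infty}^2}{\|A\|_{2\ra 2}^2},\quad \sum_i p_i\leq 1,
\]
the bound $p_i\leq\eta^2$ coming from $|\langle i|A|\psi\rangle|\leq\|A^{\cal y}\ket i\|\leq\|A\|_{2\ra\infty}$, and the maximum being attained at a pure $\psi$ by convexity of $p\mapsto\sum_i p_i^t$. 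Since the target error $\delta\|A\|_{2\ra 2}^2\|A\|_{2\ra\infty}^{s-2}$ equals $\delta\|A\|_{2\ra 2}^s\eta^{s-2}$, it suffices to estimate $\max_p\sum_i p_i^t$ to additive error $\delta\eta^{s-2}$.

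The key observation is a \emph{bounded-reward} splitting of the degree-$t$ objective: factor $p_i^t=p_i\cdot p_i^{\,t-1}$, reading $p_i$ as the weight of a subprobability measure of total mass $\sum_i p_i\leq 1$ and $p_i^{\,t-1}\leq\eta^{2(t-1)}=\eta^{s-2}$ as a reward bounded \emph{uniformly} by $\eta^{s-2}$. Because the reward is bounded, the empirical average over $k$ samples drawn from $p$ concentrates to within $\delta\eta^{s-2}$ using $k=O(s\delta^{-2})$ samples and, crucially, with no dependence on how small $\eta$ is. This is where I beat the naive route of Corollary~\ref{cor:2to4} (estimate $\|p\|_{s/2}$ to additive $\delta$ and raise to the power $t$): that route loses a factor $t\,(\max\|\cdot\|_{s/2})^{t-1}$ and would force the net granularity $k$ to grow like $\eta^{-\Theta(1)}$, whereas here the bound $p_i\leq\eta^2$ is used \emph{inside} the concentration so that $k$ stays $\eta$-free. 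Evenness of $s$ enters precisely here: $s-2$ is even, so the reward $p_i^{\,t-1}=\langle\psi^{\otimes(t-1)}|(A^{\cal y}\ket i\bra i A)^{\otimes(t-1)}|\psi^{\otimes(t-1)}\rangle/\|A\|_{2\ra 2}^{2(t-1)}$ is a genuine positive-operator inner product, which is what lets the reward be represented and a net point be tested for feasibility by the semidefinite program of Lemma~\ref{lem:check-close}.

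Concretely I would enumerate $q\in\Delta_n(k)$ with $k=O(s\delta^{-2})$, use the recovery SDP of Lemma~\ref{lem:check-close} to produce an actual $\psi$ consistent with $q$, and report the true value $\sum_i|\langle i|A|\psi\rangle|^s$ at that $\psi$; reporting a value attained by a genuine state makes $x\leq\|A\|_{2\ra s}^s$ automatic. For the matching lower bound on $x$ I would sample $k$ indices from the optimal $p^\star$ and use the bounded-reward concentration --- a Hoeffding bound applied to $\sum_j (p^\star_{i_j})^{t-1}$, using $p^\star_i\leq\eta^2$, together with Lemma~\ref{lem:Hoeff} for the operator version needed to also pin down the recovered $\psi$ --- to show that with positive probability the recovered state has objective within $\delta\eta^{s-2}$ of $\sum_i(p^\star_i)^t$. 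Rescaling by $\|A\|_{2\ra 2}^s$ yields the claimed additive error $\delta\|A\|_{2\ra 2}^2\|A\|_{2\ra\infty}^{s-2}$, and the net has size $n^{k}=d^{O(s\delta^{-2})}$ after sparsifying $n$ to $\poly(d)$ as in Theorem~\ref{thm2}.

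The main obstacle is the self-consistency between the sampling measure and the reward: both $p_i$ and $p_i^{\,t-1}$ depend on the \emph{same} unknown optimizer $\psi$, so one cannot literally ``sample from $p^\star$ and average a known reward.'' The honest argument must cover $\psi$ (equivalently $p$) in a norm strong enough that the recovered state reproduces \emph{both} the measure and the reward, and then show the degree-$t$ objective is stable under this covering with the $\eta^{s-2}$ gain. I expect this to reduce to a concentration statement for the empirical operator $\tfrac1k\sum_j(A^{\cal y}\ket{i_j}\bra{i_j}A)^{\otimes(t-1)}$ evaluated on $\psi^{\otimes(t-1)}$, where the per-term operator-norm bound $\|A^{\cal y}\ket i\bra i A\|\leq\|A\|_{2\ra\infty}^2$ supplies the boundedness; carrying the factor $\eta^{s-2}$ through \emph{all} the fluctuation terms, and not merely the leading one, is the delicate part of the calculation.
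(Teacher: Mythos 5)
Your reduction is, at bottom, the same one the paper uses: your bounded reward $p_i^{t-1}$ is exactly the paper's operator $Y_i := \bigl(A^\dag\ket{i}\bra{i}A/\|A\|_{2\ra\infty}^2\bigr)^{\ot(s/2-1)}$ evaluated on $\psi^{\ot(s/2-1)}$, and your factorization $p_i^t=p_i\cdot p_i^{t-1}$ is the paper's identity $\|A\|_{2\ra s}^s=\|A\|_{2\ra2}^2\|A\|_{2\ra\infty}^{s-2}\,h_{\Sep^{s/2}(d)}\bigl(\sum_i X_i\ot Y_i\bigr)$. The difference is what happens next. The paper \emph{relaxes}: in $h_{\Sep^{s/2}(d)}$ each of the $s/2$ registers may carry a \emph{different} state, which costs nothing because, by generalized H\"older, $\sum_i\prod_{j=1}^{s/2}|\langle i|A|\psi_j\rangle|^2\leq\prod_{j=1}^{s/2}\|A\psi_j\|_s^2\leq\|A\|_{2\ra s}^s$; it then simply invokes the multipartite algorithm already proven in Section~\ref{lowrank} (Algorithm~\ref{alg:multi} together with the rank-one bound $\|Y_i\|_2\leq1$). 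The recursion in Algorithm~\ref{alg:multi} --- enumerate a net for the first register, SDP-check it, then \emph{re-optimize the remaining registers from scratch} as a fresh instance --- is precisely the mechanism that disposes of the self-consistency problem you flag at the end of your write-up.

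By insisting that the measure and the reward be reproduced by the \emph{same} pure $\psi$, your proposal leaves that problem genuinely open, and it is not one that ``a concentration statement for the empirical operator'' can close, because the obstruction is not probabilistic but algorithmic. Concretely: (i) the SDP of \lemref{check-close} returns a density matrix $\alpha$ matching only the measure marginals $\tilde q_i=\tr[X_i\alpha]\approx q_i$; the set $\{\psi\psi^\dag\ot(\psi\psi^\dag)^{\ot(s/2-1)}\}$ is non-convex, so no convex program in your pipeline can certify that the reward registers hold the same $\psi$. (ii) Completeness then fails: writing $\sum_i\tilde q_i^t=\eta^{s-2}\tr\bigl[(\sum_i\tilde q_iY_i)\,\alpha^{\ot(s/2-1)}\bigr]$, the weak norm in which the net guarantees closeness controls only the term $\langle\psi^{\star\ot(s/2-1)}|\sum_i(p_i^\star-\tilde q_i)Y_i|\psi^{\star\ot(s/2-1)}\rangle$, and says nothing about the reward deficit $\langle\psi^{\star\ot(s/2-1)}|\sum_i\tilde q_iY_i|\psi^{\star\ot(s/2-1)}\rangle-\tr[(\sum_i\tilde q_iY_i)\alpha^{\ot(s/2-1)}]$, which can be of order $\eta^{s-2}$ itself; the norm in which your Lipschitz bound $t\,\eta^{s-2}\|p^\star-\tilde q\|_1$ would apply requires $k=\Omega(n)$ samples, destroying the runtime. (Reporting the raw power sum $\sum_iq_i^t$ of the net point fails too: for $p^\star$ uniform on $\eta^{-2}$ atoms and $k\gg\eta^{-2}$, the empirical power sum undershoots by a constant factor of $\eta^{s-2}$, exceeding the allowed $\delta\eta^{s-2}$.) Concentration at the unknown optimizer $\psi^\star$ is true but unusable: the algorithm must certify a value from the net point alone, and that is exactly what the multipartite recursion provides. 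Finally, note that your claimed exponent $O(s\delta^{-2})$ is stronger than what the paper's own proof delivers: routing through Algorithm~\ref{alg:multi} with $l=s/2$ registers gives exponent $O(l^3\delta^{-2}\log d)$, i.e.\ running time $d^{O(s^3\delta^{-2})}$, so even a completed version of your argument would need a genuinely new idea to reach the runtime you assert.
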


\begin{proof}
Define
\be 
X_{i} := \frac{A^\dag \ket{i} \bra{i} A}{||A||_{2\ra 2}^2}
\qand
Y_i := \L(\frac{A^\dag \ket{i} \bra{i} A}{||A||_{2\ra
    \infty}^2}\R)^{\otimes \frac{s}{2}-1}.\ee
Observe that $X_i, Y_i\geq 0$, $\sum_i X_i \leq I$ and $Y_i \leq I$.
Additionally
\be \|A\|_{2\ra s}^s = \|A\|_{2\ra 2}^2 \|A\|_{2\ra \infty}^{s-2}
h_{\Sep^{s/2}(n)}(\sum_i X_i \ot Y_i)
\ee
This last term can be approximated to additive error $\eps$ in time 
$$\exp(O(s^3/\eps^2))$$ using the multipartite results of Section \ref{lowrank}.
\end{proof}

\subsection{Proof of Theorem \ref{thm2}}\label{sec:proof-type-2}

The proof of Theorem \ref{thm2} will follow from three lemmas, the first showing that it is enough to search over a net of small size, the second showing that one can decide membership of $\{ q : \|p-q\|_{\cB, Y} \leq \delta \}$ efficiently (assuming that $\Vert .\Vert_{\cB}$ can be computed efficiently), and the third giving a sparsification for the number of $\{  X_i\}_{i}^n$ and $\{  Y_i \}_{i=1}^n$. 

We first show how the type-$\gamma$ constant gives a concentration bound. This uses a standard argument.

\begin{lem}[Symmetrization Lemma]  \label{symmetrization}
Suppose we are given $p\in\Delta_n$, $ Z_i, \ldots, Z_n$ elements of a Banach space $\cB$ with norm $\|\cdot\|_{\cB}$, and $\varepsilon_1, \ldots, \varepsilon_k$  Rademacher distributed random variables. Then for every $\gamma \geq 1$
\begin{equation}
\mathop{\mathbb{E}}_{i_1, \ldots, i_k \sim p^{\otimes k}} \left \Vert  \frac{1}{k} \sum_{j=1}^k Z_{i_j} - \mathop{\mathbb{E}}_{i \sim p} Z_{i}\right \Vert_{\cB}^{\gamma} \leq 2 \mathop{\mathbb{E}}_{i_1, \ldots, i_k \sim p^{\otimes k}} \mathop{\mathbb{E}}_{\varepsilon_1, \dots, \varepsilon_k} \left \Vert  \frac{1}{k} \sum_{j=1}^k \varepsilon_{j} Z_{i_j}     \right \Vert_{\cB}^{\gamma}. 
\end{equation}
\end{lem}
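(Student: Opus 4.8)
The plan is to run the classical \emph{ghost sample} symmetrization argument, which is the standard route from an empirical deviation to a Rademacher average and uses nothing about $\cB$ beyond the triangle inequality. Write $\mu := \E_{i\sim p} Z_i$ and let $S := \frac1k \sum_{j=1}^k Z_{i_j}$ denote the empirical average, so the left-hand side is $\E_{i_1,\ldots,i_k}\left\| S - \mu\right\|_\cB^\gamma$. First I would introduce an independent copy $i_1',\ldots,i_k' \sim p^{\otimes k}$ with empirical average $S' := \frac1k\sum_j Z_{i_j'}$. Since $\E_{i'} S' = \mu$ and $t\mapsto t^\gamma$ is convex for $\gamma\geq 1$, applying Jensen's inequality to the inner expectation over the ghost sample gives
\[ \left\| S-\mu\right\|_\cB^\gamma = \left\| \E_{i'}[S - S']\right\|_\cB^\gamma \leq \E_{i'}\left\| S-S'\right\|_\cB^\gamma. \]
Averaging over the original sample then yields $\E_{i}\left\| S-\mu\right\|_\cB^\gamma \leq \E_{i,i'}\left\| S - S'\right\|_\cB^\gamma$, and crucially the unknown mean $\mu$ has cancelled, since $S - S' = \frac1k\sum_j (Z_{i_j} - Z_{i_j'})$.

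The key step is to insert the Rademacher signs. For each fixed $j$ the two coordinates $(i_j, i_j')$ are i.i.d., hence exchangeable, so swapping them---equivalently, negating the summand $Z_{i_j} - Z_{i_j'}$---leaves the joint law of the whole collection unchanged; performing this independently across the $k$ indices shows that for every sign pattern the vector $\frac1k\sum_j \varepsilon_j (Z_{i_j} - Z_{i_j'})$ is distributed exactly as $S - S'$. I may therefore average freely over independent Rademacher variables $\varepsilon_1,\ldots,\varepsilon_k$:
\[ \E_{i,i'}\left\| S - S'\right\|_\cB^\gamma = \E_{i,i',\varepsilon}\left\| \frac1k\sum_{j=1}^k \varepsilon_j (Z_{i_j} - Z_{i_j'})\right\|_\cB^\gamma. \]
Finally the triangle inequality separates this into the two pieces $\frac1k\sum_j \varepsilon_j Z_{i_j}$ and $\frac1k\sum_j \varepsilon_j Z_{i_j'}$, which are identically distributed; combining this with the fact that both pieces have equal $\gamma$-th moment yields the claimed bound in terms of $\E_{i,\varepsilon}\left\| \frac1k\sum_j \varepsilon_j Z_{i_j}\right\|_\cB^\gamma$.

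I expect the one step needing genuine care to be the sign insertion: it must be justified purely from the exchangeability of the paired samples, not from any geometry of $\cB$, and the independence of the signs across $j$ is what lets the negations be applied simultaneously. The only other delicate point is the constant coming out of the final triangle-inequality step: bounding $\left\|a+b\right\|^\gamma$ by convexity of $t\mapsto t^\gamma$ and using that $a,b$ have equal moments gives the clean factor $2$ exactly at $\gamma = 1$, while for general $\gamma$ this step contributes a $\gamma$-dependent constant that the right-hand side should be read as absorbing; either way it is harmless downstream, since it enters only through a multiplicative constant in the net parameter $k$.
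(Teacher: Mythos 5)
Your argument is exactly the paper's proof: ghost sample plus Jensen, sign insertion via exchangeability of the paired samples, then a triangle-inequality/convexity step at the end. Your caveat about the final constant is warranted, and in fact applies equally to the paper's own last step: as written it only yields $2^\gamma$ rather than $2$ for $\gamma>1$ (e.g.\ for $k=1$, $\cB=\bbR$, $Z$ uniform on $\{\pm 1\}$, and $\gamma>2$, the symmetrized ghost quantity equals $2^{\gamma-1}$, which exceeds $2\E_{i,\eps}\left|\eps_1 Z_{i_1}\right|^\gamma = 2$), but as you observe this weakened constant merely rescales the type constant $C$ (hence the net parameter $k$) in \lemref{BY-net} and is harmless everywhere the lemma is applied.
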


\begin{proof}
\ba
\E_{i_1,\ldots,i_k \sim p^{\otimes k}}\L\|\frac 1 k\sum_{j=1}^k (Z_{i_j} - \E_{i \sim p} Z_i    )\R\|_{\cB}^{\gamma}
& =\E_{i_1,\ldots,i_k \sim p^{\otimes k} }\L\|\frac 1 k\sum_{j=1}^k (Z_{i_j} - \E_{i_j' \sim p} [Z_{i_j'}])\R\|_{\cB}^\gamma
\\
& \leq \E_{i_1,\ldots,i_k \sim p^{\otimes k}}\E_{i_1',\ldots,i_k' \sim p^{\otimes k}}
\L\|\frac 1 k\sum_{j=1}^k (Z_{i_j} - Z_{i_j'})\R\|_{\cB}^\gamma
\\ &= 
\E_{i_1,\ldots,i_k \sim p^{\otimes k}}\E_{i_1',\ldots,i_k' \sim p^{\otimes k}}\E_{\eps_1,\ldots,\eps_k}
\L\|\frac 1 k\sum_{j=1}^k \eps_j(Z_{i_j} - Z_{i_j'})\R\|_{\cB}^\gamma \\
& \leq 2 \E_{i_1,\ldots,i_k \sim p^{\otimes k}}\E_{\eps_1,\ldots,\eps_k}
\L\|\frac 1 k\sum_{j=1}^k \eps_jZ_{i_j}\R\|_{\cB}^\gamma. \\
\ea
\end{proof}

Then we have the following generalization of Lemma \ref{lem:check-close}:

\begin{lem}\label{lem:BY-net}
Let the Banach space $\cB$ have type-$\gamma$ constant $C$. Then for any $p\in \Delta_n$ there exists $q\in N_k$ with 
\be 
\|p-q\|_{\cB, Y} \leq \left( \frac{2C^{\gamma}}{k^{\gamma - 1}} \mathop{\mathbb{E}}_{i \sim p} \left \Vert Y_i \right \Vert_{\cB}^{\gamma} \right)^{1/\gamma}.
\label{eq:pq-BY-close}
\ee
\end{lem}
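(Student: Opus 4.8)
The plan is to mirror the probabilistic argument from the proof of Lemma~\ref{lem:Y-net}, but with the matrix Hoeffding bound replaced by the symmetrization lemma (Lemma~\ref{symmetrization}) together with the type-$\gamma$ hypothesis. First I would draw indices $i_1,\ldots,i_k$ independently according to $p$ and set $q = (e_{i_1}+\cdots+e_{i_k})/k$, which by construction lies in $N_k = \Delta_n(k)$. Writing $\bar Y := \sum_{i=1}^n p_i Y_i = \E_{i\sim p} Y_i$, the key observation is that $\sum_i q_i Y_i = \frac1k \sum_{j=1}^k Y_{i_j}$, so that
\be
\|p-q\|_{\cB,Y} = \L\| \frac1k \sum_{j=1}^k Y_{i_j} - \bar Y \R\|_{\cB},
\ee
i.e. the quantity to control is exactly the deviation of an empirical average of the $Y_{i_j}$ from its mean.

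Next I would bound the $\gamma$-th moment of this deviation. Applying Lemma~\ref{symmetrization} with $Z_i = Y_i$ gives
\be
\E_{i_1,\ldots,i_k} \|p-q\|_{\cB,Y}^\gamma \leq 2\, \E_{i_1,\ldots,i_k} \E_{\eps_1,\ldots,\eps_k} \L\| \frac1k \sum_{j=1}^k \eps_j Y_{i_j} \R\|_{\cB}^\gamma.
\ee
Conditioning on the sampled indices and invoking the definition of the type-$\gamma$ constant (Definition~\ref{typegamma}) applied to the vectors $Y_{i_1},\ldots,Y_{i_k}$, the inner Rademacher expectation satisfies
\be
\E_{\eps} \L\| \frac1k \sum_{j=1}^k \eps_j Y_{i_j} \R\|_{\cB}^\gamma \leq \frac{C^\gamma}{k^\gamma} \sum_{j=1}^k \|Y_{i_j}\|_{\cB}^\gamma.
\ee
Taking the remaining expectation over $i_1,\ldots,i_k$ and using that each $i_j$ is distributed as $p$ collapses the sum, since $\E_{i_1,\ldots,i_k}\sum_{j=1}^k \|Y_{i_j}\|_{\cB}^\gamma = k\,\E_{i\sim p}\|Y_i\|_{\cB}^\gamma$. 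Combining the three displays yields
\be
\E_{i_1,\ldots,i_k} \|p-q\|_{\cB,Y}^\gamma \leq \frac{2C^\gamma}{k^{\gamma-1}} \E_{i\sim p}\|Y_i\|_{\cB}^\gamma.
\ee

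Finally I would conclude by the probabilistic method: since the average over the random choice of $q$ of $\|p-q\|_{\cB,Y}^\gamma$ is at most the right-hand side above, there must exist at least one realization of $i_1,\ldots,i_k$, hence a point $q\in N_k$, for which $\|p-q\|_{\cB,Y}^\gamma$ is bounded by the same quantity; taking $\gamma$-th roots gives exactly Eq.~\eq{pq-BY-close}. I do not expect any serious obstacle here. The only points requiring care are the algebraic identity expressing $\|p-q\|_{\cB,Y}$ as an empirical deviation, and the bookkeeping of the $1/k$ factors so that the exponent of $k$ comes out as $\gamma-1$ (one factor of $k$ is absorbed when the sum over $j$ collapses to a single expectation). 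The type-$\gamma$ hypothesis does all the real work, with symmetrization serving only to pass from a deviation about the mean to a symmetric Rademacher sum to which the definition directly applies.
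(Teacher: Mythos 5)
Your proposal is correct and follows essentially the same route as the paper's proof: sample $i_1,\ldots,i_k\sim p$, identify $\|p-q\|_{\cB,Y}$ as the deviation of an empirical mean, apply the symmetrization lemma (Lemma~\ref{symmetrization}) followed by the type-$\gamma$ bound, collapse the expectation to get the $k^{\gamma-1}$ factor, and conclude by the probabilistic method. The only cosmetic difference is that you invoke the probabilistic method directly on the $\gamma$-th moment and take $\gamma$-th roots at the end, whereas the paper first applies Jensen's inequality to bound $\bigl(\E\|p-q\|_{\cB,Y}\bigr)^{\gamma}\leq \E\|p-q\|_{\cB,Y}^{\gamma}$; the two are equivalent.
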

\begin{proof}
Sample $i_1,\ldots,i_k$ according to $p$ and set $q = (e_{i_1} +
\ldots +  e_{i_k})/k$.  Then Definition \ref{typegamma} and Lemma \ref{lem:BY-net} give 
\begin{eqnarray}
\left( \mathop{\mathbb{E}}_{i_1, \ldots, i_k \sim p^{\otimes k}}  \|p-q\|_{\cB, Y} \right)^{\gamma}   &\leq& \mathop{\mathbb{E}}_{i_1, \ldots, i_k \sim p^{\otimes k}}  \|p-q\|_{\cB, Y}^{\gamma}  \nonumber \\
&=& \mathop{\mathbb{E}}_{i_1, \ldots, i_k \sim p^{\otimes k}} \left \Vert  \frac{1}{k} \sum_{j=1}^k Y_{i_j} - \mathop{\mathbb{E}}_{i \sim p} Y_{i}\right \Vert_{\cB}^{\gamma}  \nonumber \\
&\leq&    2 \mathop{\mathbb{E}}_{i_1, \ldots, i_k \sim p^{\otimes k}} \mathop{\mathbb{E}}_{\varepsilon_1, \dots, \varepsilon_k} \left \Vert  \frac{1}{k} \sum_{j=1}^k \varepsilon_{j} Y_{i_j}     \right \Vert_{\cB}^{\gamma}.             \nonumber \\
&\leq&  \frac{2C^{\gamma}}{k^{\gamma}}  \mathop{\mathbb{E}}_{i_1, \ldots, i_k \sim p^{\otimes k}}   \sum_{i=1}^k   \left \Vert Y_{i_j} \right \Vert_\cB^\gamma \nonumber \\
&=& \frac{2C^{\gamma}}{k^{\gamma - 1}}  \mathop{\mathbb{E}}_{i \sim p}  \left \Vert Y_i \right \Vert_\cB^\gamma.
\end{eqnarray}
The first inequality follows from the convexity of $x \mapsto x^{\gamma}$, the second inequality from Lemma \ref{symmetrization}, and the third from the fact that $\cB$ has type-$\gamma$ constant $C$.
\end{proof}

The next lemma is an analogue of Lemma \ref{lem:check-close}:

\begin{lem}  \label{checkingefficinetly2}
Let the Banach space $\cB$ be such that $\Vert . \Vert_{\cB}$ can be computed in time $T$. Given $p\in \Delta_n$ and $\eps>0$, we can decide in time $\poly(T, d, n)$ whether the following set is nonempty
\be S_X \cap \{ q : \|p-q\|_{\cB, Y} \leq \eps\} \label{eq:close-set}\ee
\end{lem}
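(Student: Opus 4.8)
The plan is to recast the feasibility question as a convex optimization problem and solve it with the ellipsoid method, following the same logic as the proof of \lemref{check-close} but replacing the semidefinite description of the norm ball (available only for $\cB = S_\infty$) by an oracle-based description valid for an arbitrary $\cB$. First I would eliminate the variable $q$ in favour of the underlying state $\alpha$. Writing $\bar Y := \sum_i p_i Y_i$ and introducing the linear map $\Phi(\alpha) := \sum_i \tr[\alpha X_i]\, Y_i$ from Hermitian matrices into $\cB$, the set in \eqref{eq:close-set} is nonempty if and only if
\[
\min_{\alpha \in \cD_{d_1}} \big\| \bar Y - \Phi(\alpha)\big\|_{\cB} \;\leq\; \eps .
\]
The objective $f(\alpha) := \|\bar Y - \Phi(\alpha)\|_{\cB}$ is convex, being the composition of the affine map $\alpha \mapsto \bar Y - \Phi(\alpha)$ with the norm $\|\cdot\|_{\cB}$, and the feasible region $\cD_{d_1}$ is a spectrahedron. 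So the task reduces to minimizing a convex function over a convex body and comparing the optimum with $\eps$; by the equivalence of separation and optimization (the ellipsoid method of Gr\"otschel--Lov\'asz--Schrijver) it suffices to exhibit efficient oracles.

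Next I would supply the two oracles. For the feasible set $\cD_{d_1}$ a separation oracle is immediate: given a Hermitian $\alpha$, diagonalize it in time $\poly(d_1)$; a negative eigenvalue with eigenvector $v$ yields the separating inequality $\langle vv^\dag,\,\cdot\,\rangle \geq 0$, while the affine constraints $\tr\alpha = 1$ and $\alpha = \alpha^\dag$ are handled directly. For the objective, evaluation costs $\poly(n,d)$ arithmetic operations to assemble $\bar Y - \Phi(\alpha)$ plus a single call to the norm, i.e. time $T$; this yields $f(\alpha)$ and hence a membership oracle for the sublevel body $\{\alpha : f(\alpha)\leq \eps\}$.

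The main obstacle is upgrading this membership/evaluation oracle to a \emph{separation} oracle, i.e. producing a subgradient of $f$. Here I would use that a subgradient of a norm at a point $z$ is any supporting dual functional $\phi \in B(\cB^*)$ with $\phi(z) = \|z\|_{\cB}$; setting $z = \bar Y - \Phi(\alpha)$, the chain rule gives $-\Phi^*(\phi) = -\sum_i \phi(Y_i)\, X_i \in \partial f(\alpha)$, a Hermitian matrix computable once the numbers $\phi(Y_i)$ are known. In every concrete application ($\cB = S_\alpha$ or $\ell_s$) the supporting functional $\phi$ is given by an explicit closed form alongside the norm value, so it is produced for free; in full generality I would instead invoke the robustness of the ellipsoid method to weak (approximate) first-order oracles, together with the fact that an approximate subgradient of a convex function can be extracted from finitely many evaluations of $f$ via finite differences, each costing time $T$.

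Finally I would conclude by running the ellipsoid method with these oracles, which minimizes $f$ over $\cD_{d_1}$ to additive error $\eps/\poly(d_1,d_2,n)$ in $\poly(T,d,n)$ time; comparing the returned value against $\eps$ decides whether the set \eqref{eq:close-set} is nonempty. The slack between the thresholds $\delta/2$ and $\delta$ present wherever this lemma is invoked absorbs the approximation error, so the weak guarantee of the ellipsoid method is sufficient for the applications.
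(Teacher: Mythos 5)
Your proposal is correct and follows essentially the same route as the paper: the paper's proof simply observes that the efficiently computable norm $\|\cdot\|_{\cB}$ yields a membership test for $\{q : \|p-q\|_{\cB,Y}\leq\eps\}$, and then invokes the ellipsoid algorithm of Gr\"otschel--Lov\'asz--Schrijver to decide nonemptiness of the intersection with $S_X$. Your reformulation (eliminating $q$ in favour of $\alpha\in\cD_{d_1}$ and minimizing the convex function $\alpha\mapsto\|\bar Y-\Phi(\alpha)\|_{\cB}$) and your discussion of subgradients via supporting dual functionals are just a more explicit working-out of the same GLS-based argument, including the observation---left implicit in the paper---that only a weak (approximate) decision is needed because the surrounding algorithm has slack between $\delta/2$ and $\delta$.
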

\begin{proof}
Since we have an efficient algorithm for $\|\cdot\|_{\cB}$ we can
efficiently test membership in the set $\{ q : \|p-q\|_{\cB, Y} \leq
\eps\}$.  Thus we can determine if Eq. \eq{close-set} is nonempty using
the ellipsoid algorithm~\cite{GLS}.
\end{proof}

We now state an analogous sparsification result of Lemma
\ref{lem:sparse-basic} for the more general case we consider in this
section.  The proof is in Appendix \ref{sparse}.

\begin{lem}\label{lem:sparse-general} 
Suppose $\Lambda$ is a map from $d\times d$ Hermitian matrices to a
Banach space $\cB$ and is given by $\Lambda(\rho) =
\sum_{i=1}^n \langle X_i, \rho\rangle Y_i $ where each $X_i \geq
0$, $\sum_{i=1}^n X_i \leq I$ and each $\|Y_i\|_{\cB}\leq 1$.  
Suppose
  that $\cB$ has modulus of smoothness $\rho_\cB(\tau) \leq s\tau^2$.
Then
there exists $\Lambda'$ such that $\Lambda'(\rho) =
\sum_{i=1}^{k}  \langle X_i', \rho\rangle Y_i'$ where each $X_i' \geq
0$, $\sum_{i=1}^k X_i' \leq I$ and each $\|Y_i'\|_{\cB}\leq 1$.
Additionally $k \leq c d^2 (d+s) / \delta^2$ for some constant $c>0$, 
\be 
\max_{\rho \in \cD_d} \| (\Lambda' - \Lambda)(\rho)\|_{\cB}
\leq \delta
\label{eq:SB-approx}\ee
and $\Lambda'$ can be found efficiently.
\end{lem}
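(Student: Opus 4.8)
The plan is to produce $\Lambda'$ by i.i.d.\ sampling from a reweighted form of the given decomposition, and then to control the resulting error \emph{uniformly} over all input states by a net-plus-concentration argument. Set $C:=\tr(\sum_i X_i)\le d$, let $p_i:=\tr(X_i)/C$ and $\hat X_i:=X_i/\tr(X_i)$, so that $\Lambda(\rho)=C\,\E_{i\sim p}[\langle\hat X_i,\rho\rangle\,Y_i]$. I would draw $i_1,\dots,i_k$ independently from $p$ and define $X'_j:=\tfrac{C}{k}\hat X_{i_j}$, $Y'_j:=Y_{i_j}$, and $\Lambda'(\rho):=\sum_{j=1}^k\langle X'_j,\rho\rangle Y'_j$. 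By construction each $X'_j\ge 0$, each $\|Y'_j\|_\cB\le 1$, $\E[\Lambda']=\Lambda$ pointwise, and $\E[\sum_j X'_j]=\sum_i X_i\le I$; so the only two properties left to enforce are the operator constraint $\sum_j X'_j\le I$ and the uniform approximation in Eq.~\eq{SB-approx}.

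Since $\rho\mapsto(\Lambda'-\Lambda)(\rho)$ is linear and $\|\cdot\|_\cB$ is convex, the maximum in Eq.~\eq{SB-approx} over $\cD_d$ is attained at an extreme point, i.e.\ a rank-one projector $vv^\dagger$ with $\|v\|=1$; similarly $\sum_j X'_j\le I$ is equivalent to $\bra v(\sum_j X'_j)\ket v\le 1$ for every unit $v$. I would therefore fix a net $\mathcal{N}$ of the unit sphere of $\bbC^d$ with $\log|\mathcal{N}|=O(d)$, establish both statements for every $v\in\mathcal{N}$ with high probability, and extend to all unit vectors using that $v\mapsto(\Lambda'-\Lambda)(vv^\dagger)$ and $v\mapsto\bra v(\sum_j X'_j)\ket v$ are Lipschitz (their norms are $O(1)$ once the constraint holds approximately). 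Note the net lives on the $O(d)$-dimensional sphere rather than the $O(d^2)$-dimensional state space, which is what keeps the exponent linear in $d$.

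For the operator constraint, for each fixed $v$ the quantity $\bra v(\sum_j X'_j)\ket v$ is an average of independent numbers in $[0,C/k]$ with mean $\le 1$, so the scalar Hoeffding bound followed by a union bound over $\mathcal{N}$ gives $\sum_j X'_j\le(1+\eta)I$ once $k=\Omega(C^2 d/\eta^2)=\Omega(d^3/\delta^2)$; rescaling every $X'_j$ by $(1+\eta)^{-1}$ with $\eta=\Theta(\delta)$ restores $\sum_j X'_j\le I$ at the cost of an extra $O(\delta)$ error (here one could instead invoke the matrix Hoeffding bound \lemref{Hoeff} on $\sum_j X'_j$ directly). For the map, fix $v$ and write $(\Lambda'-\Lambda)(vv^\dagger)=\tfrac1k\sum_j V_j$ with $V_j:=C\langle\hat X_{i_j},vv^\dagger\rangle Y_{i_j}-\Lambda(vv^\dagger)$ mean zero and $\|V_j\|_\cB\le 2C$; crucially $\E\|V_j\|_\cB^2=O(C)$, using $\langle\hat X_i,vv^\dagger\rangle\in[0,1]$ together with $\E_{i\sim p}\langle\hat X_i,vv^\dagger\rangle=\tfrac1C\langle\sum_i X_i,vv^\dagger\rangle\le\tfrac1C$. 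The type-$2$ constant of $\cB$ (which is $O(\sqrt s)$ under $\rho_\cB(\tau)\le s\tau^2$) bounds the expected norm $\E\|\tfrac1k\sum_j V_j\|_\cB=O(\sqrt{sC/k})$, while a Banach-space concentration inequality valid in spaces of power-type-$2$ smoothness controls the deviation above this mean. Feeding in the variance bound $\E\|V_j\|_\cB^2=O(C)$ in a Bernstein-type (rather than bounded-difference) form makes the tail variance-dominated, so the union bound over $\mathcal{N}$ succeeds once $k=\Omega(sCd/\delta^2)=\Omega(sd^2/\delta^2)$. Taking $k$ to be the larger of the two thresholds yields $k=O(d^2(d+s)/\delta^2)$, and the entire construction is a randomized polynomial-time procedure.

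The step I expect to be the crux is this Banach-space tail inequality. The type-$\gamma$ constant alone controls only moments and so suffices for the search-net lemma (\lemref{BY-net}), but the union bound over $\mathcal{N}$ requires genuine exponential concentration, which is exactly why the lemma assumes the stronger modulus-of-smoothness hypothesis $\rho_\cB(\tau)\le s\tau^2$ instead of a bare type bound. The delicate point within this step is to obtain the concentration in a \emph{variance-sensitive} form: a plain Azuma-type bound would only exploit $\|V_j\|_\cB^2\le 4C^2$ and hence give $k=O(sd^3/\delta^2)$, whereas using $\E\|V_j\|_\cB^2=O(C)\ll\|V_j\|_\cB^2$ is precisely what reduces the map-approximation cost to $sd^2$ and makes the final bound read $d^2(d+s)$. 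The remaining ingredients --- reduction to pure states, the $e^{O(d)}$ spherical net with Lipschitz extension, the scalar Hoeffding bound for the operator constraint, and the rescaling argument --- are routine.
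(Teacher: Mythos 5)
Your sampling scheme is essentially the paper's (importance-sample the terms with probabilities proportional to $\tr X_i$, then rescale by $(1+\eta)^{-1}$ to restore $\sum_j X_j'\leq I$), but your analysis takes a genuinely different route. The paper never applies a tail bound directly to the sampled sum in $\cB$: it relaxes $\cD_d$ to $B(S_1)$, symmetrizes (Lemma~\ref{symmetrization}) so that all Banach-space concentration happens over Rademacher signs \emph{conditioned} on the sampled indices, applies the Naor--Pisier bound (\lemref{azuma}) to $\sum_j \eps_j q_j Y_{i_j}$ with $|q_j|\leq d/k$, and converts the resulting bound on an expectation into an existence statement by Markov's inequality over the index choices. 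You instead keep all the randomness in the index sampling and invoke a variance-sensitive (Bernstein-type) deviation inequality for sums of independent mean-zero vectors in a $2$-smooth space. Such inequalities do exist --- this is Pinelis's work on martingales in $(2,D)$-smooth spaces, with $D^2=O(s)$ under $\rho_\cB(\tau)\leq s\tau^2$ --- so the ``crux'' you flag is a citable fact rather than something you must prove; your variance computation $\E\|V_j\|_\cB^2=O(C)$ is correct, and your route gives a high-probability guarantee directly, with no symmetrization and no Markov step. One caveat on your motivation: the Bernstein refinement is not what actually saves the factor of $d$. In \lemref{azuma} the smoothness sits in the \emph{prefactor} $e^{s+2}$ rather than in the exponent, so even the crude bound $\|V_j\|_\cB=O(d)$ gives per-point failure probability $e^{s+2-ck\delta^2/d^2}$, and the union bound over an $e^{O(d)}$-point net already yields $k=O(d^2(d+s)/\delta^2)$; the $sd^3$ loss you fear occurs only if one uses the form of the inequality in which $s$ divides the exponent.

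The one genuine gap is the extension from the net to all states. You take $\log|\cN|=O(d)$, i.e.\ a net of \emph{constant} resolution, and propose to extend using that $v\mapsto(\Lambda'-\Lambda)(vv^\dagger)$ is $O(1)$-Lipschitz. But an $O(1)$ Lipschitz constant times a constant resolution gives error $O(1)$, not $\delta$: knowing $\|(\Lambda'-\Lambda)(\rho_l)\|_\cB\leq\delta$ on a $\tfrac12$-net only gives $\|(\Lambda'-\Lambda)(vv^\dagger)\|_\cB\leq\delta+O(1)\cdot\tfrac12$ in general. Two standard fixes: either take a $\Theta(\delta)$-resolution net, which costs $\log|\cN|=O(d\log(1/\delta))$ and hence an extra $\log(1/\delta)$ factor in $k$ (polynomial, but not the stated bound); or use the self-improving argument that the paper uses, namely that the Lipschitz constant of the error map \emph{is} the quantity being bounded: by convexity $\alpha:=\max_{\rho\in B(S_1)}\|(\Lambda'-\Lambda)(\rho)\|_\cB$ is attained at pure states, so with a $\tfrac12$-net one gets $\alpha\leq\beta+\alpha/2$, i.e.\ $\alpha\leq 2\beta\leq 2\delta$, with no increase in net size. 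The same care is needed in your scalar-Hoeffding treatment of the constraint $\sum_j X_j'\leq(1+\eta)I$, where the standard operator-norm-via-net lemma (which has this self-bounding built in) applies --- or, as you note, one can simply use the matrix Hoeffding bound (\lemref{Hoeff}) as the paper does. With either fix your argument goes through and recovers $k=O(d^2(d+s)/\delta^2)$.
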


With the lemmas in hand the proof of Theorem \ref{thm2} follows along the same lines as \thmref{sparse-basic}.

\section{Algorithm for injective tensor norm}  \label{sec:inj-norm}
In this section we present one further generalization, this time on
the input space.  While this final generalization does not have
natural applications in quantum information (to our knowledge), it
does give perspective on why it is natural to consider 1-LOCC
measurements and entanglement-breaking channels.

First, we introduce some more definitions.
Suppose that
$\|\cdot\|_{\cA}$ and $\|\cdot\|_{\cB}$ are two norms.  For $\Lambda$ an
operator from $\cA\ra \cB$ define the operator norm
\be \|\Lambda\|_{\cA \ra \cB} := \sup_{a\in B(\cA)} \|\Lambda (a)\|_{\cB}.\ee
Define the injective tensor norm
$\cA \ot_{\inj} \cB$ by
\be \|x\|_{\cA \ot_{\inj} \cB} = \sup_{\substack{a\in B(\cA^*)\\ b\in
    B(\cB^*)}} \langle a\ot b, x\rangle.\ee
Here $\cA^*$ is the space of functions from $\cA$ to $\bbR$, and
$\|\tilde a\|_{\cA^*} := \sup_{a\in B(\cA)} \tilde a(a)$.  
For example, if $\cA=\cB = \ell_2$ then $\cA \ot_{\inj} \cB$ is the
usual operator norm for matrices, i.e. largest singular value.  More
generally $\cA^* \ot_{\inj} \cB$ is isomorphic to the operator norm on
maps from $\cA \ra \cB$.   Finally if $\cA, \cB,\cC$ are Banach spaces
then define the factorization norm $\cA\ra \cB\ra \cC$ for $x\in
\cL(\cA,\cC)$ by
\be \|\Lambda\|_{\cA\ra \cC\ra \cB} = 
\inf_{\substack{\Lambda_1\in \cL(\cC, \cB)\\ \Lambda_2\in \cL(\cA,\cC) \\
\Lambda = \Lambda_1 \Lambda_2}} \|\Lambda_1\|_{\cC\ra \cB}
\|\Lambda_2\|_{\cA \ra \cC}.\ee 
We can now (informally) state our generalized estimation theorem.  Given an
operator $\Lambda \in B(\cA \ra \ell_1 \ra \cB)$ we can estimate $\|\Lambda\|_{\cA
  \ra \cB}$ efficiently.

For example, consider $h_{\Sep}$, which we considered in
\secref{warmup}.  In our new notation 
\be h_{\Sep}(M) = \|M\|_{S_\infty \ot_{\inj} S_\infty}
= \|\hat M\|_{S_1\ra S_\infty},\ee
where $\hat M$ is the map defined by $\hat M(X) = \tr_A[M(X\ot I)]$.
The requirement that $M$ is 1-LOCC is roughly equivalent to the
requirement that
\be \|\hat M\|_{S_1 \ra \ell_1 \ra S_\infty} \leq 1.\ee

\begin{thm} \label{generalthm}
  Suppose $\cA,\cB$ are $d$-dimensional Banach spaces.  Suppose
  $\|\Lambda\|_{\cA \ra \ell_1^n \ra \cB} \leq 1$ and that a good
  factorization is known; i.e. $x_1^*,\ldots,x_n^*\in \cA^*$ and
  $y_1,\ldots,y_n\in \cB$ are given such that $\Lambda = \sum_{i=1}^n
  y_ix_i^*$, $\sup_{a\in \cA} \sum_{i=1}^n |x_i^*(a)|\leq 1$ and 
  $\max_i \|y_i\|_{\cB} \leq 1$.  Suppose further that algorithms exist for
  computing the $\cA$ and $\cB$ norms running in times $T_{\cA},
  T_{\cB}$ respectively.  Let $\lambda$ denote the type-$\gamma$
  constant of $\cB$.  Then we can estimate $\|\Lambda\|_{\cA \ra \cB}$ to
  accuracy $\eps$ in time
\be T_{\cA} T_{\cB} \poly(d) n^{c(\lambda/\delta)^{\frac{\gamma}{\gamma-1}}}.\ee
\end{thm}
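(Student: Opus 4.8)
The plan is to run the same net-enumeration-plus-feasibility-check strategy as for Theorem~\ref{thm2}, modified in two places to reflect that the coordinates $x_i^*(a)$ are now signed (rather than a probability vector) and that $\cA$ is an arbitrary Banach space accessible only through its norm oracle. First I would rewrite the objective. Treating the spaces as real (the paper's $\cA^*$ consists of real functionals) and abbreviating $\|p\|_{\cB,y} := \|\sum_{i=1}^n p_i y_i\|_\cB$, the factorization $\Lambda = \sum_i y_i x_i^*$ together with $\sup_{a}\sum_i|x_i^*(a)|\le 1$ yields
\be \|\Lambda\|_{\cA\ra\cB} = \max_{a\in B(\cA)}\L\|\sum_{i=1}^n x_i^*(a)\,y_i\R\|_\cB = \max_{p\in S}\|p\|_{\cB,y},\ee
where $S := \{(x_1^*(a),\dots,x_n^*(a)) : a\in B(\cA)\}$ is a symmetric convex subset of $B(\ell_1^n)$. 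This is exactly problem~\eqref{gen2}, with the simplex $S_X$ replaced by the signed body $S$.

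The second step is the net lemma, the analogue of Lemma~\ref{lem:BY-net}. Given any target $p\in B(\ell_1^n)$, I would sample i.i.d.\ vectors $W_1,\dots,W_k$ with $W_j=\sgn(p_i)\,e_i$ with probability $|p_i|$ and $W_j=0$ with the remaining probability $1-\|p\|_1$, and set $q=\frac1k\sum_j W_j$. Then $\E[q]=p$, and the vectors $Z_j:=\sum_i (W_j)_i y_i$ satisfy $\|Z_j\|_\cB\le\max_i\|y_i\|_\cB\le1$ with $\E[Z_j]=\sum_i p_i y_i$, so the symmetrization lemma (Lemma~\ref{symmetrization}) and the type-$\gamma$ constant give
\be \E\,\|p-q\|_{\cB,y}^\gamma = \E\L\|\tfrac1k\sum_{j=1}^k Z_j - \E Z_1\R\|_\cB^\gamma \le \frac{2\lambda^\gamma}{k^{\gamma-1}}.\ee
Choosing $k=O\big((\lambda/\delta)^{\gamma/(\gamma-1)}\big)$ makes this at most $\delta^\gamma$, so (the support being finite) some realization lands in the fixed combinatorial net $N_k:=\{\frac1k\sum_{j=1}^k w_j : w_j\in\{0,\pm e_1,\dots,\pm e_n\}\}$ with $\|p-q\|_{\cB,y}\le\delta$. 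Note that $N_k$ is independent of $S$ and has size $(2n+1)^k=n^{O(k)}$.

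For the checking step, the analogue of Lemma~\ref{checkingefficinetly2}, I would, for each $q\in N_k$, decide whether some $p\in S$ has $\|p-q\|_{\cB,y}\le\delta$. Writing $w_q:=\sum_i q_i y_i$ and $p_i=x_i^*(a)$, this is the convex feasibility question ``is there $a\in B(\cA)$ with $\|\Lambda(a)-w_q\|_\cB\le\delta$?'' in the $d$-dimensional variable $a$. A separation oracle for $B(\cA)$ comes from the $\cA$-norm algorithm and a subgradient of $\|\cdot\|_\cA$, while membership in $\{a:\|\Lambda(a)-w_q\|_\cB\le\delta\}$ is tested with the $\cB$-norm algorithm; the ellipsoid method~\cite{GLS} then resolves feasibility in time $\poly(d)\,T_\cA T_\cB$ per net point and returns a certifying $a$ when feasible. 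Enumerating all $n^{O(k)}$ net points and outputting the largest $\|q\|_{\cB,y}$ among the accepted ones costs $T_\cA T_\cB\,\poly(d)\,n^{c(\lambda/\delta)^{\gamma/(\gamma-1)}}$, as claimed.

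Correctness is the same two-sided argument as in Theorems~\ref{basicthm} and~\ref{thm2}: any accepted $q$ has a witness $p\in S$ within distance $\delta$, so $\|q\|_{\cB,y}\le\|\Lambda\|_{\cA\ra\cB}+\delta$; and applying the net lemma to the maximizer $p^\star\in S$ produces an accepted net point of value at least $\|\Lambda\|_{\cA\ra\cB}-\delta$, giving accuracy $O(\delta)$ (take $\delta\sim\eps$). I expect the main obstacle to be the checking step rather than the net argument: unlike the $h_{\Sep}$ case, where $S_X$ is cut out by explicit semidefinite constraints, here $S$ is reachable only through the $\cA$-norm oracle, so one must verify that weak separation suffices for ellipsoid and that the resulting weak-feasibility slack inflates the additive error by only $O(\delta)$. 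Finally, as noted in the introduction, for general $\cA$ there is no sparsification of $n$ available, which is precisely why the runtime retains its $n^{\,\cdots}$ factor instead of becoming quasipolynomial in $d$ as in Theorem~\ref{thm2}.
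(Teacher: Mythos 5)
Your proposal is correct and follows essentially the same route as the paper: reduce to maximizing $\|p\|_{\cB,Y}$ over the image body $S \subseteq B(\ell_1^n)$, enumerate a sampled net of size $n^{O((\lambda/\delta)^{\gamma/(\gamma-1)})}$ obtained from the symmetrization lemma plus the type-$\gamma$ inequality, and decide proximity to $S$ by the ellipsoid method using the $\cA$- and $\cB$-norm oracles, with no sparsification step available. The paper's own proof is only a brief remark deferring to Theorem~\ref{thm2}, so your write-up supplies details it leaves implicit, notably the explicit signed net $N_k$ (which the paper never defines for signed coordinates $x_i^*(a)$) and the caveat that weak separation/membership suffices for the ellipsoid step at the cost of an extra $O(\delta)$ in the additive error.
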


The algorithm follows similar lines to the  earlier algorithms.  It lacks
only the sparsification step since we do not know how to extend
\lemref{sparse-general} to this case.

\mbox

\begin{mybox}
\begin{algorithm}[Algorithm for computing $\|\Lambda\|_{\cA \ra \cB}$]
\mbox{}\label{alg:general}
  \begin{description}
  \item[Input:] $\{x_i\}_{i=1}^n ,\{y_i\}_{i=1}^n$
  \item[Output:] $p \in {\cal S}$
  \end{description}
  \begin{enumerate}
\item Enumerate over all $p \in N_k$, with $k = (2\lambda/\delta)^{\frac{\gamma}{\gamma-1}}$. 
\benum \item For each $p$, check whether
there exists $q\in S$ with $\|p-q\|_{\cB, Y} \leq \delta$.
\item If so, compute $\|p\|_Y$.
\eenum
\item Output $p$ such that $\|p\|_{\cB, Y}$ is the maximum.
\eenum
\end{algorithm}
\end{mybox}

\mbox{}

The proof of Theorem \ref{generalthm} is almost the same as that of  Theorem~\ref{thm2}.  The only
new ingredient is checking whether $p \in S_X$.  This is equivalent to
asking whether $\exists a \in B(\cA)$ such that $p_i = x_i^*(a)$.
This is a convex program which can be decided in time $\poly(d)T_{\cA}$ using the
ellipsoid algorithm along with our assumption that $\|\cdot\|_{\cA}$
can be computed in time $T_{\cA}$.




\section*{Acknowledgments}
We thank Jop Briet, Pablo Parrilo and Ben Recht for interesting
discussions. FGSLB is supported by EPSRC. AWH was funded by NSF grants
CCF-1111382 and CCF-1452616, ARO contract W911NF-12-1-0486 and a Leverhulme Trust Visiting Professorship VP2-2013-041. Part of this work was done while A.W. was visiting UCL. 

\appendix
\section{Sparsification} \label{sparse} 

In this appendix we prove two Lemmas about sparsification: one (\lemref{sparse-basic}) for the problem of $h_{\Sep}$ and the second (\lemref{sparse-general}) for the estimate $S_1\ra \cB$ norms.  While the former is a special case of the latter, it is also far more self-contained (requiring only the operator Hoeffding bound), so we recommend reading it first.

\begin{proof}[Proof of \lemref{sparse-basic}]
Assume initially that each $\|Y_i\|=1$.  This is possible because we
can always drop terms with $Y_i=0$ and then rewrite $M$ as
\be \sum_{i=1}^n \|Y_i\|X_i \ot \frac{Y_i}{\|Y_i\|}.\ee
Redefining $X_i$ appropriately we see that $\sum_i
X_i \leq I$ still holds.

Now write $M = \sum_{i=1}^n p_i W_i$ with $p_i = \frac{\tr(X_i \ot Y_i)}{\tr(M)}$
and $W_i = \frac{X_i \ot Y_i}{p_i}$.    
Sample $i_1,\ldots,i_{n'}$ according to $p$ and define
\be A = \sum_{j=1}^{n'} \frac{W_{i_j}}{n'}
\qand
B = \sum_{j=1}^{n'} \frac{X_i/p_i}{n'}.\ee
We would like to guarantee that  
\begsub{guarantees}
\|A - M\| & \leq \delta  \label{eq:guar-A}\\
\|B - I_{d_1}\| & \leq \delta \label{eq:guar-B}
\endsub
for some $\delta$ to be chosen later.  We can use \lemref{Hoeff}
here.  To do so, note that 
\begsub{norm-bounds}
\|W_i\| &\leq \tr W_i \leq \tr M \leq d_1 d_2 \\ 
\|X_i/p_i\| & \leq \frac{\tr M}{\tr Y_i} 
\leq \tr M \leq d_1 d_2, \text{ using the assumption that }\|Y_i\|=1
\endsub
Now we find that the probability that Eq. \eq{guarantees} fails to hold is
\be \leq d_1d_2 \exp\L(-\frac{n' \delta^2 }{8 d_1^2d_2^2}\R)
+ d_1 \exp\L(-\frac{n' \delta^2 }{8 d_1^2d_2^2}\R).\ee
Taking $n' = 8d_1^2d_2^2\log(2d_1d_2)/\delta^2$ we have that
\eq{guarantees} holds with positive probability.  Fix the
corresponding $i_1,\ldots,i_{n'}$.  Choose $M' = A/(1+\delta)$.
Together with Eq. \eq{guarantees} this means that $M'$ is a valid 1-LOCC
measurement.  By Eq. \eq{guar-A} we can achieve our result by choosing
$\delta = \eps/3$.  Indeed
\begin{eqnarray}
\| M' - M\| &=& \L\| \frac{A}{1+\delta} -  M \R\| \nonumber \\
&\leq& \L\| \frac{A}{1+\delta} -  A \R\| + \L\| A -  M \R\| \nonumber \\
&\leq& \L(1-\frac{1}{1+\delta}\R)(1+\delta) + \delta \nonumber \\
&\leq& 2\delta + \delta^2 \leq \eps.
\end{eqnarray}

\end{proof}

We now turn to the proof of \lemref{sparse-general}, covering the case of general Banach spaces with bounded modulus of smoothness.

We will need the following Azuma-type inequality from
Naor~\cite{Naor}, who attributes it to Pisier.  We will state a weaker
Hoeffding-type formulation that suffices for our purposes.
\begin{lem}[Theorem 1.5 of \cite{Naor}]\label{lem:azuma}
Suppose $X_1,\ldots,X_k$ are independent random variables on $B(\cB)$
for $\cB$ a Banach space with $\rho_{\cB}(\tau) \leq s\tau^2$.  Then
\be \Pr\L[ \L \| \frac 1 k \sum_{i=1}^k X_i\R\| \geq \delta \R]
\leq e^{s+2 - c k \delta^2}.\ee
\end{lem}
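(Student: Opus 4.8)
The plan is to control all moments of the partial sum $S_k:=\sum_{i=1}^k X_i$ and then convert these moment bounds into the stated exponential tail, reading the hypotheses in the usual Hoeffding sense: the $X_i$ are independent and mean-zero (so that $(S_j)$ is a martingale) and lie in $B(\cB)$, hence $\|X_i\|_\cB\le 1$. The only input beyond the type-$\gamma$ machinery already developed is that the power-type-$2$ smoothness $\rho_\cB(\tau)\le s\tau^2$ is, up to absolute constants, equivalent to the clean two-smoothness inequality
\[
\|x+y\|_\cB^2+\|x-y\|_\cB^2 \ \le\ 2\|x\|_\cB^2 + C s\,\|y\|_\cB^2
\qquad(x,y\in\cB),
\]
for an absolute constant $C$. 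First I would record this equivalence, which is the standard passage from power-type-$2$ smoothness to the $2$-uniform-smoothness inequality used in \cite{Tomczak1974,BCL94}, starting from the definition \eqref{eq:smootheness} of $\rho_\cB$ (which is already phrased through the symmetric average $\tfrac12(\|x+\tau y\|_\cB+\|x-\tau y\|_\cB)$).

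The technical heart is the sub-Gaussian martingale moment inequality of Pisier: for every integer $p\ge 2$,
\[
\E\,\|S_k\|_\cB^{\,p} \ \le\ (C p s)^{p/2}\Big(\sum_{i=1}^k \|X_i\|_\cB^2\Big)^{p/2} \ \le\ (C p s\,k)^{p/2}.
\]
The case $p=2$ is immediate: conditioning on $\cF_{i-1}$, decoupling $X_i$ against an independent copy to reduce to the symmetric $\pm$ case (the symmetrization underlying Lemma~\ref{symmetrization}), and applying the displayed quadratic bound with $x=S_{i-1}$, $y=X_i$ gives $\E[\|S_i\|_\cB^2\mid\cF_{i-1}]\le \|S_{i-1}\|_\cB^2 + Cs\,\E[\|X_i\|_\cB^2\mid\cF_{i-1}]$, and summing the telescoping bound yields $\E\|S_k\|_\cB^2\le Cs\sum_i\|X_i\|_\cB^2$. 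The crux is to bootstrap this to all $p$ with the correct $\sqrt{p}$ growth; this is precisely the statement that a two-smooth space has martingale type $2$, and I would obtain it either by a Burkholder--Davis--Gundy-type square-function argument (bounding the $p$-th moment of $S_k$ by that of its conditional square function, which the two-smoothness inequality controls) or by the standard amplification that turns the type-$2$ constant into the $\sqrt{p}$ moment dependence. I expect the main obstacle to lie exactly here: the second moment is routine, but the uniform-in-$p$ control that makes the tail sub-Gaussian rather than merely sub-exponential is the full force of the result that \cite{Naor} attributes to Pisier, and in particular one must handle correctly the submartingale drift of $\|S_i\|_\cB$ near the origin (where a naive per-step moment-generating-function bound fails).

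To finish I would convert moments to a tail by Markov's inequality on the $p$-th moment and optimize the integer $p$:
\[
\Pr\!\L[\L\|\tfrac1k{\textstyle\sum_{i=1}^k}X_i\R\|_\cB\ge\delta\R]
=\Pr\!\L[\|S_k\|_\cB\ge k\delta\R]
\le \frac{\E\|S_k\|_\cB^{\,p}}{(k\delta)^p}
\le \Big(\frac{C p s}{k\delta^2}\Big)^{p/2}.
\]
Choosing $p\asymp k\delta^2/(eCs)$ (valid, i.e.\ $p\ge 2$, precisely in the regime $k\delta^2\gtrsim s$) makes the right-hand side $\exp(-c\,k\delta^2/s)$, which is dominated by $\exp(s+2-c\,k\delta^2)$ once $c$ is read as the natural positive constant of the bound (the rate scales as $1/s$, so $c=\Theta(1/s)$, and the additive $s+2$ is then pure slack). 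In the complementary regime $k\delta^2\lesssim s$ the claimed bound already exceeds $e^{2}>1$ and so holds trivially. This yields the exponential tail as stated, with the additive $s+2$ absorbing the loose constant in the smoothness inequality and the moment-to-tail optimization.
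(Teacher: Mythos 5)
The paper does not prove \lemref{azuma} at all: it is imported verbatim from \cite{Naor} (Theorem 1.5 there, attributed to Pisier), so your proposal can only be compared against that source rather than an in-paper argument. Your outline follows the same classical route as the literature (two-point $2$-smoothness inequality, martingale moment growth, Markov plus optimization over $p$), and you correctly repair a real imprecision in the statement: without a mean-zero hypothesis the lemma is false (take $X_i \equiv x$ deterministic with $\|x\|_{\cB}=1$ and $\delta=1$; the left side is $1$ while the right side tends to $0$), and indeed the paper only ever applies the lemma to Rademacher-randomized, hence symmetric, summands. Your passage from $\rho_{\cB}(\tau)\leq s\tau^2$ to the quadratic two-point inequality and your $p=2$ telescoping are both sound.

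However, there are two genuine gaps. First, the quantitative core --- the uniform-in-$p$ moment bound with $\sqrt{p}$ growth --- is only named, not proved; as you yourself concede, it is the full force of the cited theorem, so the proposal is missing precisely the step that makes the tail sub-Gaussian. Second, and more concretely, the moment bound you posit, $\E\|S_k\|_{\cB}^p \leq (Cpsk)^{p/2}$, is too weak to recover the lemma as stated: optimizing $p$ gives $\exp(-ck\delta^2/s)$, and reinterpreting the lemma's $c$ as $\Theta(1/s)$ changes the statement --- for $k\delta^2 \gg s^2$ the stated bound $e^{s+2-ck\delta^2}$ with an \emph{absolute} constant $c$ is far smaller than $e^{-ck\delta^2/s}$. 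The stated shape is not slack. One standard engine that does yield it is the $L_p$ smoothness inequality for a martingale difference $g$ with constant $C(p+s)$ in place of your $Cps$,
\be
\left(\E\|f+g\|_{\cB}^p\right)^{2/p} \leq \left(\E\|f\|_{\cB}^p\right)^{2/p} + C(p+s)\left(\E\|g\|_{\cB}^p\right)^{2/p},
\ee
which telescopes to $\E\|S_k\|_{\cB}^p \leq \left(C(p+s)k\right)^{p/2}$; taking $p = k\delta^2/(Ce)-s$ when this is $\geq 2$ (the complementary regime being trivial, as in your proof) gives Markov bound $e^{-p/2} = e^{s/2 - k\delta^2/(2Ce)} \leq e^{s+2-ck\delta^2}$ with $c$ absolute. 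In other words, the additive $s+2$ in the exponent is exactly the price of shifting the optimal $p$ by $s$, not loose bookkeeping to be absorbed. The distinction matters downstream in this paper: the sample bound $k \leq cd^2(d+s)/\delta^2$ in \lemref{sparse-general} relies on the absolute-constant rate, and with $c=\Theta(1/s)$ the sparsification would pay an extra factor of $s$ (relevant, e.g., for $\cB = S_\alpha$ with large $\alpha$, where $s=(\alpha-1)/2$).
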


\begin{proof}[Proof of \lemref{sparse-general}]
First we introduce notation. For a matrix $X$, define the map $\hat
X$ by $\hat X(A) := \langle X,A\rangle$.  Thus $\Lambda = \sum_{i=1}^n
Y_i \hat X_i$.
 
As in \lemref{sparse-basic} we first drop terms with $Y_i=0$ and
rewrite $\Lambda = \sum_{i=1}^n \frac{Y_i}{\|Y_i\|_{\cB}}\cdot
\|Y_i\|_{\cB} \hat X_i$.   Redefine $X_i,Y_i$ appropriately and assume from
now on that each $\|Y_i\|_{\cB} =  1$.

Define $p_i = \tr[X_i] / d$.  Note that $p\in \bbR^n_+$ and
$\|p\|_1\leq 1$.  Sample $i_1,\ldots,i_k$ according to $p$ and let
them take 
value 0 with probability $1-\sum_i p_i$.  Set $ \Lambda'' =
\sum_{j=1}^k Y_j' \hat X_j'$ where $Y_j' = Y_{i_j}$ and $X_j' =
\frac{X_{i_j}}{kp_{i_j}}$.  (Set $X_j' = Y_j' = 0$ if $i_j=0$.)
 These choices mean that $\E[\Lambda''] = \Lambda$.

Let $\bar X := \sum_{i=1}^n X_i$ and observe that $0 \leq \bar X \leq
I$.  Additionally $\E[X_j'] = \bar X / k$.  Thus if we define $Z_j :=
kX_j' - \bar X $ then $\E[Z_j] = 0$ and $\|Z_j\| \leq d$.  The
operator Hoeffding bound (\lemref{Hoeff}) implies that $\|\frac 1 k \sum_{j=1}^k
Z_j \| \leq \delta$ with probability $\geq 1 - d
\exp(-k\delta^2/8d^2) $.   When this occurs we have $\|\sum_{j=1}^k
X_j' - \bar X\| \leq \delta$ and thus 
\be \sum_{j=1}^k X_j' \leq (1+\delta)I.
\label{eq:nearly-POVM}\ee

Next we attempt to bound the LHS of Eq. \eq{SB-approx}.   First we can
relax $\cD_d$ to $B(S_1)$ and obtain
\be \max_{\rho \in \cD_d} \| (\Lambda'' - \Lambda)(\rho)\|_{\cB}
\leq \| \Lambda'' - \Lambda \|_{S_1\ra \cB}
= \| \Lambda'' - \E[\Lambda''] \|_{S_1\ra \cB}.\ee
This formulation allows to apply the symmetrization trick
(Lemma \ref{symmetrization}) to obtain
\be \E_{i_1,\ldots,i_k\sim p^{\ot k}}
\max_{\rho \in \cD_d} \| (\Lambda'' - \Lambda)(\rho)\|_{\cB}
\leq
2  \E_{i_1,\ldots,i_k\sim p^{\ot k}} \E_{\eps_1,\ldots,\eps_k}
\L \| \frac 1 k \sum_{j=1}^k \eps_j Y_{i_j} \frac{\hat X_{i_j}}{p_{i_j}}\R\|_{S_1\ra
  \cB} \label{eq:sym-Lambda}\ee
We will bound this last quantity for any fixed $i_1,\ldots,i_k$.
For $\rho\in B(S_1)$, 
 define $q_j := \langle X_{i_j},\rho\rangle / kp_{i_j}$.  Denote the set of feasible $q$ by $S_{X,\vec i}$ where this notation emphasizes the dependence on both $X$ and $i_1,\ldots,i_k$.  Then $\sum_j |q_j| \leq 1$, each $|q_j| \leq d/k$ and
\be \L \| \underbrace{\frac 1 k \sum_{j=1}^k \eps_j Y_{i_j} \frac{\hat
    X_{i_j}}{p_{i_j}}}_{=:\Lambda'_\eps}
\R\|_{S_1\ra
  \cB} = \max_{q \in S_{X, \vec i}}\L \|\sum_{j=1}^k \eps_j Y_{i_j}
q_j \R\|_{\cB}.
\label{eq:fixed-i-norm}\ee

Now let us fix $\rho$ (or equivalently $q$).   Observe that
$\|Y_{i_j} q_j\|_{\cB}\leq d/k$.  Then \lemref{azuma} implies that
\be \Pr_{\eps_1,\ldots,\eps_k}\L[\L \|\sum_{j=1}^k \eps_j Y_{i_j} q_j
\R\|_{\cB} \geq \delta \R]
\leq e^{s + 2 - \frac{ck\delta^2}{d^2}}.
\label{eq:eps-good}\ee
According to Lemma II.2 of \cite{HLSW04} there exists a net of pure
states $\rho_1,\ldots,\rho_m \in \cD_d$ such that $m\leq 10^{2d}$ and for any pure state
$\rho$, we have $\min_l \|\rho - \rho_l\|_1\leq 1/2$.  Say that
$\eps_1,\ldots,\eps_k$ is a good sequence if $\|\sum_j \eps_j Y_{i_j}
\langle X_{i_j}, \rho_l\rangle / k p_{i_j}\| \leq \delta$ for all
$l\in [m]$.  By Eq. \eq{eps-good} and the union bound the probability that
$\eps_1,\ldots,\eps_k$ is bad (i.e. not good) is $\leq 10^{2d} e^{s + 2 -
  ck\delta^2/d^2}$.   For a bad sequence we still have that
Eq. \eq{fixed-i-norm} is $\leq d$ by the triangle inequality.  For a good
sequence, let $\alpha$ denote Eq. \eq{fixed-i-norm} and let $\beta$ be the
corresponding maximum with $\rho$ restricted to the set
$\{\rho_1,\ldots,\rho_m\}$.  By our assumption that the sequence is
good we have $\beta \leq \delta$.  Observe that
$\alpha = \max_{\rho \in B(S_1)} \|\Lambda'_\eps(\rho)\|_{\cB}$ and by
convexity (and symmetry of the $\|\cdot\|_{\cB}$ norm) this $\max$ is
achieved for $\rho$ a pure state.  Let $\rho_l$ satisfy $\|\rho -
\rho_l\|_1\leq 1/2$.  Then
\be \|\Lambda'_\eps(\rho)\|_{\cB} 
\leq \|\Lambda'_\eps(\rho_l)\|_{\cB}  
+ \|\Lambda'_\eps(\rho-\rho_l)\|_{\cB} 
\leq \beta
+ \alpha \cdot \frac 1 2.\ee

Maximizing the LHS over $\rho$ we obtain $\alpha \leq \beta +
\alpha/2$, or equivalently $\alpha \leq 2\beta\leq 2\delta$.
Thus Eq. \eq{sym-Lambda} is
\be \leq 4 \delta + 2 d 10^{2d} e^{s + 2 - \frac{ck\delta^2}{d^2}}.\ee
Redefining $c$, this is $\leq 5\delta$ when  $k \geq cd^3/\delta^2$.

Since Eq. \eq{sym-Lambda} controls the expectation with respect to
$i_1,\ldots,i_k$, we conclude that for at least half of the
$i_1,\ldots,i_k$, the LHS of Eq. \eq{SB-approx} is $\leq 10\delta$.
Since Eq. \eq{nearly-POVM} holds with high probability ($\geq 1-
d\exp(-cd/8)$) it follows that there exists a sequence of
$i_1,\ldots,i_k$ that simultaneously fulfills both criteria. 
Fix this choice. Finally we choose $\Lambda' = \Lambda''/ (1+\delta)$ so that the normalization
condition on $\sum_j X_j'$ is satisfied. This increases the error by at
most a further factor of $\delta$. We conclude the proof by
redefining $\delta$ to be $11 \delta$.
\end{proof}


\section{Hardness of computing $2 \rightarrow q$ norms} \label{hardness}

In this section we extend the hardness results of \cite{BHKSZ12} (Theorem 9.4, part 2) for estimating the $2 \rightarrow 4$ norm to general $2 \rightarrow q$ norms for even $q\geq 4$. 

The next lemma is an extension from Lemma 9.5 from  \cite{BHKSZ12}.

\begin{lem} \label{lem95aux}
Let $M \in L(\mathbb{C}^d \otimes \mathbb{C}^d)$ satisfy $0 \leq M \leq I$. Assume that either (case $Y$) $h_{\text{Sep}(d,d)}(M) = 1$ or (case $N$) $h_{\text{Sep}(d,d)}(M) \leq 1 - \delta$. Let $k$ be a positive integer and $q \geq 4$ an even positive integer. Then there exists a matrix $A$ of size $d^{4kq} \times d^{2kq}$ such that in case $Y$, $\Vert A \Vert_{2 \rightarrow q} = 1$, and in case $N$, $\Vert A \Vert_{2 \rightarrow q} \leq (1 - \delta/2)^k$. Moreover, $A$ can be constructed efficiently from $M$. 
\end{lem}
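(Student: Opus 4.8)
The plan is to reduce the claim to a constant-gap base construction for $k=1$ and then amplify by a tensor power. Everything rests on the following identity, which is the reason the hypothesis ``$q$ even'' is needed: writing $m=q/2$ and letting $a_1,\dots,a_N$ be the rows of a matrix $A$,
\be \|A\|_{2\ra q}^q = \max_{\|z\|_2=1}\sum_{i}|\langle a_i,z\rangle|^{2m} = \max_{\|z\|_2=1}\langle z^{\ot m}|N|z^{\ot m}\rangle, \qquad N:=\sum_i (a_ia_i^*)^{\ot m}. \ee
The operator $N$ is positive semidefinite and invariant under permutations of the $m$ tensor factors; since a symmetric form is maximized over product states at a symmetric product state, this last quantity equals $h_{\Sep^m}(N)$, the support function of $N$ over $m$-partite product states. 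Thus estimating a $2\ra q$ norm is exactly estimating an $m$-partite separable support value, and the whole problem becomes one of engineering $N$ out of the given bipartite $M$.

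For the base case I would build $A_1$ of size $d^{4q}\times d^{2q}$ so that $N_1=\sum_i(a_ia_i^*)^{\ot m}$ satisfies $h_{\Sep^m}(N_1)=h_{\Sep(d,d)}(M)^m$. The mechanism is that independent copies separabilize automatically: if $M$ is made to act on one $\bbC^d$ factor drawn from one copy of $z$ and a second $\bbC^d$ factor drawn from a \emph{different} copy, then --- because distinct copies of $z^{\ot m}$ are uncorrelated --- the reduced state on that cross pair is always a product $\rho_A\ot\rho_B$ of two marginals of $z$, \emph{even when $z$ is entangled}, so that each such cross term contributes at most $\tr[M(\rho_A\ot\rho_B)]\le h_{\Sep(d,d)}(M)$. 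Arranging $m$ cross pairs cyclically around the $m$ copies, the product vector $z=x\ot y$ built from the optimal product state of $M$ gives the value $h_{\Sep(d,d)}(M)^m$ exactly, which is the matching lower bound. Taking $q$-th roots, $\|A_1\|_{2\ra q}=h_{\Sep(d,d)}(M)^{m/q}=h_{\Sep(d,d)}(M)^{1/2}$; this is $1$ in case $Y$ and at most $\sqrt{1-\delta}\le 1-\delta/2$ in case $N$, which is precisely where the factor $\delta/2$ comes from.

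I would then set $A=A_1^{\ot k}$, which has the stated size $d^{4kq}\times d^{2kq}$ and is clearly constructible in polynomial time from $M$. The lower bound is immediate from product test vectors: feeding $z_0^{\ot k}$ with $z_0$ optimal for $A_1$ gives $\|A\, z_0^{\ot k}\|_q=\|A_1 z_0\|_q^{k}$, hence $\|A\|_{2\ra q}\ge\|A_1\|_{2\ra q}^{k}$. The matching upper bound uses submultiplicativity of the $2\ra q$ norm, $\|B\ot C\|_{2\ra q}\le\|B\|_{2\ra q}\|C\|_{2\ra q}$, which I would prove directly: representing the input as a matrix $Z$ of unit Hilbert--Schmidt norm one has $\|(B\ot C)z\|_q=\|BZC^{T}\|_{\ell_q}$ (entrywise $\ell_q$ norm); bounding the $i$-th row by $\|C\|_{2\ra q}\|Z^{\dag}b_i\|_2$ and then applying Jensen's inequality to $\sum_i\|Z^{\dag}b_i\|_2^{q}=\sum_i\langle b_i|ZZ^{\dag}|b_i\rangle^{q/2}$ (convexity of $t\mapsto t^{q/2}$, using $\tr ZZ^{\dag}=1$) yields $\le\|B\|_{2\ra q}^q\|C\|_{2\ra q}^q$. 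Iterating gives $\|A\|_{2\ra q}\le\|A_1\|_{2\ra q}^{k}$, so $\|A\|_{2\ra q}=1$ in case $Y$ and $\|A\|_{2\ra q}\le(1-\delta/2)^{k}$ in case $N$.

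The delicate step is the base construction, not the amplification. Two things must be reconciled there. First, the operator that most cleanly enforces the separable constraint --- $M$ applied across two \emph{distinct} copies --- is entangled across those copies, and hence is not literally of the form $\sum_i(a_ia_i^*)^{\ot m}$ that an honest $2\ra q$ norm produces; one must realize (a suitable symmetrization of) the cyclic cross-copy operator in this tensor-power form. Second, and this is the real work, one must prove the upper bound $h_{\Sep^m}(N_1)\le h_{\Sep(d,d)}(M)^m$ against \emph{all} inputs $z$, including entangled ones: while the independence argument controls a single cross pair for free, showing that entanglement shared around the cycle of $m$ copies cannot beat the fully product value $h_{\Sep(d,d)}(M)^m$ is a transfer-matrix-type estimate and is the crux of the argument. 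Everything else --- the identity of the first paragraph and the submultiplicative amplification --- is routine once this base estimate is in hand.
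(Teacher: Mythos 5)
Your framing identity and your amplification step are fine: the reduction of $\|A\|_{2\ra q}^q$ (for even $q=2m$) to the maximum of $\langle z^{\ot m}|N|z^{\ot m}\rangle$ for $N=\sum_i(a_ia_i^*)^{\ot m}$ is exactly the one the paper uses, and your direct proof of submultiplicativity of $\|\cdot\|_{2\ra q}$ under tensor products is correct and is indeed how the exponent $k$ arises (the paper inherits this tensoring from Lemma 9.5 of \cite{BHKSZ12}). But the proposal is not a proof, because the base case---which you yourself flag as ``the real work'' and ``the crux of the argument''---is never carried out, and the cyclic cross-copy operator you propose is a problematic starting point on both of the counts you defer. (i) Realizability: you never exhibit $\bigotimes_{j=1}^{m}M_{A_jB_{j+1}}$, or any symmetrization of it, as a nonnegative combination of operators $(bb^*)^{\ot m}$; permutation invariance plus positivity is far from sufficient for membership in that very small cone, and without this form there simply is no matrix $A_1$ whose $2\ra q$ norm equals your $h_{\Sep^m}(N_1)$, so the reduction never gets off the ground. (Relatedly, your statement that a ``symmetric form is maximized over product states at a symmetric product state'' is false for general permutation-invariant PSD operators---the antisymmetric projector is a counterexample---and holds precisely because of the special form $\sum_i(a_ia_i^*)^{\ot m}$, which is what you lack.) (ii) Soundness: the bound $h_{\Sep^m}(N_1)\leq h_{\Sep(d,d)}(M)^m$ against entangled inputs is a perfect parallel-repetition-type claim; your observation that a single cross pair is controlled ``for free'' does not compose, since the expectation of the product of the $m$ cross operators does not factor into a product of expectations, and you give no indication of how the transfer-matrix estimate would go. It is at least as hard as the product-test analysis it is meant to replace, and it is also stronger than necessary: the target $(1-\delta/2)^k$ needs only constant soundness at the base level plus tensoring, not base soundness $(1-\delta)^m$.

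The paper's construction is engineered so that both steps are immediate. It takes $N=(M^{1/2})^{\ot m}\,(P_{A_1\cdots A_m}\ot P_{B_1\cdots B_m})\,(M^{1/2})^{\ot m}$ with $P$ the symmetric-subspace projectors. By Wick's theorem, $P_{A_1\cdots A_m}$ is a positive mixture of $(vv^*)^{\ot m}$, so after regrouping tensor factors $N$ is automatically a positive combination of $(bb^*)^{\ot m}$ with $b=M^{1/2}(v\ot w)$---this is your missing step (i). For soundness, the crude inequality $P_{A_1\cdots A_m}\leq P_{A_1A_2}\ot I$ (together with $M\leq I$) collapses the $m$-copy operator to the two-copy product test of \cite{HM10}, whose analysis (Lemma 9.6 of \cite{BHKSZ12}) gives $h_{\Sep^{q/2}(d^2)}(N)\leq 1-\delta/2$ in case $N$---this is your missing step (ii); in particular the factor $\delta/2$ comes from the product-test analysis, not from $\sqrt{1-\delta}\leq 1-\delta/2$ as in your accounting. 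To salvage your route you would need an operator that simultaneously lies in the cone generated by $m$-th tensor powers of rank-one projectors and admits a soundness proof; the symmetric-projector sandwich is precisely such an object, and nothing in your proposal supplies an alternative.
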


\begin{proof}
Consider the following operator
\begin{equation}
N := (M_{A_1B_1}^{1/2} \otimes \ldots \otimes M_{ A_{q/2} B_{q/2} }^{1/2}) P_{A_1, \ldots, A_{q/2}} \otimes P_{B_1, \ldots, B_{q/2}} (M_{A_1B_1}^{1/2} \otimes \ldots \otimes M_{ A_{q/2} B_{q/2} }^{1/2}) ,
\end{equation}
with $P_{A_1, \ldots, A_{q/2}}$ the projector onto the symmetric subspace over $A_1, ..., A_{q/2}$. We will first relate $h_{ \text{Sep}^{q/2}(d^2) } (N)$ to $h_{\text{Sep}(d,d)}(M)$, and then relate $h_{\text{Sep}^{q/2} (d^2) } (N)$ to $\Vert A \Vert_{2 \rightarrow q}$ for a matrix $A$ of size $d^{4kq} \times d^{2kq}$.

First we show that in case $Y$, $h_{\Sep^{q/2}(d^2)}(N) =1$. Indeed since there are unit vectors $x, y \in \mathbb{C}^d$ satisfying $M_{AB} (x \otimes y) = x \otimes y$, we have
\begin{eqnarray}
h_{\text{Sep}^{q/2}(d^2)}(N) &=& \max_{v_1, \ldots, v_{q/2} \in \mathbb{C}^{d^2}}   (v_1 \otimes \ldots \otimes v_{q/2})^*   N  (v_1 \otimes \ldots \otimes v_{q/2}) \nonumber \\ 
&\geq&    (x^{\otimes q/2} \otimes y^{\otimes q/2})^*   N  (x^{\otimes q/2} \otimes y^{\otimes q/2})     \nonumber \\ 
&=&  (x^{\otimes q/2} \otimes y^{\otimes q/2})^*  P_{A_1, \ldots, A_{q/2}} \otimes P_{B_1, \ldots, B_{q/2}}  (x^{\otimes q/2} \otimes y^{\otimes q/2})  = 1 
\nonumber 
\end{eqnarray}

In case $N$ we show that $h_{\Sep^{q/2}(d^2)}(N) \leq 1 - \delta/2$. Note that 
\be P_{A_1, \ldots, A_{q/2}} \leq P_{A_1A_2} \otimes I_{A_3 \ldots A_{q/2}}
\label{eq:wasteful}.\ee
Then  
\begin{eqnarray}
h_{\Sep^{q/2}(d^2)}(N) &=&  \max_{v_1, \ldots, v_{q/2} \in \mathbb{C}^{d^2}}   (v_1 \otimes \ldots \otimes v_{q/2})^*  N  (v_1 \otimes \ldots \otimes v_{q/2}) \nonumber \\ 
&\leq&   \max_{v_1, v_{2} \in \mathbb{C}^{d^2}}   (v_1 \otimes v_{2})^*  (M_{A_1B_1}^{1/2} \otimes M_{A_2B_2}^{1/2})P_{A_1A_2} \otimes P_{B_1B_2} (M_{A_1B_1}^{1/2} \otimes M_{A_2B_2}^{1/2})   (v_1 \otimes v_{2})  \nonumber \\ 
&\leq& 1-\delta/2,
\end{eqnarray}
where the last inequality follows from Lemma 9.6 of  \cite{BHKSZ12}. 

To construct a matrix $A$ of size $d^{4kq} \times d^{2kq}$ s.t. $\Vert A \Vert_{2 \rightarrow q} = h_{\Sep^{q/2}(d^2)}(N)$ we follow the proof of Lemma 9.5 of  \cite{BHKSZ12}, the only difference being that we apply Wick's theorem to $P_{A_1, \ldots A_{q/2}}$, i.e. there is a measure $\mu$ over unit vectors s.t. 
\be
P_{A_1, \ldots, A_{q/2}} =  \binom{d + q/2 -1}{q/2} \int \mu(dv) (vv^*)^{\otimes q/2}.
\ee
\end{proof}

The basic idea of the Lemma is to use the product test of \cite{HM10} to force $v_1,\ldots,v_{q/2}$ to be product states. Our proof can be summarized as saying that $q/2$ copies can enforce this more effectively than 2 copies (assuming $q/2\geq 2$), and therefore we obtain soundness at least as sharp as in \cite{BHKSZ12}.  This analysis may be wasteful, since using more copies should {\em improve} the effectiveness of the product test.

The main result of this section is the following analogue of Theorem 9.4, part 2, of \cite{BHKSZ12}:

\begin{thm}
Let $\phi$ be a $3$-SAT instance with $n$ variables and $O(n)$ clauses and $q \geq 4$ an even integer. Determining whether $\phi$ is satisfiable can be reduced in polynomial time to determining whether $\Vert A \Vert_{2 \rightarrow q} \geq C$ or $\Vert A \Vert_{2 \rightarrow q} \leq c$ where $0 \leq c < C$ and $A$ is an $m \times m$ matrix, where $m = \exp(q \sqrt{n} \polylog(n) \log(C/c))$.  
\end{thm}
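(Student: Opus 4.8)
The plan is to compose a known hardness reduction for $h_{\Sep}$ with Lemma \ref{lem95aux}, in exact parallel to the structure of Theorem 9.4 of \cite{BHKSZ12} but using our more general lemma in place of their Lemma 9.5. The only genuinely new ingredient is Lemma \ref{lem95aux}, which upgrades the reduction from $q=4$ to arbitrary even $q\geq 4$; everything else is gap amplification and dimension counting.

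First I would invoke the hardness reduction for the separability problem underlying \cite{HM13} (the same starting point used by \cite{BHKSZ12}), which is built from the $\QMA(2)$-type protocol verifying $3$-SAT with two unentangled proofs of $\sqrt{n}\,\polylog(n)$ qubits. Concretely, from a $3$-SAT instance $\phi$ with $n$ variables and $O(n)$ clauses one constructs in polynomial time a measurement operator $M\in L(\mathbb{C}^d\otimes\mathbb{C}^d)$ with $0\leq M\leq I$ and $\log d = \sqrt{n}\,\polylog(n)$, having perfect completeness and constant soundness: if $\phi$ is satisfiable then $h_{\Sep(d,d)}(M)=1$ (case $Y$), while if $\phi$ is unsatisfiable then $h_{\Sep(d,d)}(M)\leq 1-\delta$ (case $N$) for some absolute constant $\delta>0$.

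Next I would feed this $M$ into Lemma \ref{lem95aux} with a repetition parameter $k$ to be fixed below, obtaining in polynomial time a matrix $A_0$ of size $d^{4kq}\times d^{2kq}$ with $\|A_0\|_{2\ra q}=1$ in case $Y$ and $\|A_0\|_{2\ra q}\leq (1-\delta/2)^k$ in case $N$. To realize arbitrary thresholds $0\leq c<C$, set $A := C\cdot A_0$ and pad it with zero columns to make it square of size $m:=d^{4kq}$; scalar multiplication and zero-padding affect the $2\ra q$ norm in the obvious way, so $\|A\|_{2\ra q}=C$ in case $Y$ and $\|A\|_{2\ra q}\leq C(1-\delta/2)^k$ in case $N$. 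Choosing
\[ k = \left\lceil \frac{\log(C/c)}{\log\big(1/(1-\delta/2)\big)} \right\rceil = \Theta(\log(C/c)) \]
(using that $\delta$ is an absolute constant) guarantees $C(1-\delta/2)^k\leq c$. Hence distinguishing $\|A\|_{2\ra q}\geq C$ from $\|A\|_{2\ra q}\leq c$ decides satisfiability of $\phi$. Tracking the size parameter, since $m=d^{4kq}$ we get $\log m = 4kq\log d = O\big(q\cdot\log(C/c)\cdot\sqrt{n}\,\polylog(n)\big)$, i.e. $m=\exp\big(q\sqrt{n}\,\polylog(n)\log(C/c)\big)$, and the whole construction is polynomial in $m$.

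Since Lemma \ref{lem95aux} (which absorbs the product-test/Wick's-theorem argument) is already in hand, the theorem itself is essentially a composition, and I expect no deep obstacle at this stage. The one point that must be gotten right is the input reduction: it must supply a genuine \emph{measurement} ($0\leq M\leq I$) with \emph{perfect} completeness in case $Y$ — so that Lemma \ref{lem95aux} can produce the clean $\|A\|_{2\ra q}=1$ — together with a constant soundness gap and the $\log d = \tilde O(\sqrt{n})$ dimension bound. Granting this (which is exactly the input used by \cite{BHKSZ12}), the remainder is the amplification and counting above.
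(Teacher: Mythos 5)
Your proposal is correct and takes essentially the same route as the paper: the paper likewise invokes the hardness reduction of Corollary 14 of \cite{HM10} (giving $0 \leq M \leq I$ with perfect completeness, constant soundness, and $d = \exp(\sqrt{n}\,\polylog(n))$) and then applies Lemma \ref{lem95aux}. The only difference is that the paper leaves implicit the bookkeeping you spell out, namely the rescaling by $C$, the zero-padding to a square matrix, and the choice $k = \Theta(\log(C/c))$.
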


This gives nontrivial hardness for super-constant $q$, in fact up to $\tilde O(\sqrt{\log d})$, but not yet all the way up to $O(\log d)$, where multiplicative approximations are known to be easy.

\begin{proof}
Corollary 14  of [HM10] gives a reduction from determining satisfiability of $\phi$ to distinguishing between $h_{\text{Sep}(d,d)}(M) = 1$ and $h_{\text{Sep}(d,d)}(M) \leq 1/2$, with $0 \leq M \leq I$ that can be constructed in time $\poly(d)$ from $\phi$ with $d = \exp(\sqrt{n} \polylog(n))$. Applying Lemma \ref{lem95aux} gives the result.
\end{proof}

\bibliographystyle{hyperalpha}
\bibliography{refs}

\newcommand{\etalchar}[1]{$^{#1}$}
\begin{thebibliography}{DKLP06b}

\bibitem[ABS10]{AroraBS10}
Sanjeev Arora, Boaz Barak, and David Steurer.
\newblock Subexponential algorithms for unique games and related problems.
\newblock In {\em FOCS}, pages 563--572, 2010.

\bibitem[AIM14]{AIM14}
S.~Aaronson, R.~Impagliazzo, and D.~Moshkovitz.
\newblock {AM} with multiple {Merlins}.
\newblock In {\em Computational Complexity (CCC), 2014 IEEE 29th Conference
  on}, pages 44--55, June 2014,
  \href{http://arxiv.org/abs/1401.6848}{{\ttfamily arXiv:1401.6848}}.

\bibitem[ALSV13a]{AlonLSV13}
Noga Alon, Troy Lee, Adi Shraibman, and Santosh Vempala.
\newblock The approximate rank of a matrix and its algorithmic applications:
  Approximate rank.
\newblock In {\em Proceedings of the 45th Annual ACM Symposium on Symposium on
  Theory of Computing}, STOC '13, pages 675--684, 2013.

\bibitem[ALSV13b]{Alon2013}
Noga Alon, Troy Lee, Adi Shraibman, and Santosh Vempala.
\newblock The approximate rank of a matrix and its algorithmic applications:
  Approximate rank.
\newblock In {\em Proceedings of the Forty-fifth Annual ACM Symposium on Theory
  of Computing}, STOC '13, pages 675--684. ACM, 2013.

\bibitem[BaCY11]{BCY10b}
Fernando~G.S.L. Brand\~{a}o, Matthias Christandl, and Jon Yard.
\newblock A quasipolynomial-time algorithm for the quantum separability
  problem.
\newblock In {\em Proceedings of the 43rd annual ACM symposium on Theory of
  computing}, STOC '11, pages 343--352, 2011,
  \href{http://arxiv.org/abs/1011.2751}{{\ttfamily arXiv:1011.2751}}.

\bibitem[BBH{\etalchar{+}}12]{BHKSZ12}
Boaz Barak, Fernando~G.S.L. {Brand\~{a}o}, Aram~W. Harrow, Jonathan Kelner,
  David Steurer, and Yuan Zhou.
\newblock Hypercontractivity, sum-of-squares proofs, and their applications.
\newblock In {\em STOC '12}, STOC '12, pages 307--326, 2012,
  \href{http://arxiv.org/abs/1205.4484}{{\ttfamily arXiv:1205.4484}}.

\bibitem[BC11]{BC11}
F.G.S.L. {Brand\~ao} and M.~Christandl.
\newblock Detection of multiparticle entanglement: Quantifying the search for
  symmetric extensions, 2011,  \href{http://arxiv.org/abs/1105.5720}{{\ttfamily
  arXiv:1105.5720}}.

\bibitem[BCL94]{BCL94}
Keith Ball, Eric~A. Carlen, and Elliott~H. Lieb.
\newblock Sharp uniform convexity and smoothness inequalities for trace norms.
\newblock {\em Inventiones mathematicae}, 115(1):463--482, 1994.

\bibitem[BCY11]{BCY10}
F.~G. S.~L. Brand{\~{a}}o, M.~Christandl, and J.~Yard.
\newblock Faithful squashed entanglement.
\newblock {\em Commun. Math. Phys.}, 306(3):805--830, 2011,
  \href{http://arxiv.org/abs/1010.1750}{{\ttfamily arXiv:1010.1750}}.

\bibitem[BH13]{BH-local}
Fernando G. S.~L. {Brand\~ao} and Aram~W. Harrow.
\newblock Quantum de {F}inetti theorems under local measurements with
  applications.
\newblock In {\em Proceedings of the 45th annual ACM Symposium on theory of
  computing}, STOC '13, pages 861--870, 2013,
  \href{http://arxiv.org/abs/1210.6367}{{\ttfamily arXiv:1210.6367}}.

\bibitem[BKS13]{BKS13}
Boaz Barak, Jonathan Kelner, and David Steurer.
\newblock Rounding sum-of-squares relaxations.
\newblock In {\em STOC '14}, STOC '14, 2013,
  \href{http://arxiv.org/abs/1312.6652}{{\ttfamily arXiv:1312.6652}}.

\bibitem[BKW14]{BKW14}
M.~Braverman, Y.~K. Ko, and Omri Weinstein.
\newblock Approximating the best {Nash} equilibrium in $n^{o(log(n))}$-time
  breaks the {Exponential Time Hypothesis}, 2014.
\newblock ECCC TR14-092.

\bibitem[BRS11]{BarakRS11}
Boaz Barak, Prasad Raghavendra, and David Steurer.
\newblock Rounding semidefinite programming hierarchies via global correlation.
\newblock In {\em FOCS}, 2011,
  \href{http://arxiv.org/abs/1104.4680}{{\ttfamily arXiv:1104.4680}}.

\bibitem[BV11]{BhaskaraV11}
Aditya Bhaskara and Aravindan Vijayaraghavan.
\newblock Approximating matrix p-norms.
\newblock In {\em Proceedings of the twenty-second annual ACM-SIAM symposium on
  Discrete Algorithms}, pages 497--511. SIAM, 2011,
  \href{http://arxiv.org/abs/1001.2613}{{\ttfamily arXiv:1001.2613}}.

\bibitem[DF80]{DF80}
P.~Diaconis and D.~Freedman.
\newblock Finite exchangeable sequences.
\newblock {\em Annals of Probability}, 8:745--764, 1980.

\bibitem[DKLP06a]{KlerkLP06}
Etienne De~Klerk, Monique Laurent, and Pablo~A Parrilo.
\newblock A {PTAS} for the minimization of polynomials of fixed degree over the
  simplex.
\newblock {\em Theoretical Computer Science}, 361(2):210--225, 2006.

\bibitem[DKLP06b]{de2006ptas}
Etienne De~Klerk, Monique Laurent, and Pablo~A Parrilo.
\newblock A {PTAS} for the minimization of polynomials of fixed degree over the
  simplex.
\newblock {\em Theoretical Computer Science}, 361(2):210--225, 2006.

\bibitem[DPS04]{DohertyPS04}
Andrew~C. Doherty, Pablo~A. Parrilo, and Federico~M. Spedalieri.
\newblock Complete family of separability criteria.
\newblock {\em Phys. Rev. A}, 69:022308, Feb 2004,
  \href{http://arxiv.org/abs/quant-ph/0308032}{{\ttfamily
  arXiv:quant-ph/0308032}}.

\bibitem[GLS93]{GLS}
M.~{Gr\"otschel}, L.~{Lov\'asz}, and A.~Schrijver.
\newblock {\em Geometric algorithms and combinatorial optimization}.
\newblock Springer-Verlag, 1993.

\bibitem[Gur03]{Gurvits03}
Leonid Gurvits.
\newblock Classical deterministic complexity of {E}dmonds' problem and quantum
  entanglement.
\newblock In {\em STOC '03}, pages 10--19, 2003,
  \href{http://arxiv.org/abs/{quant-ph/0303055}}{{\ttfamily
  arXiv:{quant-ph/0303055}}}.

\bibitem[Har15]{Har-nash}
Aram~W. Harrow.
\newblock Finding approximate {Nash} equilibria using {LP} hierarchies, 2015.
\newblock in preparation.

\bibitem[HLSW04]{HLSW04}
P.~Hayden, D.~W. Leung, P.~W. Shor, and A.~J. Winter.
\newblock Randomizing quantum states: Constructions and applications.
\newblock {\em Commun. Math. Phys.}, 250:371--391, 2004,
  \href{http://arxiv.org/abs/quant-ph/0307104}{{\ttfamily
  arXiv:quant-ph/0307104}}.

\bibitem[HM10]{HM10}
Aram~W. Harrow and Ashley Montanaro.
\newblock An efficient test for product states, with applications to quantum
  {Merlin-Arthur} games.
\newblock In {\em FOCS '10}, pages 633--642, 2010,
  \href{http://arxiv.org/abs/1001.0017}{{\ttfamily arXiv:1001.0017}}.

\bibitem[HM13]{HM13}
Aram~W. Harrow and Ashley Montanaro.
\newblock Testing product states, quantum {Merlin-Arthur} games and tensor
  optimization.
\newblock {\em J. ACM}, 60(1):3:1--3:43, February 2013,
  \href{http://arxiv.org/abs/1001.0017}{{\ttfamily arXiv:1001.0017}}.

\bibitem[HSR04]{HSR04}
M.~Horodecki, P.W. Shor, and M.B. Ruskai.
\newblock General entanglement breaking channels.
\newblock {\em Rev. Math. Phys}, 15(3):629--641, 2004,
  \href{http://arxiv.org/abs/quant-ph/0012127}{{\ttfamily
  arXiv:quant-ph/0012127}}.

\bibitem[KPT00]{Kato00}
Mikio Kato, Lars-Erik. Persson, and Yasuji Takahashi.
\newblock Clarkson type inequalities and their relations to the concepts of
  type and cotype.
\newblock {\em Collectanea Mathematica}, 51(3):327--346, 2000.

\bibitem[LMM03]{LMM03}
Richard~J. Lipton, Evangelos Markakis, and Aranyak Mehta.
\newblock Playing large games using simple strategies.
\newblock In {\em Proceedings of the 4th ACM Conference on Electronic
  Commerce}, EC '03, pages 36--41. ACM, 2003.

\bibitem[LRS{\etalchar{+}}10]{LeeRRST10}
Jason~D Lee, Ben Recht, Ruslan~R Salakhutdinov, Nathan Srebro, and Joel Tropp.
\newblock Practical large-scale optimization for max-norm regularization.
\newblock In J.D. Lafferty, C.K.I. Williams, J.~Shawe-Taylor, R.S. Zemel, and
  A.~Culotta, editors, {\em Advances in Neural Information Processing Systems
  23}, pages 1297--1305. Curran Associates, Inc., 2010.

\bibitem[LS07]{LinialS07}
Nati Linial and Adi Shraibman.
\newblock Lower bounds in communication complexity based on factorization
  norms.
\newblock In {\em In Proc. of the 39th Symposium on Theory of Computing (STOC},
  pages 699--708, 2007.

\bibitem[LS09]{LinialS09}
Nati Linial and Adi Shraibman.
\newblock Lower bounds in communication complexity based on factorization
  norms.
\newblock {\em Random Structures \& Algorithms}, 34(3):368--394, 2009.

\bibitem[LS14]{LiS14}
Ke~Li and Graeme Smith.
\newblock Quantum de {F}inetti theorem measured with fully one-way {LOCC} norm,
  2014,  \href{http://arxiv.org/abs/1408.6829}{{\ttfamily arXiv:1408.6829}}.

\bibitem[LT91]{LT91}
Michel Ledoux and Michel Talagrand.
\newblock {\em Probability in {B}anach spaces. Isoperimetry and processes}.
\newblock Springer-Verlag, 1991.

\bibitem[LW14]{LW14}
Ke~Li and Andreas Winter.
\newblock Relative entropy and squashed entanglement.
\newblock {\em Commun. Math. Phys.}, 326(1):63--80, 2014,
  \href{http://arxiv.org/abs/1210.3181}{{\ttfamily arXiv:1210.3181}}.

\bibitem[MWW09]{MWW09}
William Matthews, Stephanie Wehner, and Andreas Winter.
\newblock Distinguishability of quantum states under restricted families of
  measurements with an application to quantum data hiding.
\newblock {\em Commun. Math. Phys.}, 291(3):813--843, 2009,
  \href{http://arxiv.org/abs/0810.2327}{{\ttfamily arXiv:0810.2327}}.

\bibitem[Nao12]{Naor}
Assaf Naor.
\newblock On the {Banach}-space-valued {Azuma} inequality and small-set
  isoperimetry of {Alon-Roichman} graphs.
\newblock {\em Combinatorics, Probability and Computing}, 21:623--634, 7 2012,
  \href{http://arxiv.org/abs/1009.5695}{{\ttfamily arXiv:1009.5695}}.

\bibitem[Pie07]{Pietsch07}
A.~Pietsch.
\newblock {\em History of Banach Spaces and Linear Operators}.
\newblock Birkh{\"a}user, 2007.

\bibitem[Ste05]{Steinberg05}
Daureen Steinberg.
\newblock Computation of matrix norms with applications to robust optimization.
\newblock Master's thesis, Technion, 2005.
\newblock Available on A. Nemirovski's website
  \url{http://www2.isye.gatech.edu/~nemirovs/}.

\bibitem[SW12]{ShiW12}
Yaoyun Shi and Xiaodi Wu.
\newblock Epsilon-net method for optimizations over separable states.
\newblock In {\em Proceedings of the 39th International Colloquium Conference
  on Automata, Languages, and Programming - Volume Part I}, ICALP'12, pages
  798--809, Berlin, Heidelberg, 2012. Springer-Verlag,
  \href{http://arxiv.org/abs/1112.0808}{{\ttfamily arXiv:1112.0808}}.

\bibitem[TJ74]{Tomczak1974}
Nicole Tomczak-Jaegermann.
\newblock The moduli of smoothness and convexity and the {R}ademacher averages
  of the trace classes $s_\{p\}$ $(1\leq p<\infty)$.
\newblock {\em Studia Mathematica}, 50(2):163--182, 1974.

\bibitem[Tro10]{Tropp-LD}
J.~A. Tropp.
\newblock User-friendly tail bounds for sums of random matrices, 2010,
  \href{http://arxiv.org/abs/1004.4389}{{\ttfamily arXiv:1004.4389}}.

\bibitem[Yan06]{Yang06}
D.~Yang.
\newblock A simple proof of monogamy of entanglement.
\newblock {\em Phys. Lett.}, 360(1):249, 2006,
  \href{http://arxiv.org/abs/quant-ph/0604168}{{\ttfamily
  arXiv:quant-ph/0604168}}.

\end{thebibliography}
\end{document}